\title{Fine-grained Poisoning Attack to Local Differential Privacy Protocols \\for Mean and Variance Estimation}
\author{}
\date{}
\DeclarePairedDelimiter\floor{\lfloor}{\rfloor}
\newtheorem{definition}{Definition}
\newtheorem{theorem}{Theorem}
\newtheorem{lemma}{Lemma}
\author{
{\rm Xiaoguang Li\textsuperscript{1,2}\footnotemark[1]}\ \ \
{\rm Ninghui Li\textsuperscript{2}}\ \ \
{\rm Wenhai Sun\textsuperscript{2}}\ \ \
{\rm  Neil Zhenqiang Gong\textsuperscript{3}}\ \ \
{\rm Hui Li\textsuperscript{1}}\ \ \
\\
\\
\textsuperscript{1}\textit{Xidian University},
\textsuperscript{2}\textit{Purdue University},
\textsuperscript{3}\textit{Duke University}
}
\begin{document}

\maketitle
\footnotetext[1]{This work was done when the author was at Purdue university.}

\begin{abstract}
Although local differential privacy (LDP) protects individual users' data from inference by an untrusted data curator, recent studies show that  an attacker can launch a data poisoning attack from the user side to inject carefully-crafted bogus data into the LDP protocols in order to maximally skew the final estimate by the data curator. 

In this work, we further advance this knowledge by proposing a new \textit{fine-grained} attack, which allows the attacker to fine-tune and simultaneously manipulate mean and variance estimations that are popular analytical tasks for many real-world applications. To accomplish this goal, the attack leverages the characteristics of LDP to inject fake data into the output domain of the local LDP instance. We call our attack the output poisoning attack (\textsf{OPA}). We observe a security-privacy consistency where a small privacy loss enhances the security of LDP, which contradicts the known security-privacy trade-off from prior work. We further study the consistency and reveal a more holistic view of the threat landscape of data poisoning attacks on LDP. We comprehensively evaluate our attack against a baseline attack that intuitively provides false input to LDP. The experimental results show that \textsf{OPA} outperforms the baseline on three real-world datasets. We also propose a novel defense method that can recover the result accuracy from polluted data collection and offer insight into the secure LDP design.
\end{abstract}

\vspace{5pt}
\section{Introduction}
Local differential privacy (LDP) \cite{duchi2013local}, a variant of differential privacy \cite{dwork2014algorithmic} in a distributed environment, protects individual user data against an untrusted data collector regardless of the adversary’s background knowledge. Numerous LDP protocols have been proposed for various statistical tasks such as frequency \cite{wang2017locally, wang2019answering, erlingsson2014rappor, wang2020locally, warner1965randomized}, mean/variance \cite{duchi2018minimax, wang2019collecting} and distribution \cite{murakami2019utility, li2020estimating}. LDP has also been integrated into many real-world applications as a \textit{de facto} privacy-preserving data collection tool. For example, Google deployed LDP in Chrome browser to collect users' homepages \cite{erlingsson2014rappor}; Microsoft implemented LDP in Windows 10 to analyze application usage statistics of customers \cite{ding2017collecting}.

Recently, Cao \textit{et al.} \cite{cao2019data} and Cheu \textit{et al.} \cite{cheu2019manipulation} independently studied the security of LDP under \textit{data poisoning attacks} (or called \textit{manipulation attacks} in \cite{cheu2019manipulation}). They found that LDP randomization is very sensitive to data manipulation such that malicious users could send carefully crafted false data to effectively skew the collector's statistical estimate. In particular, the current attacks aims to ``push'' the LDP estimate away from the ground truth as far as possible. In \cite{cheu2019manipulation}, the attacker can inject false data through a group of compromised users to degrade the overall LDP performance. The data poisoning attacks in \cite{cao2019data, wu2021poisoning} promote the items of interest (e.g., in a recommender system) by maximizing the associated statistical estimates, such as frequency and key-value data.

\begin{figure}[!t]
    \centering
    \includegraphics[scale = 0.37]{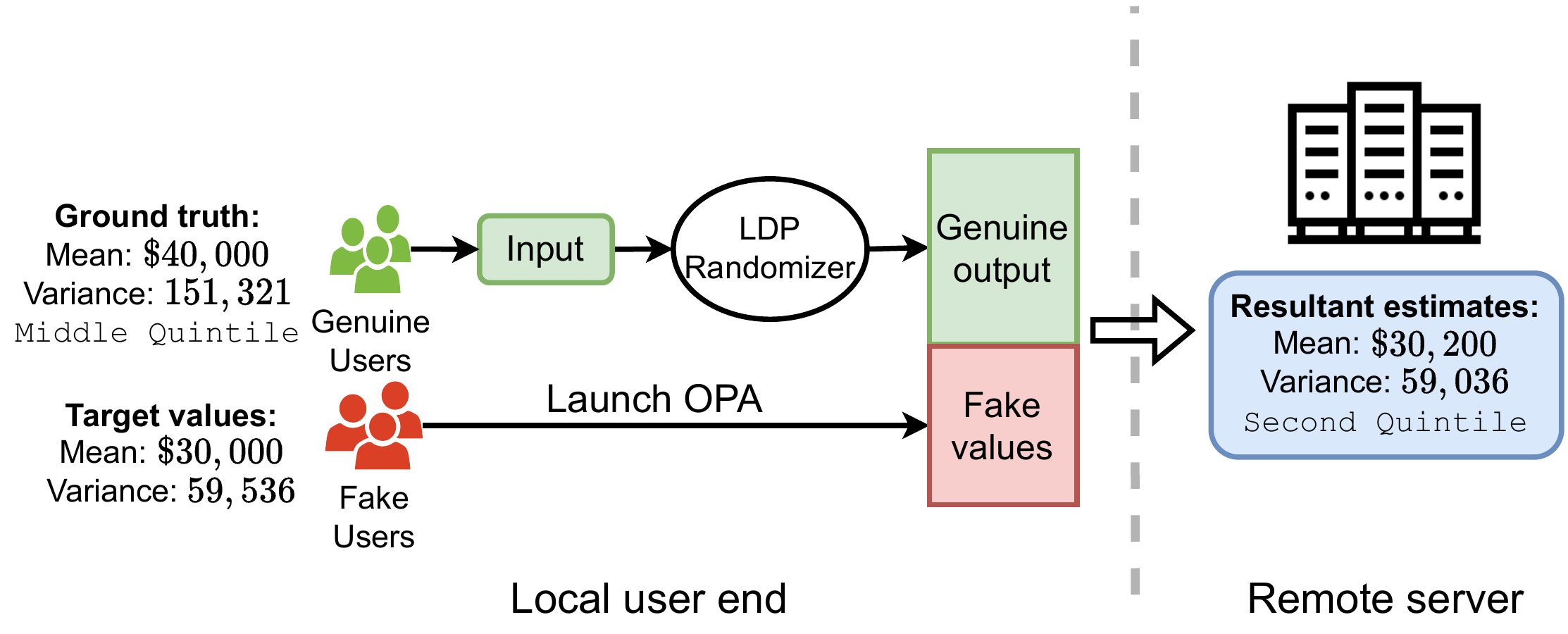}
    \caption{Illustration of our fine-grained data poisoning attacks on LDP-based mean/variance estimation.}
    \label{attack_illustration}
    \vspace{-13pt}
\end{figure}

In this work, we advance the knowledge by considering an attacker who aims to not only manipulate the statistics but also set the estimates to an intended value.  We call it a \textit{fine-grained} data poisoning attack.
We focus on mean and variance estimation because they are crucial to many data analytical applications in practice. For example, a company conducts a market survey to identify the target market segments based on their customers' income average (the mean) and inequality (the variance) \cite{scharfenaker2019labor} as shown in Figure \ref{attack_illustration}. From the survey, the company estimates the mean and variance of the income so as to make informed decisions on the related services. In order to encourage participation, LDP can be adopted to perturb an individual customer's income value before being sent to the company. Meanwhile, a rival company may control a group of fake responders to launch the \textit{fine-grained} data poisoning attack by submitting erroneous information in hopes of bringing the final estimates as close to their intended values as possible. Consequently, the result deviates from reality and leads to a deceptive conclusion, e.g., the customers in the middle-income quintile are mistakenly believed to come from a lower quintile \cite{fontenot2018income}.

We propose an output poisoning attack (\textsf{OPA}) for the \textit{fine-grained} manipulation goal on the local user side against two state-of-the-art LDP protocols for mean and variance, i.e., Stochastic Rounding (SR) \cite{duchi2018minimax} and Piecewise Mechanism (PM) \cite{wang2019collecting}. The attack is illustrated in Figure~\ref{attack_illustration}. Consistent with prior work, we assume that the attacker can control a group of fake users by purchasing accounts from dark markets \cite{cao2019data}, and has access to the LDP implementation details. As a result, the attacker can bypass the LDP perturbation and generate bogus values in the output domain of the local LDP instance, which will be sent to the server for final estimation. To demonstrate the effectiveness of \textsf{OPA}, we compare it with a baseline attack -- we call it an input poisoning attack (\textsf{IPA}), which represents a straightforward data manipulation by providing fake input to the LDP perturbation without leveraging the LDP protocol.

The main challenge for the attacker here is to manipulate two correlated statistical estimates -- mean and variance at the same time through a single LDP query \cite{li2020estimating}. To address this challenge, we formulate the attack as a simultaneous equation-solving problem and coordinate the generation of the poisonous data across the controlled users. To control the LDP estimate at a finer level, the attack also depends on two observations in reality. First, companies and governments, for commercial, public interest or as required by regulations, need to periodically collect user information to learn the \textit{status quo} and then publish the related statistical results \cite{erlingsson2014rappor,microsoft_statistic,jingdong_statistic,2020income}. Second, those historical results regarding the same entity tend to be stable over a short period of time \cite{twitter_statistic2,2020income,fontenot2018income}. As a result, the attacker can leverage the data transparency and the predictable information changes to enable fine-grained data manipulation. Specifically, we assume that the attacker can acquire related background information about genuine users from recent public statistical reports or by compromising a small number of users (see Threat model in Section \ref{threat_model}).

We systematically study the proposed attack both theoretically and empirically. We first analyze the sufficient conditions to launch the attack and further discuss the lower bound on the required number of fake users given the target mean and variance. The results show that \textsf{OPA} needs fewer fake users than the baseline to achieve the same target values.  We are interested in the relationship between various attack parameters and performance, as well as the associated implications. Thus, we also study the MSE between the target value and the final estimate. The results show that \textsf{OPA} has a smaller MSE because direct manipulation of the local LDP output will ignore the effect of perturbation and give the attacker a significant advantage in producing an intended result.

In the literature, a security-privacy trade-off for LDP protocols was revealed: a small $\epsilon$ (strong privacy guarantee) leads to a less secure LDP protocol against prior data poisoning attacks \cite{cheu2019manipulation, cao2019data, wu2021poisoning}. However, we in this work have an opposite observation that weak privacy protection with a large $\epsilon$ is vulnerable to our fine-grained attack. We call this \textit{security-privacy consistency} for LDP protocols. We analyze the two assertions and show that they surprisingly are both valid and that they together provide a holistic understanding of the threat landscape. This conclusion is disturbing since it complicates the already elusive reasoning and selection of the privacy budget in LDP and makes designing a secure LDP more difficult (see Section \ref{security_privacy_consistency}). To mitigate our attack, we also propose a clustering-based method for bogus data tolerance and discuss the relevant defenses in Section \ref{mitigation}. Our main contributions are:
\vspace{-5pt}
\begin{itemize}[leftmargin=*]
\item We are the first to study the \textit{fine-grained data poisoning attack} against the state-of-the-art LDP protocols for mean and variance estimation.

\vspace{-5pt}
\item We propose the \textit{output poisoning attack} to precisely control the statistical estimates to the intended values. By the comparison  with an  LDP-independent baseline attack, i.e., \textit{input poisoning attack}, we show that \textsf{OPA} can achieve better attack performance by taking advantage of LDP. 

\vspace{-5pt}
\item We theoretically analyze the sufficient conditions to launch the proposed attacks, study the attack errors, and discuss the factors that impact the attack effectiveness.

\vspace{-5pt}
\item We discover a fundamental security-privacy consistency in our attacks, which is at odds with the prior finding of the security-privacy trade-off. We provide an in-depth analysis and discussions to reveal the cause of the difference.

\vspace{-5pt}
\item We empirically evaluate our attacks on three real-world datasets. The results show that given the target values, our attacks can effectively manipulate the mean and variance with small errors. We also develop and evaluate a defense method, and provide insights into secure LDP design and other mitigation methods.
\end{itemize}
\section{Background}

\subsection{Local Differential Privacy}
In the local setting of differential privacy, it is assumed that there is no trusted third party. In this paper, we consider there are $n$ users and one remote server. Each user possesses a data value $x \in \mathcal{D}$, and the server wants to estimate the mean and variance of values from all local users. To protect privacy, each user randomly perturbs his/her $x$ using an algorithm $\Psi(x): \mathcal{D} \rightarrow \widehat{\mathcal{D}}$, where $\widehat{\mathcal{D}}$ is the output domain of $\Psi$, and sends $x' = \Psi(x)$ to the server.

\begin{definition}[$\epsilon$-Local Differential Privacy ($\epsilon$-LDP) \cite{duchi2013local}]
    An algorithm $\Psi(\cdot): \mathcal{D} \rightarrow \widehat{\mathcal{D}}$ satisfies $\epsilon$-LDP ($\epsilon > 0$) if and only if for any input $x_1, x_2 \in \mathcal{D}$, the following inequality holds:
    \begin{align*}
        \forall T \in \widehat{\mathcal{D}}, \quad \Pr[\Psi(x_1) = T] \leq e^\epsilon \Pr[\Psi(x_2) = T].
    \end{align*}
\end{definition}
\vspace{-5pt}

Intuitively, an attacker cannot deduce with high confidence whether the input is $x_1$ or $x_2$ given the output of an LDP mechanism. The offered privacy is controlled by $\epsilon$, i.e., small (large) $\epsilon$ results in a strong (weak) privacy guarantee and a low (high) data utility. Since the user only reports the privatized result $\Psi(x)$ instead of the original value $x$, even if the server is malicious, the users' privacy is protected. In our attack, \textit{the attacker can manipulate a group of fake users in order to change the estimates of mean/variance on the server} (See Section \ref{threat_model} for the detailed threat model).

\vspace{-5pt}
\subsection{Mean and Variance Estimation with LDP}
We introduce two widely-used LDP mechanisms for mean and variance estimation, \textit{Stochastic Rounding (SR)} \cite{duchi2018minimax} and \textit{Piecewise Mechanism (PM)} \cite{wang2019collecting}. Note that they were originally developed for mean estimation only and were subsequently adapted to support variance estimation in \cite{li2020estimating}. In this work, we use the adapted version.

\subsubsection{SR mechanism} 
The SR mechanism first uniformly partitions all users into two groups: group $g_1$ reports their original values and group $g_2$ submits their squared original values. All the values must be transformed into $[-1, 1]$ before being used in the LDP. 

\vspace{3pt}
\noindent\textbf{Perturbation.} SR first converts the value into the range $[-1, 1]$. Suppose that the range of the original input values is $[a, b]$. SR calculates transformation coefficients $k_1 = \frac{2}{b - a}$ for $g_1$ and $k_2 = \frac{2}{|b^2 - a^2|}$ for $g_2$, and derives $\tilde{x} = -1 + k_1(x - a)$ for $g_1$ or $\tilde{x} = -1 + k_2(x - a^2)$ for $g_2$. Then SR perturbs values in a discrete output domain with the probability mass function
\begin{align*}
    \Pr[\Psi_{SR(\epsilon)}(\tilde{x}) = x'] = 
    \begin{cases}
        q + \frac{(p - q)(1 - \tilde{x})}{2}, & \text{if } x' = -1 \\
        q + \frac{(p - q)(1 + \tilde{x})}{2}, & \text{if } x' = 1
    \end{cases},
\end{align*}
where $p = \frac{e^\epsilon}{1 + e^\epsilon}$ and $q = 1 - p$. 

\vspace{3pt}
\noindent\textbf{Aggregation.} It has been proven that $\mathbb{E}(\frac{x'}{p - q}) = \tilde{x}$. The server calculates $\Phi_1(x')=(\frac{x'}{p - q} + 1)/k_1 + a$  for $g_1$ and $\Phi_2(x')=(\frac{x'}{p - q} + 1)/k_2 + a^2$ for $g_2$, and estimates their mean. The process provides unbiased estimation of the mean of $x$ and $x^2$, denoted by $\mathbb{E}(x)$ and $\mathbb{E}(x^2)$ respectively. The variance of $x$ is estimated as $\mathbb{E}(x^2) - \mathbb{E}(x)^2$.

\subsubsection{PM mechanism} 

PM also uniformly divides users into groups $g_1$ and $g_2$ in which users report the squared values and original values respectively.

\vspace{3pt}
\noindent\textbf{Perturbation.} In PM, the input domain is $[-1, 1]$ and the output domain is $[-s, s]$, where $s = \frac{e^{\epsilon/2} + 1}{e^{\epsilon/2} - 1}$. Similar to SR, PM first transforms the value into the range $[-1, 1]$ via the same steps in SR. 
Then PM perturbs each value in the  continuous range $[-s, s]$ with the  probability density function as follows
\begin{align*}
    \Pr[\Psi_{PM(\epsilon)}(\tilde{x}) = x'] = 
    \begin{cases}
        \frac{e^{\epsilon/2}(e^{\epsilon/2} - 1)}{2(e^{\epsilon/2}+1)}, & \text{if } x' \in [l(\tilde{x}), r(\tilde{x})] \\
        \frac{e^{\epsilon/2} - 1}{2(e^{\epsilon/2}+e^{\epsilon})}, & \text{otherwise}
    \end{cases},
\end{align*}
where $-s \leq l(\tilde{x}) < r(\tilde{x}) \leq s$, $l(\tilde{x}) = \frac{e^{\epsilon/2} \tilde{x} - 1}{e^{\epsilon/2} - 1}$ and $r(\tilde{x}) = \frac{e^{\epsilon/2} \tilde{x} + 1}{e^{\epsilon/2} - 1}$. 

\vspace{3pt}
\noindent\textbf{Aggregation.} 
It has been proven that $\mathbb{E}(x') = \tilde{x}$ in PM. The server re-converts $x'$ to $\Phi_1(x')=(x' + 1)/k_1 + a$ for $g_1$ and to $\Phi_2(x')=(x' + 1)/k_2 + a^2$ for $g_2$, and then estimates their mean, from which the server can get the unbiased mean estimations $\mathbb{E}(x)$ and $\mathbb{E}(x^2)$. The variance of $x$ is estimated as $\mathbb{E}(x^2) - \mathbb{E}(x)^2$. The following lemma shows the error of the SR and PM mechanisms, which is useful for later analysis of the attack error.
\begin{lemma}[Error of SR and PM mechanisms \cite{wang2019collecting}] \label{mechanism_error}
    Assume there are $n$ users with the values $x_1, ..., x_n$. Let $\mu$ be the mean of those values, and $\hat{\mu}_{SR}$ and $\hat{\mu}_{PM}$ be the mean estimated by the SR and PM respectively. The errors of SR and PM are
    \begin{align*}
        & \mathbb{E}\left[ (\hat{\mu}_{SR} - \mu)^2 \right] = \frac{1}{n^2(p-q)^2} \left( n-(p-q)^2\times \sum_{i=1}^{n} x_i^2 \right) \\
        & \mathbb{E}\left[ (\hat{\mu}_{PM} - \mu)^2 \right] = \frac{e^{\epsilon/2} + 3}{3n(e^{\epsilon/2} - 1)^2} + \frac{\sum_{i=1}^{n} x_i^2}{n^2 (e^{\epsilon/2} - 1)}.
    \end{align*}
\end{lemma}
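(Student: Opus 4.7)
The plan is to reduce each mean-squared error to a variance calculation via the bias-variance decomposition, then compute the per-user variance of the published statistic for each mechanism and aggregate across users. Since the paper has already asserted that both $\hat\mu_{SR}$ and $\hat\mu_{PM}$ are unbiased estimators of $\mu$, namely $\mathbb{E}[x'/(p-q)]=\tilde x$ in SR and $\mathbb{E}[x']=\tilde x$ in PM (where $\tilde x_i$ is the affinely rescaled value in $[-1,1]$ associated with user $i$), the MSE collapses to the variance of the average. By independence of the users' perturbation noise,
\[
\mathbb{E}\!\left[(\hat\mu - \mu)^2\right] \;=\; \mathrm{Var}(\hat\mu) \;=\; \frac{1}{n^2}\sum_{i=1}^{n}\mathrm{Var}(Y_i),
\]
where $Y_i$ is the summand contributed by user $i$. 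So the whole task is to compute $\mathrm{Var}(Y_i)$ in closed form for each mechanism and sum.

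For SR, first I would observe that $x'_i\in\{-1,+1\}$, so $(x'_i)^2=1$ deterministically and therefore $\mathrm{Var}(x'_i)=1-(\mathbb{E}x'_i)^2=1-(p-q)^2\tilde x_i^2$. Dividing by $(p-q)^2$ (the linear scaling of $Y_i=x'_i/(p-q)$) and summing over $i$ gives $\sum_i\mathrm{Var}(Y_i)=\bigl(n-(p-q)^2\sum_i\tilde x_i^2\bigr)/(p-q)^2$, which after division by $n^2$ matches the stated expression. The only subtlety is a notational one: the lemma writes $x_i$ instead of $\tilde x_i$, so I would either adopt the working assumption (implicit in the lemma statement) that the inputs already lie in $[-1,1]$, or insert a one-line remark absorbing the transformation coefficient.

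For PM, the main obstacle is an honest integration against the piecewise density with $\tilde x$-dependent bounds $l(\tilde x)$ and $r(\tilde x)$. Since $\mathbb{E}[x'_i]=\tilde x_i$, I only need to evaluate $\mathbb{E}[(x'_i)^2]$, which splits into three pieces: one high-density uniform integral on $[l(\tilde x_i),r(\tilde x_i)]$ and two low-density uniform integrals on $[-s,l(\tilde x_i)]$ and $[r(\tilde x_i),s]$. Using the primitive $\int t^2\,dt=t^3/3$, substituting the formulas $l(\tilde x)=(e^{\epsilon/2}\tilde x-1)/(e^{\epsilon/2}-1)$, $r(\tilde x)=(e^{\epsilon/2}\tilde x+1)/(e^{\epsilon/2}-1)$, and $s=(e^{\epsilon/2}+1)/(e^{\epsilon/2}-1)$, and simplifying (the algebra is routine but dense; I expect cubic-difference identities such as $r^3-l^3=(r-l)(r^2+rl+l^2)$ and $s^3-r^3+(-s)^3-l^3$ to do most of the work), one obtains
\[
\mathrm{Var}(x'_i) \;=\; \frac{e^{\epsilon/2}+3}{3(e^{\epsilon/2}-1)^2} \;+\; \frac{\tilde x_i^{\,2}}{e^{\epsilon/2}-1}.
\]
Summing over $i$ and dividing by $n^2$ yields the PM expression in the lemma.

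The step I expect to be by far the most error-prone is the PM variance computation, because the three-piece integral has $\tilde x$ entering both the integrand's constant coefficients and the endpoint locations; the $\tilde x^2$ and constant terms must be separated cleanly and the mixed linear terms must cancel (they must, because by symmetry the answer is an even function of $\tilde x$). I would perform the computation by first isolating $\mathbb{E}[(x'_i)^2]$ into (inside integral)$+$(outside integral) and collecting powers of $\tilde x$ separately, then subtracting $\tilde x_i^{\,2}$ at the end to obtain $\mathrm{Var}(x'_i)$. The SR part and the final aggregation are entirely mechanical once the per-user variance is in hand.
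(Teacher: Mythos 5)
Your proposal is correct: the paper itself states this lemma without proof (it is quoted from the cited PM/SR source), and your route---unbiasedness of each per-user report, so the MSE reduces to $\frac{1}{n^2}\sum_i \mathrm{Var}(Y_i)$ by independence, followed by the per-user variance computations---is exactly the standard derivation in that reference. In particular, the three-piece integral you flag as the delicate step does simplify to $\mathrm{Var}(x'_i)=\frac{\tilde x_i^2}{e^{\epsilon/2}-1}+\frac{e^{\epsilon/2}+3}{3(e^{\epsilon/2}-1)^2}$ (the linear-in-$\tilde x$ terms cancel as you predict), and together with $\mathrm{Var}(x'_i)=1-(p-q)^2\tilde x_i^2$ for SR this yields both stated expressions once one adopts, as you note, the lemma's implicit convention that the $x_i$ are already the transformed values in $[-1,1]$.
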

It is also shown in \cite{wang2019collecting} that the PM mechanism has smaller error than the SR mechanism when $\epsilon$ is large.

\section{Threat Model} \label{threat_model}
In this section, we present our threat model, including the attacker’s capabilities and objectives.

\subsection{Assumption}

Our attacks rely on the following assumptions. First, we assume that the data collector periodically collects user information to derive the intended statistical results. For privacy concerns, LDP may be adopted. This periodical data collection is  important and even mandatory in practice for the update on the \textit{status quo} in order to make informed decisions for relevant activities in the future. For various reasons, such as transparency, research and regulatory compliance \cite{erlingsson2014rappor,microsoft_statistic,jingdong_statistic,2020income,fontenot2018income}, the results will also be made public, thus accessible to the attacker. Second, if the respective data collections are made over a short period of time, the trend of those historical results with respect to the same entity tends to be ``stable'', i.e., their values are close \cite{twitter_statistic2,2020income,fontenot2018income}. Therefore, the attacker can use the statistics from the most recent data report to improve the attack accuracy. Specifically, our attacker needs to estimate the number of authentic users $n$, the sum of the input values of genuine users $S^{(1)}=\sum_{i=1}^{n} x_i$ and the sum of the squared values of genuine users $S^{(2)}=\sum_{i=1}^{n} x_i^2$. Additionally, we assume that the attacker can inject $m$ fake users into the LDP protocol that already contains $n$ genuine users, thus totaling $n+m$ users in the system. This is consistent with prior work showing that an attacker can inject a large number of fake accounts/users into a variety of web services with minimal cost \cite{cao2019data, wu2021poisoning}.
Next, we discuss the estimation of the required information.
\vspace{-.05in}
\begin{itemize}[leftmargin=*]
\item \textbf{Estimating $n$}. Denote $n_e$ as the estimate of $n$. The attacker can deduce $n_e$ from publicly available and reliable sources, e.g., service providers often disclose the number of users under the LDP protection for publicity \cite{erlingsson2014rappor, microsoft_statistic}.
\vspace{-.05in}
\item \textbf{Estimating $S^{(1)}$ and $S^{(2)}$}. Let $S_e^{(1)}$ and $S_e^{(2)}$ be the estimate of $S^{(1)}$ and $S^{(2)}$ respectively. We offer two intuitive estimating methods.

(1) \textit{From public historical data.} This is the most straightforward way. Given the estimated user number $n_e$, the historical mean $\mu_e$ and variance $\sigma_e^2$, the attacker can derive $S_{e}^{(1)} = n_e \times \mu_e$, $S_{e}^{(2)} = (\sigma_e^2 + \mu_e^2) \times n_e$.

\vspace{2pt}
(2) \textit{Compromising a small number of genuine users.} The attacker can compromise $h$ out of $n$ genuine users and obtain their original input values $[c_1, ..., c_h]$. This is reasonable in practice for a small number $h$ and also a prerequisite for prior work \cite{cheu2019manipulation}.
Thus the attacker can estimate $S_e^{(1)} = \frac{n_e}{h} \sum_{i=1}^{h}c_i$, $S_e^{(2)} = \frac{n_e}{h}\sum_{i=1}^{h}c_i^2$.
\end{itemize}
\vspace{-.05in}

Since LDP protocols deploy perturbation on the local user side, we assume that an \textsf{OPA} attacker can gain access to the LDP implementation details, including related parameters and the output domain of the LDP protocol in order to generate and inject bogus data for manipulation. Given a specific LDP protocol, the \textsf{OPA} attacker also knows the group generation strategy for $g_1$ and $g_2$, such that she can 1) determine if the attack would be successful before launching the attack (see in Section~\ref{sec:sufficient_condition}), and 2) craft fake values for each group during the attack (see Section~\ref{sec:OPA_attack}).

For the baseline attack, we only assume that the attacker knows the input domain of the local LDP instance by taking LDP as a black box protocol. 



\subsection{Attack Objectives}
The attacker's goal is to modify the estimated mean $\hat{\mu}_t$ and variance $\hat{\sigma}^2_t$ through LDP to be as close to the target mean $\mu_t$ and variance $\sigma^2_t$ as possible. Meanwhile, the attacker wishes to simultaneously manipulate $\hat{\mu}_t$ and $\hat{\sigma}^2_t$. We adopt the adapted versions of PM and SR mechanisms to privately estimate the mean and variance within a single protocol invocation. Note that our attack objective also implicitly covers the situation of maximizing (minimizing) the mean and variance by setting a significantly large (small) target $\mu_t$ and $\sigma^2_t$. In what follows, we will elaborate on our attacks. Some important notations are summarized in Table \ref{notation}.

\begin{table}[!t]
\centering
\caption{Notations.}
\small
\vspace{-8pt}
\begin{tabular}{|c|l|}
\hline
\textbf{Notation} & \textbf{Description} \\ \hline
$n$        & The number of genuine users          \\ \hline
$n_e$        & The attacker-estimated $n$  \\ \hline
$m$        & The number of fake users           \\ \hline
$\beta$        & The fraction of fake users $\frac{m}{m+n}$          \\ \hline
$g_1$        & The group reporting the original values  $\{x_i\}_{i=1}^{n}$         \\ \hline
$g_2$        & The group reporting the squared values $\{x^2_i\}_{i=1}^{n}$           \\ \hline
$S_{e}^{(1)}$  & The attacker-estimated $\sum_{i=1}^{n} x_i$ \\ \hline 
$S_{e}^{(2)}$ & The attacker-estimated $\sum_{i=1}^{n} x_i^2$ \\ \hline
$\mu_t$  & The attacker's target mean \\ \hline
$\sigma_t^2$  & The attacker's target variance \\ \hline
$k_1$   &   The transformation coefficient for $g_1$ \\ \hline
$k_2$   &   The transformation coefficient for $g_2$ \\ \hline
\end{tabular}
\label{notation}
\vspace{-10pt}
\end{table}

\vspace{-.05in}
\section{Attack Details}
\subsection{Baseline Attack} \label{poisoning_attack_on_input}
We first discuss the baseline attack, \textit{input poisoning attack} (\textsf{IPA}). An \textsf{IPA} attacker does not need to know any details of the underlying LDP protocol. It can submit false data as input to the local LDP instance in order to manipulate the final estimate on the server. Later, we will introduce our attack, \textit{output poisoning attack}, to demonstrate the improved attack performance with knowledge of the LDP protocol compared to the baseline.

Specifically, the goal of \textsf{IPA} is to craft the input values for the controlled fake users in order to alter the mean and variance estimates to be close to the attacker's desired mean $\mu_t$ and variance $\sigma_t^2$.  We generalize the attack for both SR and PM mechanisms. Formally, we denote the original input of $n$ genuine users as $[x_1, ..., x_n]\, (\forall i: x_i \in [a, b])$, and the crafted input of fake users as $[y_1, ..., y_m]\, (\forall i: y_i \in [a, b])$. We formulate \textsf{IPA} as finding $[y_1, ..., y_m]$ such that $\frac{\sum_{i=1}^{n} x_i + \sum_{i=1}^{m} y_i}{n+m}  = \mu_t$ and $\frac{\sum_{i=1}^{n} x_i^2 + \sum_{i=1}^{m} y_i^2}{n+m}  - \mu_t^2 = \sigma_t^2$.

To solve $[y_1, ..., y_m]$, the attacker needs to know $S^{(1)}=\sum_{i=1}^{n} x_i$, $S^{(2)}=\sum_{i=1}^{n} x_i^2$ and $n$, which can be estimated from published information or by compromising a small number of genuine users as described in Section \ref{threat_model}. By substituting $S^{(1)}$, $S^{(2)}$ and $n$ with their estimates $S_{e}^{(1)}$, $S_{e}^{(2)}$ and $n_e$, a set of desired fake values $[y_1, ..., y_m]$ should satisfy
\begin{align}
    & \sum_{i=1}^{m} y_i = (n_e + m) \mu_t - S_{e}^{(1)} \label{input_attack_prac_formula_mean} \\
    & \sum_{i=1}^{m} y_i^2 = (n_e + m) (\sigma_t^2 + \mu_t^2) - S_{e}^{(2)} \label{input_attack_prac_formula_var}.
\end{align}

We transform Equations \eqref{input_attack_prac_formula_mean} and \eqref{input_attack_prac_formula_var} into the following optimization problem and solve it to find valid fake values\footnote{In this work, we use the Adam optimizer in PyTorch framework \cite{pytorch} to solve the problem \eqref{fake_value_optimization} with the learning rate $0.001$ and $10,000$ iterations.}.
\begin{equation}
    \begin{aligned}
        & \min && \left( \sum_{i=1}^{m} y_i^2 - (n_e + m) (\sigma_t^2 + \mu_t^2) - S_{e}^{(2)} \right)^2 \\
        & \mathrm{s.t.} && \sum_{i=1}^{m} y_i = (n_e + m) \mu_t - S_{e}^{(1)}, \forall i: -1 \leq y_i \leq 1 \\
    \end{aligned}
    \label{fake_value_optimization}
\end{equation}

\subsection{Output Poisoning Attack} \label{sec:OPA_attack}
We introduce our output poisoning attack that leverages the LDP implementation details to craft the output of the local perturbation in order to set the final estimates to the target mean $\mu_t$ and variance $\sigma_t^2$.

Let the number of genuine users in $g_1$ and $g_2$ be $n_1$ and $n_2$, and the number of fake users be $m_1$ and $m_2$ respectively. Denote the input of the genuine users in $g_i$ as $x_{1, (i)}, ..., x_{n_i, (i)}$ and the input of the fake users in $g_i$ as $y_{1, (i)}, ..., y_{m_i, (i)}$. Because of the randomness in the LDP local output, the objective of \textsf{OPA} is to produce fake values $\Psi(y_i)~\forall i: 1, ..., m$ such that the expected mean and variance are the attacker-intended $\mu_t$ and $\sigma_t^2$ respectively. 
However, it is difficult to calculate $\mathbb{E}[\hat{\mu}_t^2]$ because $\mathbb{E}[\hat{\mu}_t^2] = Var[\hat{\mu}_t] + \mathbb{E}[\hat{\mu}]^2$ and $Var[\hat{\mu}_t]$ depends on true data. To address this problem, we slack the attack goal by replacing $\mathbb{E}[\hat{\mu}_t^2]$ with $\mu_t^2$. Formally, we intend to achieve the following attack objective in practice.
\begin{align}
    & \mathbb{E} \left[ \frac{2}{n + m} \left( \sum_{i=1}^{n_1} \Phi_1(\Psi(x_{i, (1)})) + \sum_{i=1}^{m_1} \Phi_1(\Psi(y_{i, (1)})) \right) \right] = \mu_t \nonumber \\
    & \Rightarrow \frac{2}{m+n} \left[ \frac{1}{2}S^{(1)} + \sum_{i=1}^{m_1} \Phi_1(\Psi(y_{i, (1)})) \right] = \mu_t \label{output_attack_prac_formula_mean} \\
    & \mathbb{E} \left[ \frac{2}{n + m} \left( \sum_{i=1}^{n_2} \Phi_2(\Psi(x_{i, (2)}^2)) + \sum_{i=1}^{m_2} \Phi_2(\Psi(y_{i, (2)}^2)) \right) \right] - \mathbb{E}[\hat{\mu}_t^2] = \sigma_t^2 \nonumber \\
    &\Rightarrow \frac{2}{m+n} \left[ \frac{1}{2} S^{(2)} + \sum_{i=1}^{m_2} \Phi_2(\Psi(y_{i, (2)}^2)) \right] - \mu_t^2 = \sigma_t^2  \label{output_attack_prac_formula_var}
\end{align}
Since the perturbation $\Psi()$ and aggregation $\Phi()$ are different for SR and PM, the remainder of this subsection will study how to solve Equations \eqref{output_attack_prac_formula_mean} and \eqref{output_attack_prac_formula_var} and generate the fake values accordingly.

\subsubsection{OPA against SR}
By substituting $n$, $S^{(1)}$ and $S^{(2)}$ in Equations \eqref{output_attack_prac_formula_mean} and \eqref{output_attack_prac_formula_var} with their estimates $n_e$, $S_{e}^{(1)}$ and $S_{e}^{(2)}$, we have
\begin{align}
    &\sum_{i=1}^{m_1} \Psi(y_{i, (1)}) = (p-q) \left[ k_1 \left( \frac{n_e + m}{2}\mu_t - \frac{S_{e}^{(1)}}{2} - m_1 a \right) - m_1 \right] \label{output_value_mean} \\
    &\sum_{i=1}^{m_2} \Psi(y_{i, (2)}^2) = (p-q) \times \nonumber \\
    &\left[ k_2 \left( \frac{n_e+m}{2}(\sigma_t^2 + \mu_t^2) - \frac{S_{e}^{(2)}}{2} - m_2 a^2 \right) - m_2 \right] \label{output_value_var}
\end{align}
where $k_1$ and $k_2$ are the transformation coefficients and $a$ is the lower bound of the input range.
In SR, the output is either $-1$ or $1$. Consequently, the attacker can prepare the fake values by determining how many ``$-1$'' and ``$1$'' respectively to be assigned to the fake users. Suppose in group $g_i$ there are $[-1]_{g_i}$ fake users with $-1$ and $[1]_{g_i}$ fake users with $1$. Per Equations \eqref{output_value_mean} and \eqref{output_value_var}, we have
\begin{align*}
\begin{cases}
    [1]_{g_1} + [-1]_{g_1} = m_1 \\
    [1]_{g_1} - [-1]_{g_1} = \sum\limits_{i=1}^{m_1} \Psi(y_{i, (1)})
\end{cases}
\hspace{-12pt}
\begin{cases}
    [1]_{g_2} + [-1]_{g_2} = m_2 \\
    [1]_{g_2} - [-1]_{g_2} = \sum\limits_{i=1}^{m_2} \Psi(y_{i, (2)}).
\end{cases}
\end{align*}
Since there are two unknown variables and two equations, the attacker can solve the above equations to derive the number of $1$ and $-1$ in each group and then randomly assigns them to the fake users in $g_1$ and $g_2$.

\subsubsection{OPA against PM}
In PM, the output value is in the range $[-s, s]$. According to Equations \eqref{output_attack_prac_formula_mean} and \eqref{output_attack_prac_formula_var}, the attacker can calculate the fake values by solving the following equations
\begin{equation*}
\begin{aligned}
    & \sum_{i=1}^{m_1} \Psi(y_{i, (1)}) = k_1 \left( \frac{n+m}{2}\mu_t - \frac{S_{e}^{(1)}}{2} - m_1 a \right) - m_1 \\
    & \sum_{i=1}^{m_2} \Psi(y_{i, (2)}^2) = k_2 \left( \frac{n+m}{2}(\sigma_t^2 + \mu_t) - \frac{S_{e}^{(2)}}{2} - m_2 a^2 \right) - m_2 \\
    & \forall i: \Psi(y_{i, (1)}), \Psi(y_{i, (2)}^2) \in [-s, s].
\end{aligned}
\end{equation*}

An intuitive method to solve the above equations is to divide the right-hand-side by $m_1$ or $m_2$. However, because the fake values generated by this method are equal, the server can easily detect the fake users because it is statistically unlikely that many genuine users will send the same perturbed values. 
To address this problem, the attacker first solves the equations using the method described above, and then randomly perturbs each value while maintaining the sum and keeping the values in $[-s, s]$. Finally, the attacker randomly assigns the values to each fake user in the groups $g_1$ and $g_2$.

\vspace{3pt}
\noindent\textbf{Why is OPA more effective than IPA?}
By accessing the implementation of the underlying LDP protocols, the attacker can generate and inject poisonous data values that are more effective in affecting the server's final estimation. Specifically, the attacker knows how to solve Equations \eqref{output_attack_prac_formula_mean} and \eqref{output_attack_prac_formula_var} by leveraging the knowledge of the LDP perturbation $\Psi()$ and aggregation $\Phi()$. For example, by gaining access to the related parameters, e.g., $p$, $q$, $k_1$, $k_2$, $m_1$ and $m_2$ in $\Psi()$ and $\Phi()$ of SR, the attacker can solve Equations \eqref{output_value_mean} and \eqref{output_value_var}, producing and directly injecting fake values into the output domain of the local LDP instance to launch \textsf{OPA}. As a result, \textsf{OPA} in general will improve the attack performance since the attacker effectively circumvents the LDP perturbation for fake users, thus introducing less noise in the estimation (see the following error analysis).

\section{Theoretical Analysis}
In this section, we theoretically study the sufficient condition to launch the attack and the attack error given the target mean $\mu_t$ and variance $\sigma_t^2$. We assume that the user data in $g_1$ and $g_2$ have been transformed into $[-1, 1]$.

\subsection{Sufficient Condition} \label{sec:sufficient_condition}

\noindent\textbf{Sufficient Condition to Launch IPA.}
The sufficient condition to launch the baseline attack is that Equations \eqref{input_attack_prac_formula_mean} and \eqref{input_attack_prac_formula_var} are solvable so that the attacker can find a set of fake input values of the LDP protocol. Specifically, \textsf{IPA} can be launched if the inequalities hold below.
\begin{align}
    & -m \leq \sum_{i=1}^{m} y_i = (n_e + m)\mu_t - S_{e}^{(1)} \leq m \label{sufficient_condition_1_IPA} \\
    & y^{2(-)} \leq \sum_{i=1}^{m} y_i^2 = (n_e + m) (\sigma_t^2 + \mu_t^2) - S_{e}^{(2)} \leq y^{2(+)} \label{sufficient_condition_2_IPA},
\end{align}
where $y^{2(+)}$ and $y^{2(-)}$ are the maximum and minimum of $\sum_{i=1}^{m} y_i^2$ under the constraint $\sum_{i=1}^{m} y_i = (n_e + m)\mu_t - S_{e}^{(1)}$. Given the input values in the transformed range $[-1, 1]$, \eqref{sufficient_condition_1_IPA} indicates that the sum of all fake values $\sum_{i} y_i$ must reside between $-m$ and $m$ when there are $m$ fake users.It is further required in \eqref{sufficient_condition_2_IPA} that the sum of the squared fake values $\sum_{i} y_i^2$ be in the range $[y^{2(-)},y^{2(+)}]$.

Here we explain how to obtain the above sufficient condition. Since the input value is in the range $[-1, 1]$ and there are $m$ fake users, Equation \eqref{input_attack_prac_formula_mean} is solvable if $-m \leq \sum_{i=1}^{m} y_i = (n_e + m)\mu_t - S_{e}^{(1)} \leq m$ holds. We then need to determine if Equation \eqref{input_attack_prac_formula_var} is solvable under the constraint $\sum_{i=1}^{m} y_i = (n_e + m)\mu_t - S_{e}^{(1)}$. When the range of $\sum_{i=1}^{m} y_i^2$ under this constraint covers the target $\sigma_t^2$, the equation is solvable. To this end, we solve the following optimization problem to find the upper and lower bounds of the term $\sum_{i=1}^{m} y_i^2$. We first study the maximum of $\sum_{i=1}^{m} y_i^2$, i.e., the minimum of $-\sum_{i=1}^{m} y_i^2$.

\begin{theorem} \label{bound_y2}
    Let $A = (n_e+m)\mu_t - S_{e}^{(1)}$, when $\floor{\frac{A + m}{2}}$ fake values are $1$, $m - 1 - \floor{\frac{A + m}{2}}$ fake values are $-1$ and one fake value is $A - \floor{\frac{A + m}{2}} - (m - 1 - \floor{\frac{A + m}{2}})$, $\sum_{i=1}^{m} y_i^2$ achieves the maximum.
\end{theorem}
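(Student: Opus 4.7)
My plan is to treat this as a constrained optimization problem and exploit the convexity of the squared-sum objective. The feasible region $C = \{y \in [-1,1]^m : \sum_{i=1}^{m} y_i = A\}$ is a compact convex polytope (nonempty by the sufficient condition \eqref{sufficient_condition_1_IPA} that bounds $A$ in $[-m, m]$), and $f(y) = \sum_{i=1}^{m} y_i^2$ is strictly convex on $\mathbb{R}^m$. Since the maximum of a convex function over a compact convex set is attained at an extreme point, it suffices to enumerate the vertices of $C$ and identify the one realizing the largest value of $f$.

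The next step is a structural lemma characterizing these vertices: at any extreme point of $C$, at most one coordinate lies in the open interval $(-1, 1)$, and all other coordinates equal $\pm 1$. I will argue this via the standard exchange move — if two coordinates $y_i, y_j$ both lie strictly inside $(-1, 1)$, then the perturbation $y_i \mapsto y_i + \delta$, $y_j \mapsto y_j - \delta$ stays in $C$ for all sufficiently small $|\delta|$, exhibiting $y$ as a proper convex combination and contradicting extremality. Consequently, every vertex has the form of $k$ coordinates equal to $+1$, $m - 1 - k$ coordinates equal to $-1$, and one ``free'' coordinate $c \in [-1, 1]$ (with the degenerate sub-case $c \in \{-1, +1\}$ corresponding to all-boundary vertices), where $k \in \{0, 1, \ldots, m - 1\}$.

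Using the equality constraint $\sum_{i=1}^{m} y_i = A$ forces $c = A + m - 1 - 2k$, and the bound $c \in [-1, 1]$ restricts $k$ to the interval $[(A + m - 2)/2,\, (A + m)/2]$. The objective reduces to $f = (m - 1) + c^2$, so maximizing $f$ is equivalent to maximizing $c^2$ over admissible integer $k$. Setting $k^* = \lfloor (A + m)/2 \rfloor$ places $c$ either at a boundary $\pm 1$ (when $A + m$ is even, yielding $f = m$) or at $0$ (when $A + m$ is odd, yielding $f = m - 1$), and in either parity no other admissible $k$ yields a larger $|c|$. Plugging $k^*$ back into the vertex description recovers exactly the configuration in the theorem statement. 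The part I expect to require the most care is the parity case analysis — showing that the single expression $\lfloor (A + m)/2 \rfloor$ handles both cases uniformly and that feasibility of the resulting $c$ is automatic whenever \eqref{sufficient_condition_1_IPA} holds; the rest is a textbook application of the extreme-point principle for convex maximization.
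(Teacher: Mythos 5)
Your proposal is correct in substance but takes a genuinely different route from the paper's own proof. The paper works with the equivalent problem of minimizing $-\sum_{i=1}^m y_i^2$ and verifies that the stated configuration satisfies the KKT conditions (stationarity, primal and dual feasibility, complementary slackness), claiming sufficiency on the grounds that the objective is convex; since $-\sum_i y_i^2$ is in fact concave, KKT conditions for that formulation are only necessary, so the paper's certificate by itself does not exclude other KKT points, whereas your argument — Bauer-type maximum principle (a continuous convex function on a compact convex polytope attains its maximum at an extreme point), the exchange argument showing every extreme point of $\{y\in[-1,1]^m:\sum_i y_i=A\}$ has at most one coordinate strictly inside $(-1,1)$, and the reduction of the objective to $m-1+c^2$ with $c=A+m-1-2k$ — delivers a genuinely global optimality proof and is, if anything, tighter than the paper's. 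One repair is needed in your last step: the parity discussion implicitly treats $A+m$ as an integer, but $A=(n_e+m)\mu_t-S_e^{(1)}$ is a general real number, so with $k^*=\floor{(A+m)/2}$ the free coordinate is $c=A+m-1-2k^*\in[-1,1)$ and need not equal $0$ or $\pm 1$. The argument still closes, because the admissibility window $\left[(A+m-2)/2,\,(A+m)/2\right]$ has length one: either it contains the single integer $k^*=\floor{(A+m)/2}$, or $(A+m)/2$ is an integer and the two admissible choices $k^*$ and $k^*-1$ both give $|c|=1$ and tie; in every case the theorem's configuration is a maximizing vertex (the boundary case $A=m$, where all $m$ fake values equal $1$ and the split into $m-1$ boundary values plus one free value degenerates, should be noted separately, and feasibility of $k^*\in\{0,\dots,m-1\}$ indeed follows from $-m\le A<m$ as guaranteed by the sufficient condition \eqref{sufficient_condition_1_IPA}).
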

\begin{proof}
    See Appendix \ref{appendix_solution_max_y2}.
\end{proof}
Similarly, we can determine the lower bound of $\sum_{i=1}^{m} y_i^2$ by changing the objective function from $-\sum_{i=1}^{m} y_i^2$ to $\sum_{i=1}^{m} y_i^2$. We omit the detailed steps here but share the result: when all fake values are $\frac{A}{m}$, $\sum_{i=1}^{m} y_i^2$ achieves the minimum. Given the maximum and minimum of $\sum_{i=1}^{m} y_i^2$ denoted by $y^{2(+)}$ and $y^{2(-)}$ respectively, we can get the above sufficient condition in \eqref{sufficient_condition_1_IPA} and \eqref{sufficient_condition_2_IPA}.


\begin{table*}[!tbp]
\centering
\caption{Comparison of attack error between baseline and our attack against SR and PM. For a concise comparison, we generate some intermediate notations (e.g., $\mathcal{P}, \mathcal{Q}, \mathcal{T}^{IPA}_{\mathrm{SR}}$, etc.) and show their concrete calculation in Table~\ref{tab:intermedias} in Appendix.}
\label{tab:attack_error}
\scalebox{0.9}{
\begin{tabular}{|c|c|c|}
\hline
 & Baseline (IPA) & OPA \\ \hline
$\mathrm{Err}(\hat{\mu}_t)$ in SR & $\mathcal{P} + \frac{2}{(m+n)(p-q)^2} - \mathcal{Q}$ & $\frac{\left( 2n - 2(p-q)^2 S^{(2)} \right)}{(m+n)^2(p-q)^2} + \frac{S^{(2)}}{(m+n)^2} + \mathcal{P}$ \\ \hline
$\mathrm{Err}(\hat{\sigma}^2_t)$ in SR & $\leq \frac{2}{(m+n)(p-q)^2} - \frac{S^{(4)}}{(m+n)^2} + \mathcal{T}^{IPA}_{\mathrm{SR}} + 1$ & $\leq \frac{2n-2(p-q)^2 S^{(4)}}{(m+n)^2(p-q)^2} + \frac{S^{(4)}}{(m+n)^2} + \mathcal{T}^{OPA}_{\mathrm{SR}} + 1$ \\ \hline
$\mathrm{Err}(\hat{\mu}_t)$ in PM & $\frac{2(e^{\epsilon/2} + 3)}{3(n+m)(e^{\epsilon/2} - 1)^2} + \mathcal{P} + \mathcal{Q} + \frac{2 \mathcal{Q}}{(e^{\epsilon/2} - 1)}$ & $\mathcal{P} + \frac{2n(e^{\epsilon/2} + 3)}{3(m+n)^2(e^{\epsilon/2} - 1)^2} + \frac{(1 + e^{\epsilon/2}) S^{(2)}}{(m+n)^2 (e^{\epsilon/2} - 1)}$ \\ \hline
$\mathrm{Err}(\hat{\sigma}^2_t)$ in PM & $\leq \frac{2(e^{\epsilon/2} + 3)}{3(n+m)(e^{\epsilon/2} - 1)^2} + \frac{2 (S^{(4)} + \mathcal{Y}_u^{(4)})}{(n+m)^2 (e^{\epsilon/2} - 1)} + \frac{( S^{(4)} + \mathcal{Y}_u^{(4)} )}{(m+n)^2} + \mathcal{T}^{IPA}_{\mathrm{PM}} + 1$ & $\leq \frac{2n(e^{\epsilon/2} + 3)}{3(m+n)^2(e^{\epsilon/2} - 1)^2} + \frac{(1 + e^{\epsilon/2}) S^{(4)}}{(m+n)^2 (e^{\epsilon/2} - 1)} + \mathcal{T}^{OPA}_{\mathrm{PM}} + 1$ \\ \hline
\end{tabular}
}
\end{table*}

\noindent\textbf{Sufficient Conditions for OPA.}
Now we discuss the sufficient conditions for our attack, which will be analyzed in the context of SR and PM respectively. 

\vspace{2pt}
\noindent\textbullet~ \textit{SR mechanism.}
The sufficient conditions to launch \textsf{OPA} in SR is that Equations \eqref{output_value_mean} and \eqref{output_value_var} are solvable so that the attacker can produce viable output for the local LDP instance in order to manipulate the estimate on the server. In SR, the output is either $-1$ or $1$. Therefore, Equations \eqref{output_value_mean} and \eqref{output_value_var} are solvable if the following hold
\begin{align}
    & -m_1 \leq (p-q) \left( \frac{n_e + m}{2}\mu_t - \frac{S_{e}^{(1)}}{2} \right) \leq m_1 \label{condition_2_SR}, \\
    & -m_2 \leq (p-q) \left( \frac{n_e + m}{2}(\sigma_t^2 + \mu_t^2) - \frac{S_{e}^{(2)}}{2} \right) \leq m_2 \label{condition_1_SR}.
\end{align}

Inequality~\eqref{condition_2_SR} indicates that the sum of fake values $\sum_{i=1}^{m_1}\Psi{(y_{i, (1)})}$ in the output domain should range in $[-m_1, m_1]$ since there are $m_1$ fake users in $g_1$. Inequality~\eqref{condition_1_SR} shows that the range of $\sum_{i=1}^{m_2}\Psi{(y^2_{i, (2)})}$ should be from $-m_2$ to $m_2$ for $m_2$ fake users in $g_2$. By estimating $m_1$ and $m_2$ to be $\frac{m}{2}$, the attacker obtains the sufficient conditions by determining the value of $m$ that satisfies \eqref{condition_2_SR} and \eqref{condition_1_SR}.

\vspace{2pt}
\noindent\textbullet~ \textit{PM mechanism.}
The analysis of PM is similar to that of SR. Equations \eqref{output_value_mean} and \eqref{output_value_var} are solvable if the following inequalities hold. We also estimate $m_1$ and $m_2$ to be $\frac{m}{2}$.
\begin{align}
    & -s\times m_1 \leq \frac{n_e + m}{2}\mu_t - S_{e}^{(1)} \leq s\times m_1 \label{condition_2_PM} \\
    & -s\times m_2 \leq \frac{n_e + m}{2}(\sigma_t^2 + \mu_t^2) - S_{e}^{(2)} \leq s\times m_2 \label{condition_1_PM}
\end{align}
Since the output of PM is within $[-s, s]$, the corresponding upper and lower bounds in \eqref{condition_2_PM} and \eqref{condition_1_PM} are multiplied by $s$.

\vspace{2pt}

\noindent\textbf{Number of Fake Users $m$.} The sufficient condition reveals the relationship between the target values and the needed fake user number $m$, based on which we can further derive the minimum $m$ required to satisfy the sufficient condition given the target mean $\mu_t$ and variance $\sigma_t^2$. Unfortunately, it is challenging to provide the definite mathematical expression of the minimum $m$ because the inequality signs in the condition are uncertain without knowing the values of $n_e$, $S^{(1)}_e$ and $S^{(2)}_e$. On the other hand, since there are only linear and quadratic terms of $m$ in the inequalities, once $n_e$, $S_{e}^{(1)}$, $S_{e}^{(2)}$, $\mu_t$ and $\sigma_t^2$ are given, it is not difficult to get the lower bound on $m$ using the quadratic formula. We empirically study the required minimum number of fake users in Section \ref{section_impact_beta}. The results show that a larger $m$ allows the attacker to set the target values farther from the ground truth. In other words, \textsf{OPA} attacker can control fewer fake users but achieve the same attack efficacy as the baseline. 

\vspace{-10pt}

\subsection{Error Analysis}\label{sec:error_analysis}
In this section, we analyze and compare the attack error of our attack with the baseline attack against SR and PM mechanisms. For ease of analysis, we adopt the widely-used \textit{mean squared error} (MSE) as the error metric.
Specifically, let the estimated mean and variance after the attack be $\hat{\mu}_t$ and $\hat{\sigma}_t^2$. Denote $\mathrm{Err}(\hat{\mu}_t)$ and $\mathrm{Err}(\hat{\sigma}_t^2)$ as the MSE $\mathbb{E}[(\hat{\mu}_t - \mu_t)^2]$ and $\mathbb{E}[(\hat{\sigma}_t^2 - \sigma_t)^2]$ between the target and attack result. Our goal is to study $\mathrm{Err}(\hat{\mu}_t)$ and $\mathrm{Err}(\hat{\sigma}_t^2)$.

We summarize the attack error in Table~\ref{tab:attack_error} and leave the details to Appendices~\ref{appendix_proof_error_input_attack_SR} -- \ref{appendix_proof_error_output_attack_PM}. Note that it is challenging to solve the exact attack error for the variance because the variance estimation depends on the square of a random variable (i.e., the estimated mean). To address this problem, we consider deriving the upper bound of the attack error for variance. 

\vspace{3pt}
\noindent\textbf{SR vs. PM}
We first compare the attack error under different LDP protocols (i.e., SR and PM). We find that the errors are related to the ground truth data due to the terms $S^{(1)}$, $S^{(2)}$ and $S^{(4)}$. By subtracting the error of SR from the error of PM, we find regardless of  \textsf{IPA} or \textsf{OPA},  when $\epsilon$ is small, the error under the SR mechanism is smaller than adopting the PM mechanism given a target mean; when $\epsilon$ is large, the attack against PM performs better because PM introduces less LDP error (see Lemma \ref{mechanism_error}). For a target variance, we cannot draw a similar theoretical conclusion because the analysis only provides the upper bound of the error. But our empirical study in Section~\ref{experiment} shows that as $\epsilon$ grows, the error in PM is smaller than in SR due to less LDP noise introduced.

\vspace{3pt}
\noindent\textbf{OPA vs. Baseline} Now we compare the attack error between \textsf{OPA} and the baseline attack \textsf{IPA}. Theorem \ref{OPA_better} shows that regardless of the underlying LDP protocols, \textsf{OPA} outperforms the baseline with less introduced attack error given a target mean; \textsf{OPA} also has a tighter error upper bound with respect to a target variance.
\begin{theorem} \label{OPA_better}
    The error $\mathrm{Err}(\hat{\mu}_t)$ of \textsf{OPA} is smaller than the error $\mathrm{Err}(\hat{\mu}_t)$ of \textsf{IPA}, and the upper bound of $\mathrm{Err}(\hat{\sigma}^2_t)$ of \textsf{OPA} is smaller than that of \textsf{IPA}.
\end{theorem}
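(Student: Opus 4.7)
The plan is to prove both statements of Theorem~\ref{OPA_better} by direct algebraic comparison of the closed-form expressions already collected in Table~\ref{tab:attack_error}. The key intuition is that under \textsf{IPA} all $n+m$ users must pass their values through the LDP perturbation, so the $m$ fake users each contribute a per-user LDP-noise term on the order of $\frac{1}{(p-q)^2}$ for SR or $\frac{e^{\epsilon/2}+3}{(e^{\epsilon/2}-1)^2}$ for PM to the variance of the mean estimator. Under \textsf{OPA} the fake users deposit their values directly in the output domain, so those $m$ noise contributions disappear and are replaced only by deterministic second-moment terms driven by the genuine data $S^{(2)}$ and $S^{(4)}$. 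Making this intuition rigorous reduces to subtracting the row entries of Table~\ref{tab:attack_error} and showing each difference is non-negative.

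First I would treat the two mean-error rows. For SR, after cancelling the common term $\mathcal{P}$, the difference $\mathrm{Err}_{\textsf{IPA}}(\hat{\mu}_t) - \mathrm{Err}_{\textsf{OPA}}(\hat{\mu}_t)$ reduces to an LDP-noise gap proportional to $\frac{m}{(n+m)^2 (p-q)^2}$, minus $\mathcal{Q}$ and $\frac{S^{(2)}}{(m+n)^2}$; since the crafted and genuine values lie in $[-1,1]$ the squared sums appearing in $\mathcal{Q}$ are bounded by $n+m$, and the LDP saving on $m$ fake users dominates once $\epsilon$ is moderate, so the gap is positive. For PM the cancellation is even cleaner: both rows again share $\mathcal{P}$, the LDP-noise term scales with $n+m$ in \textsf{IPA} but only with $n$ in \textsf{OPA} (a strictly positive gap), and \textsf{IPA} carries the additional non-negative terms $\mathcal{Q}$ and $\frac{2\mathcal{Q}}{e^{\epsilon/2}-1}$ that are absent from \textsf{OPA}. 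This already yields the mean half of the theorem.

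For the variance upper bound the same subtraction strategy applies, but now each bound carries a $\mathcal{T}$ term that arises from the slack estimate of $\mathbb{E}[\hat{\mu}_t^2] - \mu_t^2$, plus an extra quartic-moment contribution. In the IPA bound this quartic term is $\frac{S^{(4)} + \mathcal{Y}_u^{(4)}}{(n+m)^2}$ with $\mathcal{Y}_u^{(4)} = \sum_i y_i^4$ coming from the fake inputs, whereas \textsf{OPA} carries only the $\frac{S^{(4)}}{(m+n)^2}$ piece; combined with the same LDP-noise reduction as in the mean case, the non-$\mathcal{T}$ portion of the gap is manifestly non-negative. The main obstacle is showing $\mathcal{T}^{\textsf{IPA}}_{\star} \geq \mathcal{T}^{\textsf{OPA}}_{\star}$ for $\star \in \{\mathrm{SR}, \mathrm{PM}\}$. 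I would address this by propagation: because each $\mathcal{T}$ is built, in the appendix, as a monotone upper bound of $\mathbb{E}[\hat{\mu}_t^2]$, and because the mean-error inequality from the first half transfers to $\mathbb{E}[\hat{\mu}_t^2]_{\textsf{IPA}} \geq \mathbb{E}[\hat{\mu}_t^2]_{\textsf{OPA}}$ once the common target $\mu_t$ is factored out, the ordering inherits. The only subtlety I anticipate is verifying no sign flip occurs inside $\mathcal{T}$ from the attacker's enlarged feasible set under \textsf{OPA}; this follows because the achievable distributions of output values for \textsf{OPA} super-set those reachable by \textsf{IPA} after perturbation, so every bound used in deriving $\mathcal{T}^{\textsf{OPA}}$ is at least as tight as its \textsf{IPA} counterpart.
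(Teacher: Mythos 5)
Your overall route is the same as the paper's (subtract the rows of Table~\ref{tab:attack_error} and show each difference is non-negative, then order the $\mathcal{T}$ terms via $Var_{IPA}[\hat{\mu}_t]\ge Var_{OPA}[\hat{\mu}_t]$), but the execution has a genuine gap: you never use the defining constraint on the crafted inputs, $\sum_{i=1}^{m} y_i^2 = (n_e+m)(\mu_t^2+\sigma_t^2) - S_{e}^{(2)}$ together with $0 \le \sum_{i=1}^{m} y_i^2 \le m$ (each $y_i \in [-1,1]$), and this is exactly what makes the comparison unconditional. For SR the difference $\mathrm{Err}_{IPA}(\hat{\mu}_t)-\mathrm{Err}_{OPA}(\hat{\mu}_t)$ equals $\frac{2m}{(m+n)^2(p-q)^2} - \frac{(n_e+m)(\mu_t^2+\sigma_t^2)-S_{e}^{(2)}}{(m+n)^2}$; bounding the subtracted term by $m$ gives at least $\frac{2m-m(p-q)^2}{(m+n)^2(p-q)^2}\ge 0$ for \emph{every} $\epsilon$, since $(p-q)^2\le 1$. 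Your version instead bounds ``the squared sums appearing in $\mathcal{Q}$'' by $n+m$ and concludes the gap is positive only ``once $\epsilon$ is moderate''; with that loose bound the inequality genuinely need not follow (e.g.\ $m=0.1n$, large $\epsilon$, and $S^{(2)}\ll n$, as in the Taxi data), and in any case the theorem claims an unconditional comparison, so the argument as written does not establish it. (You also describe the $\frac{S^{(2)}}{(m+n)^2}$ term as subtracted when it in fact enters with a plus sign, which obscures the cancellation.)

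The PM mean case has the same root problem: it is not true that IPA simply carries extra non-negative terms absent from OPA, because the OPA row contains $\frac{(1+e^{\epsilon/2})S^{(2)}}{(m+n)^2(e^{\epsilon/2}-1)}$, which has no literal counterpart in the IPA row. One must check $\mathcal{Q}+\frac{2\mathcal{Q}}{e^{\epsilon/2}-1} \ge \frac{(1+e^{\epsilon/2})S^{(2)}}{(m+n)^2(e^{\epsilon/2}-1)}$, which holds precisely because $\mathcal{Q}-\frac{S^{(2)}}{(m+n)^2}=\frac{\sum_i y_i^2}{(m+n)^2}\ge 0$ --- again the constraint you omit. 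For the variance half, your reduction is in the spirit of the paper: $\mathbb{E}[\hat{\mu}_t]$ is identical under both attacks, so $Var_{IPA}\ge Var_{OPA}$ gives $\mathbb{E}[\hat{\mu}_t^2]_{IPA}\ge\mathbb{E}[\hat{\mu}_t^2]_{OPA}$ and hence the $\mathcal{T}$ ordering, while the non-$\mathcal{T}$ part for SR needs the explicit check $\frac{2n-2(p-q)^2S^{(4)}}{(m+n)^2(p-q)^2}+\frac{S^{(4)}}{(m+n)^2}\le \frac{2}{(m+n)(p-q)^2}-\frac{S^{(4)}}{(m+n)^2}$ (it reduces to $2n\le 2(m+n)$) rather than a term-by-term ``IPA has more'' claim, since the $S^{(4)}$ terms enter with opposite signs. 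Your ``feasible-set superset'' justification for the $\mathcal{T}$ ordering is a heuristic rather than a proof of the sign condition you yourself flag, though to be fair the paper's own treatment of that step is no more rigorous.
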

\begin{proof}
    See Appendix \ref{appendix_proof_OPA_better}.
\end{proof}
The intuition is that the submitted fake values in \textsf{IPA} will be perturbed by the LDP, which further contributes to the attack errors. However, \textsf{OPA} is able to bypass the perturbation and directly submit fake values to the server. Therefore, LDP noise does not affect the error calculation. However, Theorem \ref{OPA_better} only states that \textsf{OPA} has a smaller upper bound of error given a target variance. For completeness,  we also empirically study the error. The experimental results show that  \textsf{OPA} outperforms the baseline for both target mean and variance.

\vspace{-10pt}

\section{Consistency of Security and Privacy} \label{security_privacy_consistency}
There is a known security-privacy trade-off in prior research \cite{cao2019data, cheu2019manipulation}, which indicates the incompatible security goal with the privacy requirement of LDP. In other words, prior attacks perform better when $\epsilon$ is set small for higher privacy requirements. However, we do not observe such a trade-off in our proposed data poisoning attack. The security and privacy goals of LDP here are \textit{consistent}, i.e., enhanced privacy also provides improved protection against our data poisoning attack. In this section, we study this consistency for both \textsf{OPA} and \textsf{IPA}, and provide insights into the cause of the difference.

\vspace{-10pt}
\subsection{Security-privacy Consistency in OPA}
We analyze the relationship between the attack performance measured by attack error and the privacy level measured by $\epsilon$ and show the result in Theorem~\ref{consistency_OPA}.

\begin{theorem} \label{consistency_OPA}
    For \textsf{OPA} against SR and PM mechanisms, when the privacy budget $\epsilon$ is larger, the error on mean and the upper bound of the error on variance become smaller.
\end{theorem}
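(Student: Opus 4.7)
The plan is to read off the OPA column of Table~\ref{tab:attack_error} and verify, summand by summand, that every term depending on $\epsilon$ is monotonically non-increasing in $\epsilon$, while every $\epsilon$-free summand (the $\mathcal{P}$ term, $S^{(k)}/(m+n)^2$, the additive $+1$ slack) drops out of the comparison. This reduces the theorem to elementary one-variable calculus on the two LDP-noise factors that appear in the OPA error expressions: $1/(p-q)^2$ for SR and the pair $(t+3)/(t-1)^2$, $(1+t)/(t-1)$ for PM with $t = e^{\epsilon/2}$.

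For SR, I would first substitute $p = e^\epsilon/(1+e^\epsilon)$ and $q = 1-p$ to obtain $p-q = \tanh(\epsilon/2)$, so $(p-q)^2$ is strictly increasing on $\epsilon > 0$ and $1/(p-q)^2$ strictly decreasing. The OPA mean-error row then simplifies to $2n/((m+n)^2(p-q)^2) - S^{(2)}/(m+n)^2 + \mathcal{P}$, in which only the first summand depends on $\epsilon$; since that summand decreases with $\epsilon$, so does $\mathrm{Err}(\hat{\mu}_t)$. The replacement $S^{(2)} \mapsto S^{(4)}$ governs the analogous part of the variance upper bound, so its $\epsilon$-dependent body inherits the same monotonicity.

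For PM, set $t = e^{\epsilon/2} > 1$. A direct derivative calculation gives $\tfrac{d}{dt}\tfrac{t+3}{(t-1)^2} = \tfrac{-t-7}{(t-1)^3} < 0$ and $\tfrac{d}{dt}\tfrac{1+t}{t-1} = \tfrac{d}{dt}\bigl(1 + \tfrac{2}{t-1}\bigr) = -\tfrac{2}{(t-1)^2} < 0$. Both PM-noise factors are therefore strictly decreasing in $t$, and hence in $\epsilon$, which immediately yields monotonic decrease of the OPA mean-error row and of the main body of the OPA variance-error row.

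The main obstacle I anticipate is the bookkeeping for the variance upper bounds: specifically the intermediate quantities $\mathcal{T}^{OPA}_{\mathrm{SR}}$ and $\mathcal{T}^{OPA}_{\mathrm{PM}}$, which originate from the slack introduced in Section~\ref{sec:OPA_attack} when $\mathbb{E}[\hat{\mu}_t^2]$ is replaced by $\mu_t^2$. My plan is to expand them from Table~\ref{tab:intermedias} in the appendix and show that each decomposes as a linear combination of $\mathrm{Err}(\hat{\mu}_t)$ plus $\epsilon$-free constants; this is natural because $\hat{\sigma}^2_t - \sigma^2_t$ algebraically equals the second-moment error minus the squared-mean error. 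The monotonicity just established for $\mathrm{Err}(\hat{\mu}_t)$ then propagates to the $\mathcal{T}^{OPA}$ terms, and the theorem follows by adding up monotone decreasing functions of $\epsilon$ with $\epsilon$-free constants.
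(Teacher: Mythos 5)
Your treatment of the mean error is correct and coincides with the paper's (very terse) proof, which simply differentiates the tabulated OPA errors in $\epsilon$ and checks the derivatives are negative: your substitution $p-q=\tanh(\epsilon/2)$ for SR and the derivatives $\tfrac{d}{dt}\tfrac{t+3}{(t-1)^2}=\tfrac{-t-7}{(t-1)^3}<0$, $\tfrac{d}{dt}\tfrac{1+t}{t-1}=-\tfrac{2}{(t-1)^2}<0$ for PM (with $t=e^{\epsilon/2}$, $S^{(2)}\ge 0$) are exactly the required computations, and the remaining summands ($\mathcal{P}$, $S^{(k)}/(m+n)^2$, the $+1$) are indeed $\epsilon$-free.

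However, your plan for the variance upper bound has a genuine gap. The quantities $\mathcal{T}^{OPA}_{\mathrm{SR}}$ and $\mathcal{T}^{OPA}_{\mathrm{PM}}$ in Table~\ref{tab:intermedias} are \emph{not} linear combinations of $\mathrm{Err}(\hat{\mu}_t)$ plus $\epsilon$-free constants: they are squared bias terms of the form $\mathcal{T}=\bigl(C-V(\epsilon)\bigr)^2$, where $C=\eta+\tfrac{S^{(2)}-S_e^{(2)}}{m+n}-\mathbb{E}[\hat{\mu}_t]^2$ is $\epsilon$-free and $V(\epsilon)=Var^{OPA}[\hat{\mu}_t]$ is the $\epsilon$-dependent (decreasing) variance. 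Monotonicity does not propagate through the square: $\tfrac{d\mathcal{T}}{d\epsilon}=-2\bigl(C-V\bigr)V'(\epsilon)$, and since $V'<0$ this is negative only when $C-V\le 0$, i.e.\ only when the bias of $\hat{\sigma}^2_t$ is non-positive. That sign condition holds, for instance, when the attacker's estimates are exact (then the bias equals $-Var^{OPA}[\hat{\mu}_t]$ and $\mathcal{T}=V(\epsilon)^2$ is decreasing), but with estimation error $C$ can exceed $V(\epsilon)$, in which case $\mathcal{T}$ \emph{increases} with $\epsilon$ and your ``sum of decreasing functions plus constants'' argument breaks. To complete the variance half you must either carry out a sign analysis of the bias term (as the paper implicitly assumes when it asserts all derivatives are negative) or restrict the claim accordingly; the linear-decomposition step as you state it would fail.
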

\begin{proof}
    See Appendix~\ref{appendix_proof_consistency_OPA}.
\end{proof}

Theorem \ref{consistency_OPA} only proves that the security-privacy consistency holds for the mean under \textsf{OPA}. The change of the upper bound of the error on variance cannot affirm such consistency result for variance theoretically. Therefore, we also empirically study it and confirm it by our experiments, showing the weakened LDP security as its privacy guarantee deteriorates (see Section \ref{experiment}). 

\subsection{Which is True: Consistency or Trade-off?}
At first glance, the observed security-privacy consistency is at odds with the known result that we have to trade LDP privacy for improved security against unauthorized data injection \cite{cao2019data, wu2021poisoning, cheu2019manipulation}. Through the foregoing analysis and intuitive reasoning, we discover that the two seemingly conflicting findings actually complement each other. They collectively reveal a more holistic view of the threat landscape in the context of data poisoning attacks. We provide the intuition below.

In general, the relationship between LDP security and its privacy depends on the underlying attack objective. In \cite{cheu2019manipulation}, the goal of the attacker is to impair LDP's overall utility. A small $\epsilon$ facilitates the attack by adding more noise to reduce the accuracy of the result. The constructed false values are independent of the privacy budget for the proposed attack in \cite{cao2019data}, which aims to maximize the frequency of target items. A small $\epsilon$ allows the fake users to contribute more to the estimated item frequencies,
resulting in a higher attack gain. In \cite{wu2021poisoning} the security-privacy trade-off remains for the frequency gains of the attack against key-value data \cite{gu2020pckv} since the attack goal is still to maximize the frequency. However, such a trade-off does not necessarily hold when maliciously maximizing the mean. This is because they approximate the mean gain by Taylor expansion in order to perform the attack, which introduces errors into the computation.

Our proposed data poisoning attack has a different goal, i.e., the attacker aims to control the final estimate at a finer level and make the result as close to the target value as possible. Since the attacker can bypass the perturbation and directly inject fake values into the output domain of the local LDP instance, there are two types of errors that impact the result of \textsf{OPA}: the error introduced from the estimation of relevant statistics by the attacker and the error due to the LDP noise from genuine users' input. The former is independent of the LDP privacy implication. For the latter, a small $\epsilon$ incurs large LDP noise such that it is challenging for our attack to precisely manipulate the LDP estimate towards some target value.

\begin{tcolorbox}[enhanced,drop shadow southwest]
\vspace{-.05in}
\textit{The fact that the consistency and trade-off are both valid is disturbing since it complicates the already elusive reasoning and selection of the privacy budget in LDP and makes the design of a secure LDP protocol even more challenging in the presence of different types of data poisoning attacks.}
\vspace{-.05in}
\end{tcolorbox}

We will discuss the mitigation in Section~\ref{mitigation} and the applicability of our attack to other estimations in Section \ref{discussion}.
\vspace{-.05in}

\begin{table}[!t]
\centering
\caption{Dataset information. The numbers in the parentheses are derived from the original user values before being transformed into $[-1, 1]$.}
\scalebox{0.75}{
\begin{tabular}{|c|c|c|c|c|c|}
\hline
Dataset & \#Sample $n$ & $\mu$ & $\sigma^2$ & $S^{(1)}$ & $S^{(2)}$ \\ \hline
\textit{Taxi} \cite{taxi} & 83,130 & \begin{tabular}[c]{@{}c@{}}-0.022\\ (129)\end{tabular} & \begin{tabular}[c]{@{}c@{}}0.34\\ (5,932)\end{tabular} & \begin{tabular}[c]{@{}c@{}}-2,194\\ ($1E7$)\end{tabular} & \begin{tabular}[c]{@{}c@{}}28,362\\ ($1.8E9$)\end{tabular} \\ \hline
\textit{Income} \cite{income} & 2,390,203 & \begin{tabular}[c]{@{}c@{}}-0.93\\ (51,473)\end{tabular} & \begin{tabular}[c]{@{}c@{}}0.007\\ ($4.8E9$)\end{tabular} & \begin{tabular}[c]{@{}c@{}}-2,239,154\\ ($1.2E11$)\end{tabular} & \begin{tabular}[c]{@{}c@{}}2,115,289\\ ($1.8E16$)\end{tabular} \\ \hline
\textit{Retirement} \cite{retirement} & 97,220 & \begin{tabular}[c]{@{}c@{}}-0.87\\ (46,249)\end{tabular} & \begin{tabular}[c]{@{}c@{}}0.025\\ ($3.3E9$)\end{tabular} & \begin{tabular}[c]{@{}c@{}}-85,000\\ ($4.5E9$)\end{tabular} & \begin{tabular}[c]{@{}c@{}}76,752\\ ($5.4E14$)\end{tabular} \\ \hline
\end{tabular}
}
\label{dataset}
\vspace{-15pt}
\end{table}

\begin{table*}[!t]
\centering
\caption{Default parameters. Values in parentheses are derived from the original user values before being transformed into $[-1, 1]$.}
\scalebox{0.9}{
\begin{tabular}{|c|c|c|c|c|c|c|c|c|c|}
\hline
Dataset & $\mu_{t_1}$ & $\mu_{t_2}$ (true $\mu$) & $\mu_{t_3}$ & $\sigma_{t_1}^2$ & $\sigma_{t_2}^2$ (true $\sigma^2$) & $\sigma_{t_3}^2$ & $S_e^{*(1)}$ & $S_e^{*(2)}$ & $n_e^*$ \\ \hline
\textit{Taxi} & \begin{tabular}[c]{@{}c@{}}0.06\\ (140)\end{tabular} & \begin{tabular}[c]{@{}c@{}}-0.022\\ (129)\end{tabular} & \begin{tabular}[c]{@{}c@{}}-0.06\\ (118)\end{tabular} & \begin{tabular}[c]{@{}c@{}}0.33\\ (5,793)\end{tabular} & \begin{tabular}[c]{@{}c@{}}0.34\\ (5,932)\end{tabular} & \begin{tabular}[c]{@{}c@{}}0.4\\ (7,022)\end{tabular} & \begin{tabular}[c]{@{}c@{}} -2,326\\($1E7$)\end{tabular} & \begin{tabular}[c]{@{}c@{}} 29,580\\($1.9E9$) \end{tabular} & 80,000 \\ \hline
\textit{Income} & \begin{tabular}[c]{@{}c@{}}-0.92\\ (65,160)\end{tabular} & \begin{tabular}[c]{@{}c@{}}-0.93\\ (51,473)\end{tabular} & \begin{tabular}[c]{@{}c@{}}-0.94\\ (48,870)\end{tabular} & \begin{tabular}[c]{@{}c@{}}0.005\\ ($3.3E9$)\end{tabular} & \begin{tabular}[c]{@{}c@{}}0.007\\ ($4.8E9$)\end{tabular} & \begin{tabular}[c]{@{}c@{}}0.009\\ ($5.9E9$)\end{tabular} & \begin{tabular}[c]{@{}c@{}}-2,234,086\\($1.2E11$)\end{tabular} & \begin{tabular}[c]{@{}c@{}} 2,108,453\\($2E16$) \end{tabular} & 2,400,000 \\ \hline
\textit{Retirement} & \begin{tabular}[c]{@{}c@{}}-0.86\\ (51,515)\end{tabular} & \begin{tabular}[c]{@{}c@{}}-0.87\\ (46,249)\end{tabular} & \begin{tabular}[c]{@{}c@{}}-0.88\\ (44,156)\end{tabular} & \begin{tabular}[c]{@{}c@{}}0.02\\ ($2.7E9$)\end{tabular} & \begin{tabular}[c]{@{}c@{}}0.025\\ ($3.3E9$)\end{tabular} & \begin{tabular}[c]{@{}c@{}}0.03\\ ($4E9$)\end{tabular} & \begin{tabular}[c]{@{}c@{}} -85,157\\($4.4E9$)\end{tabular} & \begin{tabular}[c]{@{}c@{}} 77,032\\($5.3E14$) \end{tabular} & 100,000 \\ \hline
\end{tabular}
}
\label{parameter_test_mean_var}
\vspace{-5pt}
\end{table*}

\section{Experiments}
\label{experiment}

\subsection{Setup}
\noindent\textbf{Dataset.} 
We used three real-world datasets below to evaluate our attack and baseline attack. They all contain numerical values, which we further converted into $[-1, 1]$. More information about the datasets is summarized in Table~\ref{dataset}.

\noindent\textbullet~ \textit{Taxi}~\cite{taxi}: This dataset comes from 2020 December New York Taxi data, recording the mileage of taxi in a day.
    
\noindent\textbullet~ \textit{Income}~\cite{income}: This dataset contains the income of Americans from the 2019 American Community Survey.
    
\noindent\textbullet~ \textit{Retirement}~\cite{retirement}: This dataset contains the salary paid to retired employees in San Francisco.



\vspace{3pt}
\noindent\textbf{Metric.} 
We repeatedly run our attacks $N=100$ times for each evaluation and record the average. We use MSE to measure the attack performance as this metric is widely used for LDP-related evaluations. Let $\mu_t$ and $\sigma_t^2$ be the target mean and variance, respectively, and the estimated mean and variance in the $i$-th run be $\hat{\mu}_{t_i}$ and $\hat{\sigma}_{t_i}^2$. Formally, we measure
\begin{flalign*}
    \mathrm{MSE}_{\mu} = \frac{1}{N} \sum_{i=1}^{N} \left( \mu_{t} - \hat{\mu}_{t_i} \right)^2,~
    \mathrm{MSE}_{\sigma} = \frac{1}{N} \sum_{i=1}^{N} \left( \sigma_{t}^2 - \hat{\sigma}_{t_i}^2 \right)^2.
\end{flalign*}
Larger MSE implies worse attack performance since the results are farther from the target values. We did not use error bars because the standard deviation is typically very small and barely noticeable in the figures. 


\vspace{3pt}
\noindent\textbf{Parameter Setting.} 
We employ a set of default parameters for the evaluation. As shown in Table~\ref{parameter_test_mean_var}, we have a set of target means $\mu_{t_1}$, $\mu_{t_2}$, $\mu_{t_3}$ and a set of target variances $\sigma_{t_1}^2$, $\sigma_{t_2}^2$, $\sigma_{t_3}^2$ for each dataset, in which $\mu_{t_2}$ and $\sigma_{t_2}^2$ are set to be the true mean and true variance and the rest are randomly produced.  We evaluate two cases: 1) the attacker wants to control mean and variance simultaneously by choosing non-true-value targets; 2) she only attempts to manipulate one while keeping the other as is, i.e. the true value.

We choose the default estimated user number $n_e^*$ based on a common observation that online reports tend to publish round numbers instead of precise values \cite{erlingsson2014rappor}. We also use $\beta = \frac{m}{n+m}$ to denote the ratio of the number of fake users to the number of total users and set the default $\beta = 0.1$. 
$S^{(1)}$ and $S^{(2)}$ can be estimated by either getting access to recent published historical data or compromising a small group of users. Without loss of generality, we randomly select $1,000$ users in each dataset to gain the required information to simulate the public data source and they remain as genuine users in our experiment.

\subsection{Results}
Here we report the experimental results of the sufficient condition evaluation and impact of various factors on the attack.

\begin{figure}[!tb]
    \centering
    \hspace{-0.5cm}
    \begin{subfigure}{0.31\columnwidth}
        \centering
        \includegraphics[scale = 0.17]{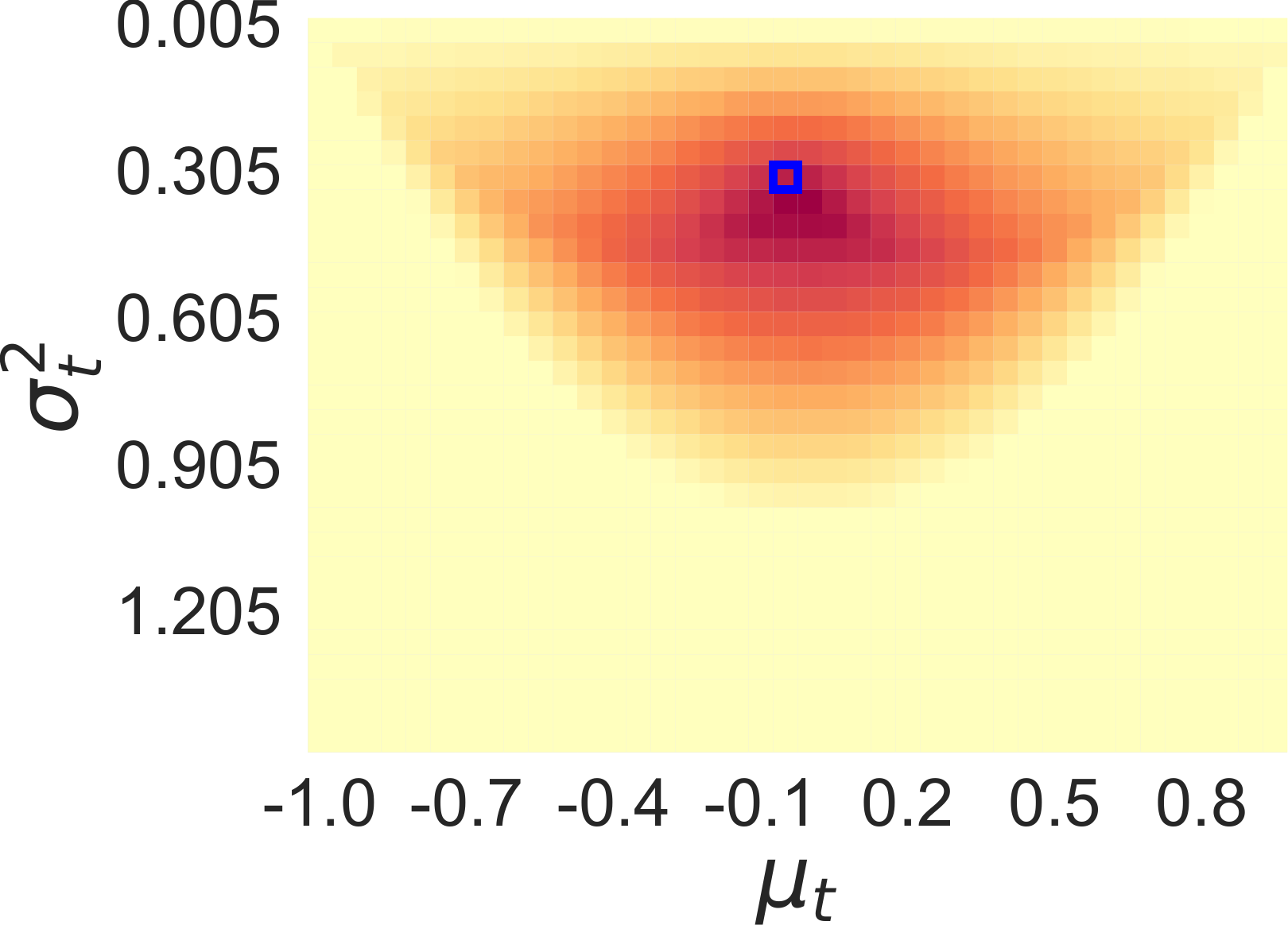}
        \caption{Taxi}
        \label{attack_bound_IPA_taxi}
    \end{subfigure}%
    \hspace{5pt}
    \begin{subfigure}{0.31\columnwidth}
        \centering
        \includegraphics[scale = 0.17]{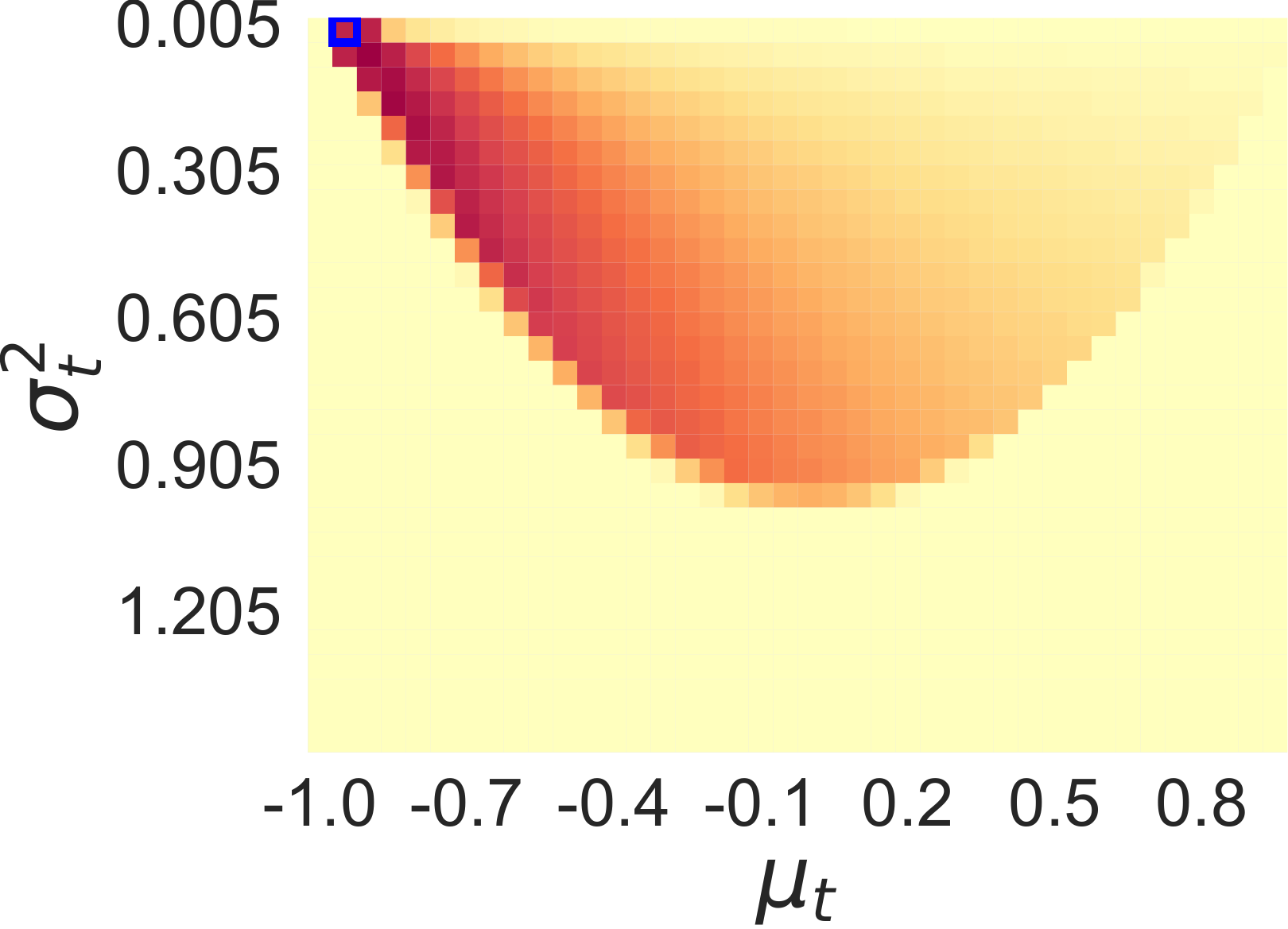}
        \caption{Income}
        \label{attack_bound_IPA_income}
    \end{subfigure}%
    \hspace{5pt}
    \begin{subfigure}{0.31\columnwidth}
        \centering
        \includegraphics[scale = 0.21]{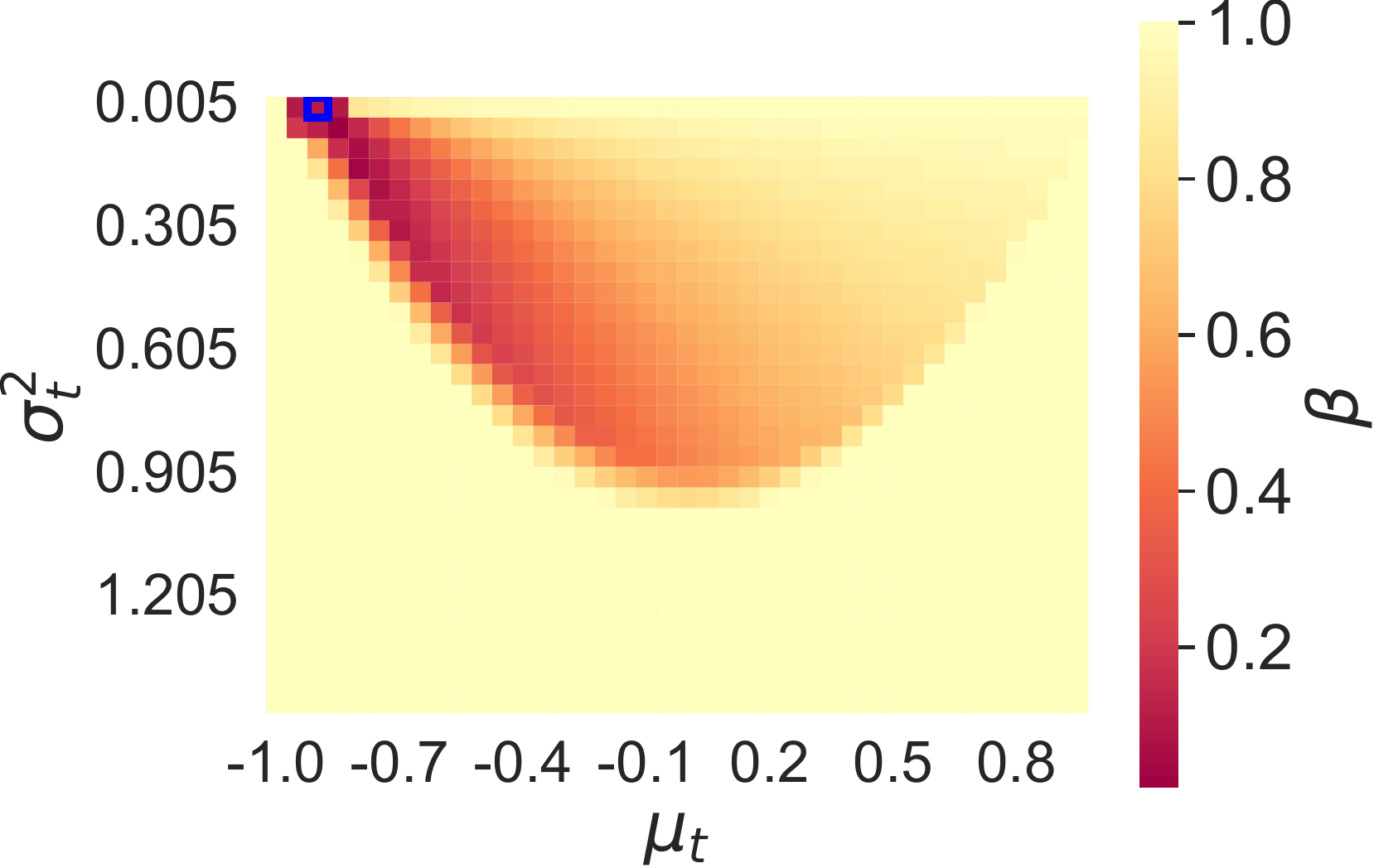}
        \caption{Retirement}
        \label{attack_bound_IPA_retirement}
    \end{subfigure}
    \caption{The minimum number of fake users needed for IPA with $\epsilon = 1$ and varying $\mu_t$ and $\sigma_t^2$ independent of SR and PM.
    }
    \label{attack_bound_IPA}
    \vspace{-10pt}
\end{figure}

\begin{figure}[!tb]
    \centering
    \hspace{-0.5cm}
    \begin{subfigure}{0.31\columnwidth}
        \centering
        \includegraphics[scale = 0.17]{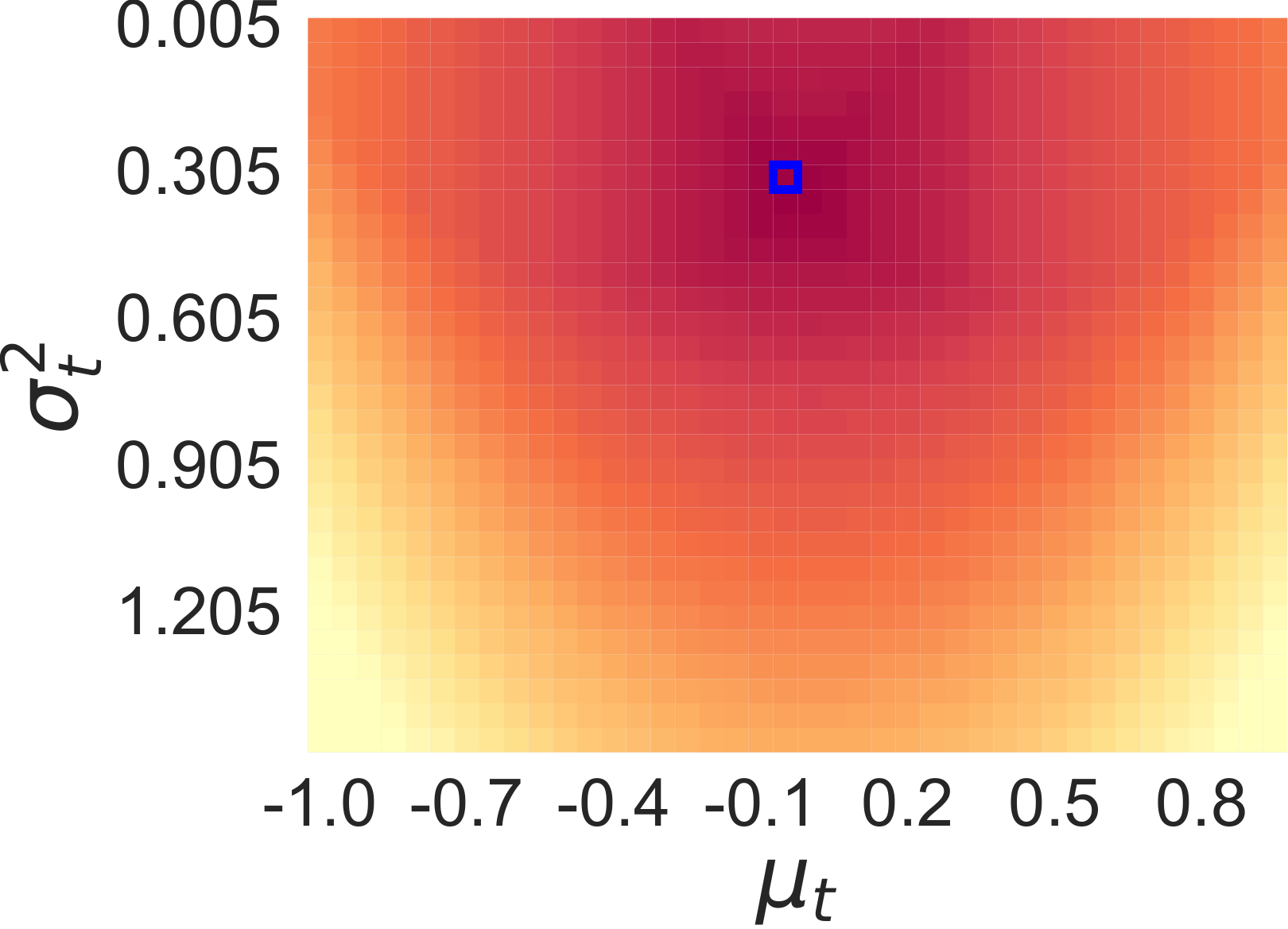}
        \caption{Taxi}
        \label{attack_bound_OPA_SR_taxi}
    \end{subfigure}%
    \hspace{5pt}
    \begin{subfigure}{0.31\columnwidth}
        \centering
        \includegraphics[scale = 0.17]{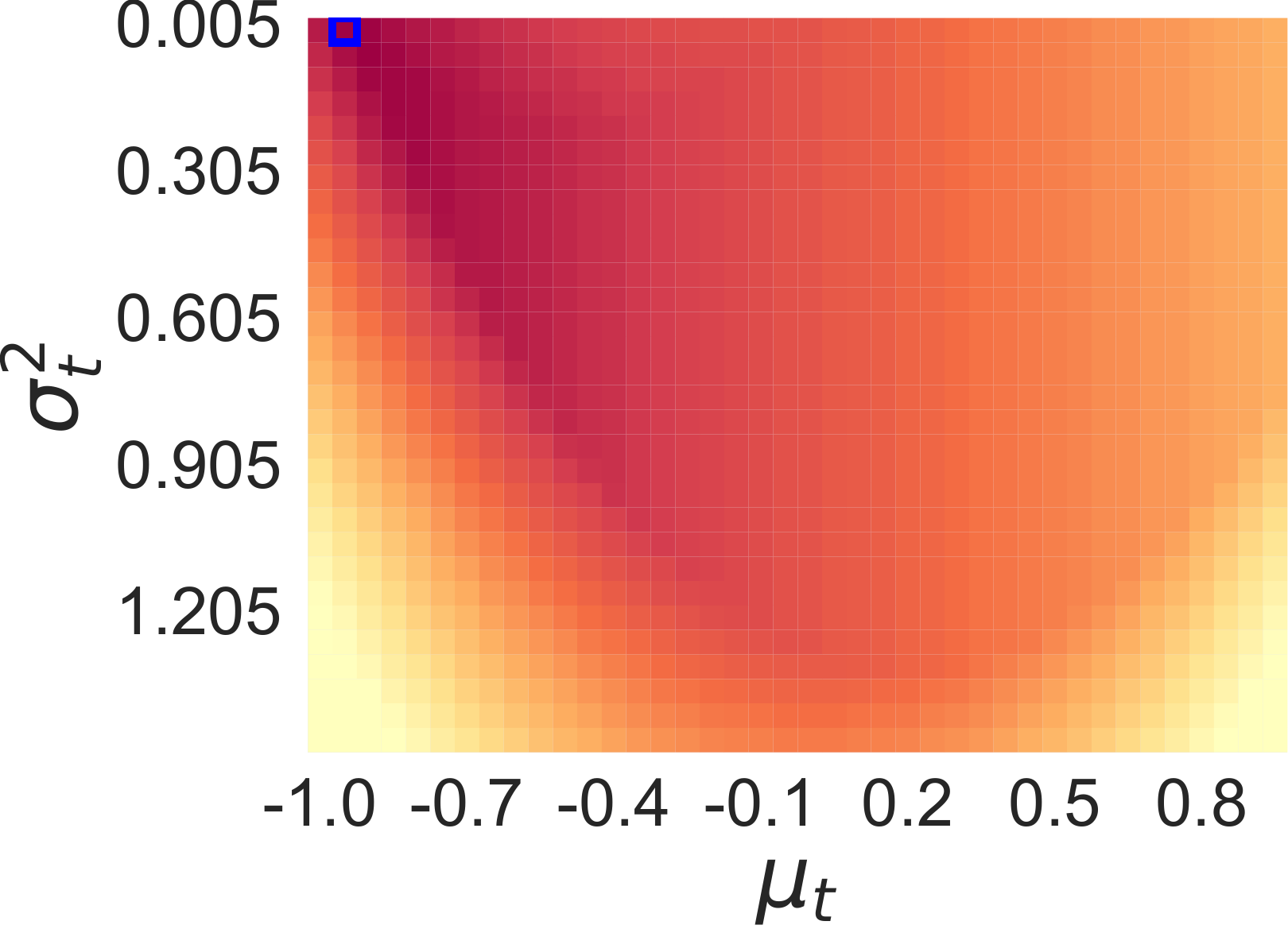}
        \caption{Income}
        \label{attack_bound_OPA_SR_income}
    \end{subfigure}%
    \hspace{5pt}
    \begin{subfigure}{0.31\columnwidth}
        \centering
        \includegraphics[scale = 0.21]{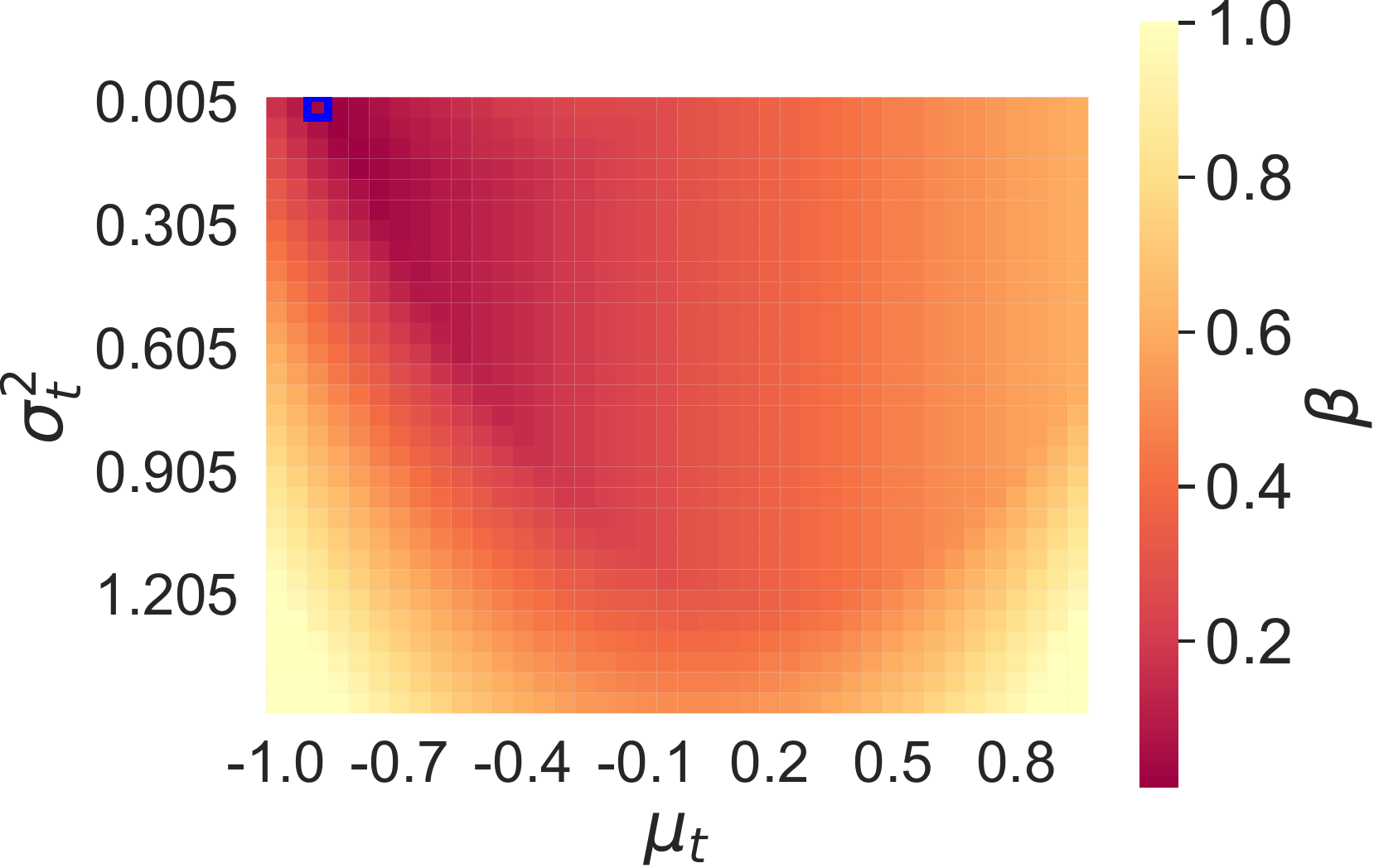}
        \caption{Retirement}
        \label{attack_bound_OPA_SR_retirement}
    \end{subfigure}
    \caption{The minimum number of fake users to launch \textsf{OPA} on SR with $\epsilon = 1$ varying $\mu_t$ and $\sigma_t^2$.
    }
    \label{attack_bound_OPA_SR}
    \vspace{-10pt}
\end{figure}

\begin{figure}[!tb]
    \centering
    \hspace{-0.5cm}
    \begin{subfigure}{0.31\columnwidth}
        \centering
        \includegraphics[scale = 0.17]{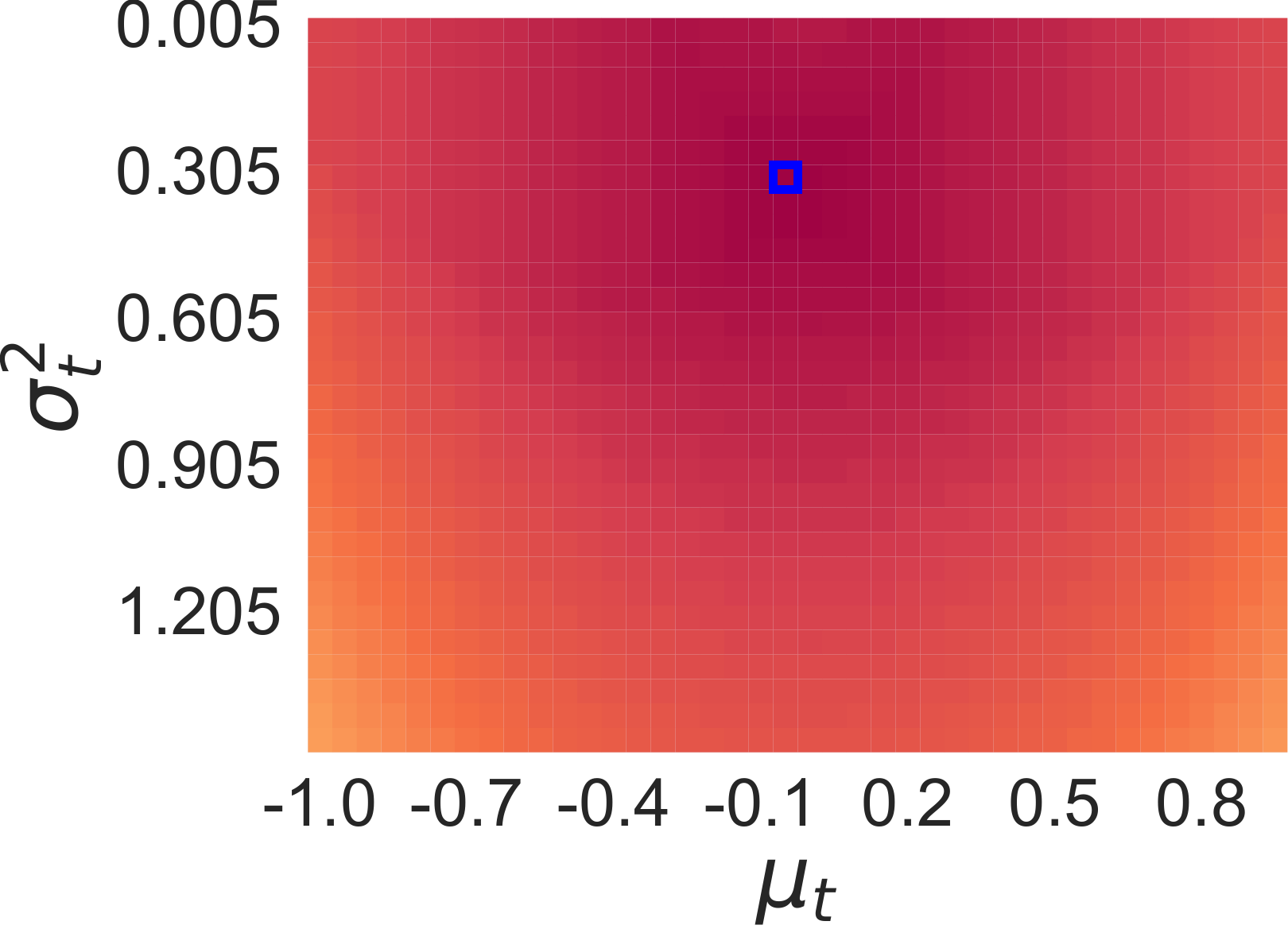}
        \caption{Taxi}
        \label{attack_bound_OPA_PM_taxi}
    \end{subfigure}%
    \hspace{5pt}
    \begin{subfigure}{0.31\columnwidth}
        \centering
        \includegraphics[scale = 0.17]{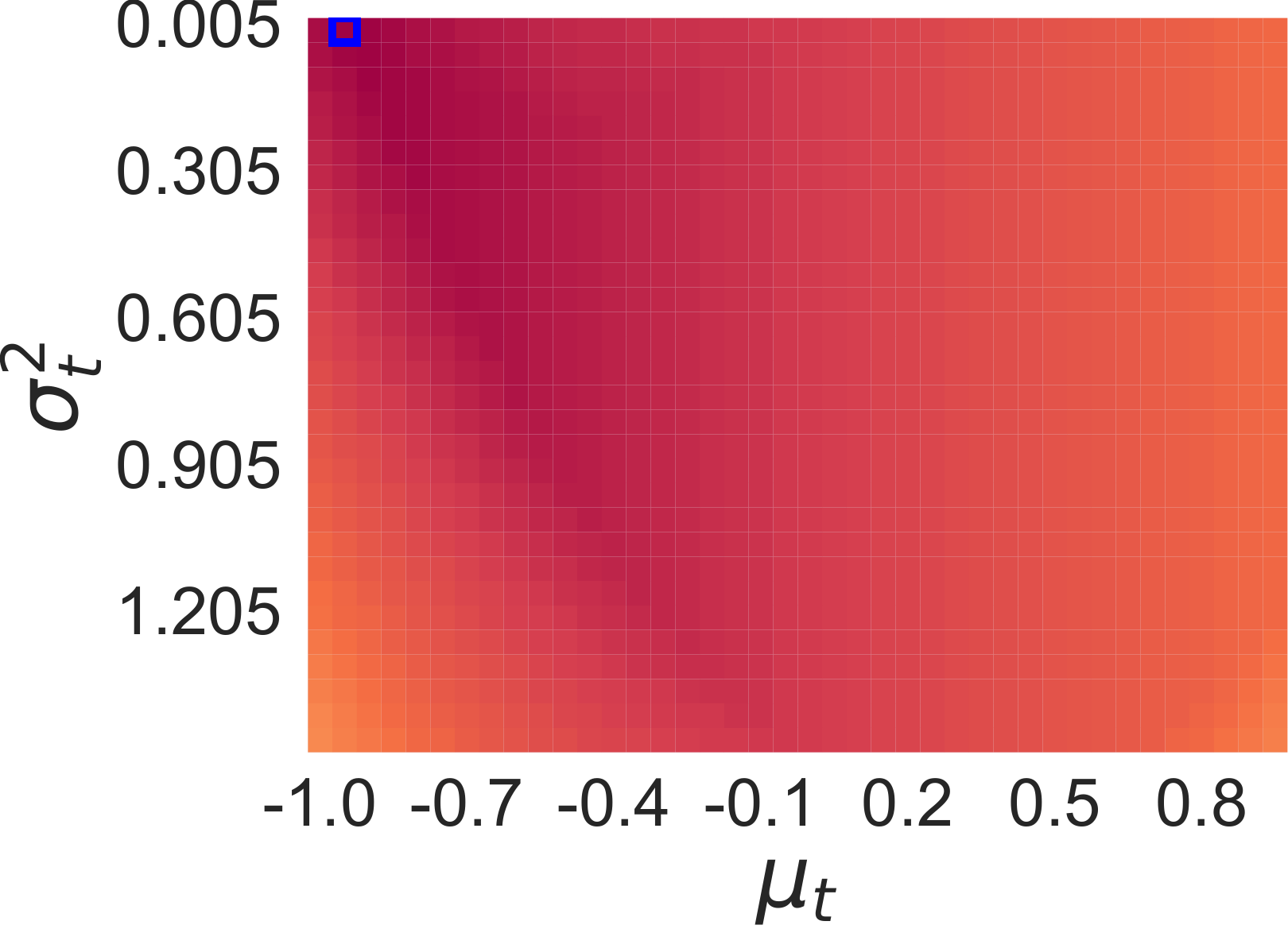}
        \caption{Income}
        \label{attack_bound_OPA_PM_income}
    \end{subfigure}%
    \hspace{5pt}
    \begin{subfigure}{0.31\columnwidth}
        \centering
        \includegraphics[scale = 0.21]{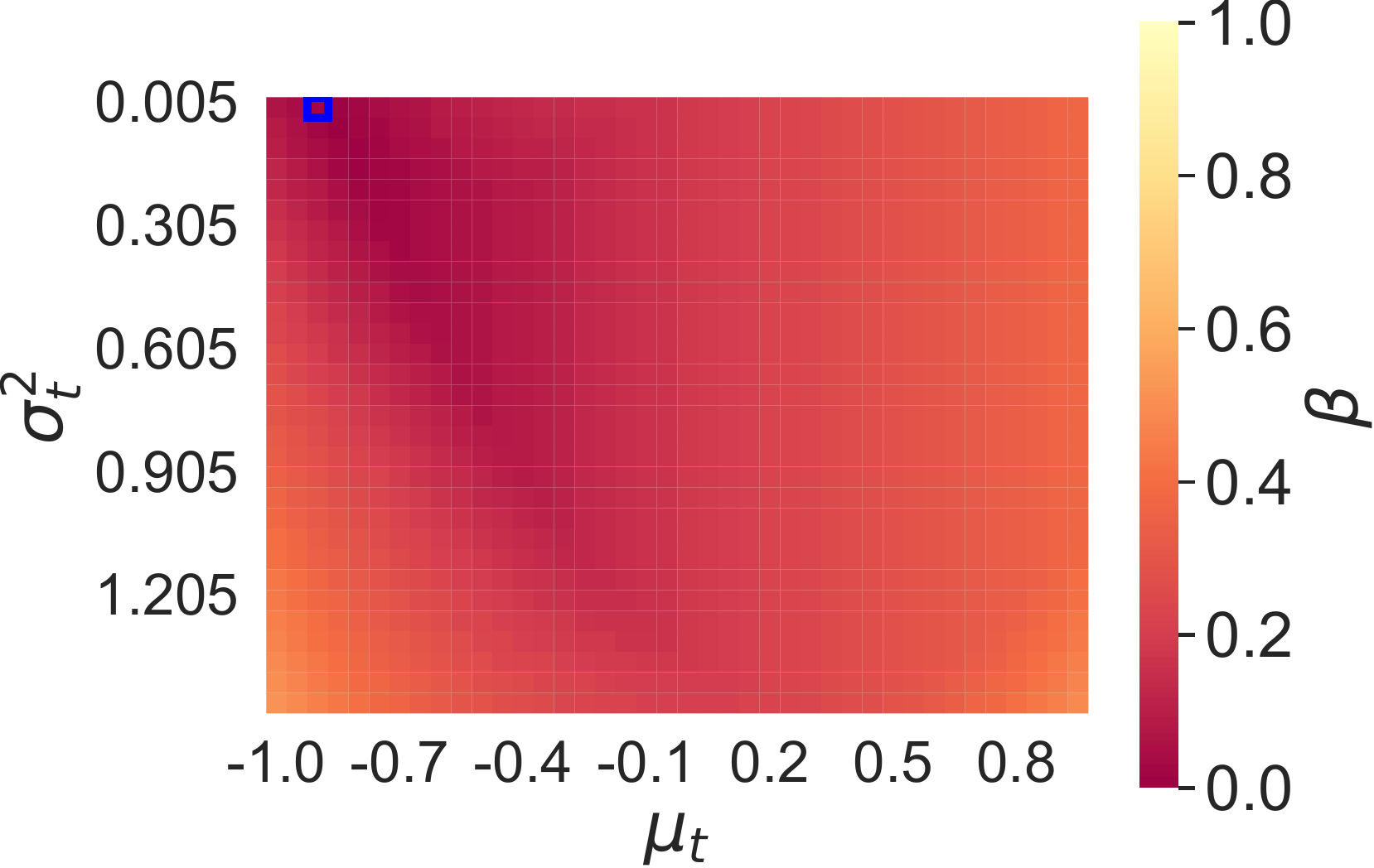}
        \caption{Retirement}
        \label{attack_bound_OPA_PM_retirement}
    \end{subfigure}
    \caption{The minimum number of fake users needed for \textsf{OPA} on PM with $\epsilon = 1$ and varying $\mu_t$ and $\sigma_t^2$. 
    }
    \label{attack_bound_OPA_PM}
    \vspace{-10pt}
\end{figure}

\subsubsection{Sufficient Condition for Achieving Target Values} \label{sec:experiment_sufficient_condition}
We first study the sufficient condition for \textsf{IPA} and \textsf{OPA} to achieve the $\mu_t$ and $\sigma_t^2$ on SR and PM. Specifically, given the attacker-estimated $S^{(1)}_e$, $S^{(2)}_e$ and $n_e$, we study the relationship between $\mu_t$ and $\sigma_t^2$ and the minimum number of fake users (measured by $\beta = \frac{m}{n+m}$) required to launch the attack. The results are shown in Figure~\ref{attack_bound_IPA}, \ref{attack_bound_OPA_SR} and Figure~\ref{attack_bound_OPA_PM}. Since the baseline \textsf{IPA} is independent of LDP, the required minimum number of fake users is the same for both SR and PM. Thus, we use one plot for each dataset regardless of the underlying LDP protocols in Figure \ref{attack_bound_IPA}. \textsf{OPA} leverages the underlying LDP mechanisms, i.e. SR in Figure \ref{attack_bound_OPA_SR} and PM in Figure \ref{attack_bound_OPA_PM}. We highlight the true mean and variance in the encoded range $[-1, 1]$ with a blue rectangle, and use the darker color to indicate that fewer fake users are needed for a successful attack. The light-color parts ($\beta = 1$) represent the extreme cases where the attack is infeasible given a pair of $\mu_t$ and $\sigma_t^2$. We can derive the following key observations: 
\begin{itemize}[leftmargin=*]
\item In general, more fake users allows the attacker in both baseline and our attack to set a target value that deviates farther from the ground true. $\beta$ keeps small when $\mu_t$ and $\sigma_t^2$ grow simultaneously. This is because when $\mu_t$ and $\sigma_t^2$ grow, both constraint terms for $\sum_i^m y_i$ and $\sum_i^m y_i^2$ (in Equations \eqref{input_attack_prac_formula_mean} and \eqref{input_attack_prac_formula_var}) increase together and thus a small number of large fake values can satisfy the constraints. 

\item \textsf{OPA} can reach more darker regions compared to the baseline attack, which means our attacker has a wider selection of target mean and variance pairs and can successfully accomplish the attack given the selected targets. This is because \textsf{OPA} attacker can inject fake values into the LDP output domain. Compared with \textsf{IPA}, the constraints of the fake values are relaxed by the factors in the LDP aggregation (Section~\ref{sec:sufficient_condition}), thus fewer fake users needed for \textsf{OPA}.

\item For \textsf{OPA}, the accessible region of target values in SR is smaller than in PM. This is because the factors in SR aggregation are smaller than those in PM, leading to a set of tighter constraints for fake values and thus the target values spreading over a smaller region.
\end{itemize}

\begin{figure}[!tb]
    \centering
    \includegraphics[scale = 0.16]{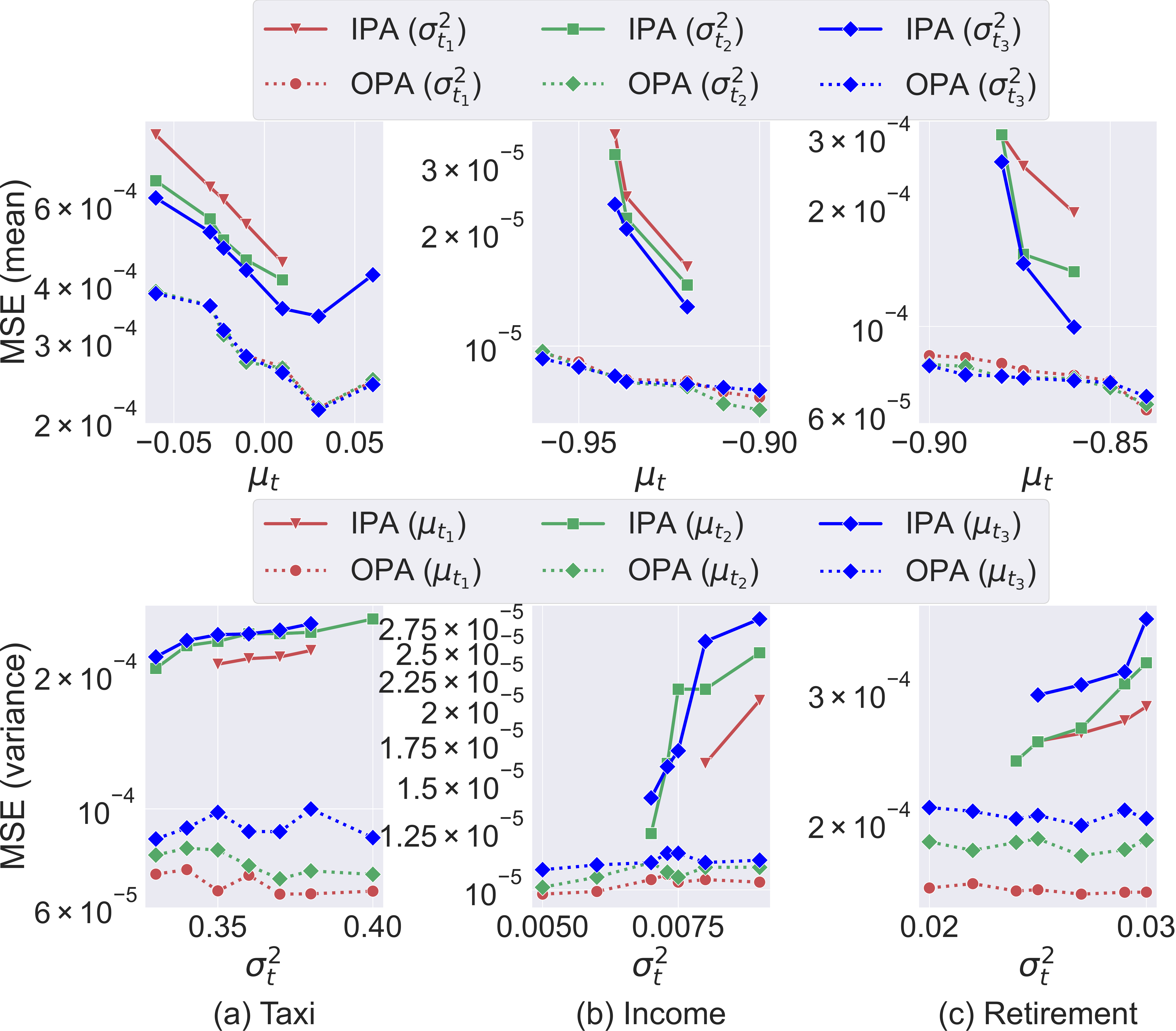}
    \caption{Attack error in SR, varying $\mu_t$ and $\sigma_t^2$. Set $\epsilon = 1$, $m = 0.1n$, $S_{e}^{(1)} = S_{e}^{*(1)}$ and $S_{e}^{(2)} = S_{e}^{*(2)}$, $n_e = n_e^*$.}
    \label{MSE_target_mean_var_SR}
    \vspace{-13pt}
\end{figure}

\vspace{-15pt}
\subsubsection{Impact of Target Values}
Figure \ref{MSE_target_mean_var_SR} and \ref{MSE_target_mean_var_PM} depict the attack performance against SR and PM with varying target mean and variance. We observe that 
\begin{itemize}[leftmargin=*]
    \item \textsf{OPA} outperforms  \textsf{IPA} when attacking the mean, since \textsf{OPA} circumvents LDP perturbation on the fake user side, resulting in less attack error. For example, on \textit{Taxi}, \textsf{OPA} reduces MSE of \textsf{IPA} by about 30\% on SR and PM respectively.
    
    \item \textsf{OPA} also outperforms \textsf{IPA} when attacking the variance with all three datasets. As the target variance grows, the MSE of the baseline \textsf{IPA} against both SR and PM increases. This is because when the target increases, the bias in the attack error grows in the SR mechanism, and both variance and bias increase in PM. From Figure \ref{MSE_target_mean_var_SR}, we observe a much reduced MSE with \textsf{OPA}, e.g., at most 50\% error reduction compared to the baseline against SR by controlling the same number of fake users.
    
    \item For some target mean/variance, the default $\beta$ is inadequate to launch the baseline attack. Thus, no corresponding MSE is recorded within the given range in the figures. The result again exhibits the advantage of \textsf{OPA} over the baseline when the attacker can only control a limited number of users. 
    

    \item Given a target $\mu_t$, the MSE of \textsf{IPA} becomes smaller (larger) on SR (on PM) as $\sigma_t^2$ grows. This is because a larger $\sigma_t^2$ provides a larger $\sum_{i} y_i^2$, thus leading to small error in SR and larger error in PM. However, \textsf{OPA} errors are not affected by $\sum_{i} y_i^2$ and are close  under different $\sigma_t^2$.
    
   \item The MSE of \textsf{IPA}$(\sigma_{t_3}^2)$ and \textsf{OPA}$(\sigma_{t_3}^2)$ on \textit{Taxi} reduces first then increases as  $\mu_t$ grows. This is because both MSE of \textsf{IPA} and \textsf{OPA} are a quadratic function of $\mu_t$ (see error analysis in Section \ref{sec:error_analysis}). On \textit{Income} and \textit{Retirement}, the range of $\mu_t$ only covers the reduction part of quadratic MSE. Figure \ref{MSE_target_mean_var_SR} and \ref{MSE_target_mean_var_PM} thus do not show a parabola shape.
\end{itemize}
\vspace{-5pt}

To have a more intuitive comparison with the target, we observe that the \textsf{OPA} result deviates from the target mean by as small as $0.22\%$  in the figures versus $0.58\%$ with \textsf{IPA} and a maximum error reduction from $17\%$ with \textsf{IPA} to only $3.7\%$ with \textsf{OPA} given a target variance.

\begin{figure}[!tb]
    \centering
    \includegraphics[scale = 0.16]{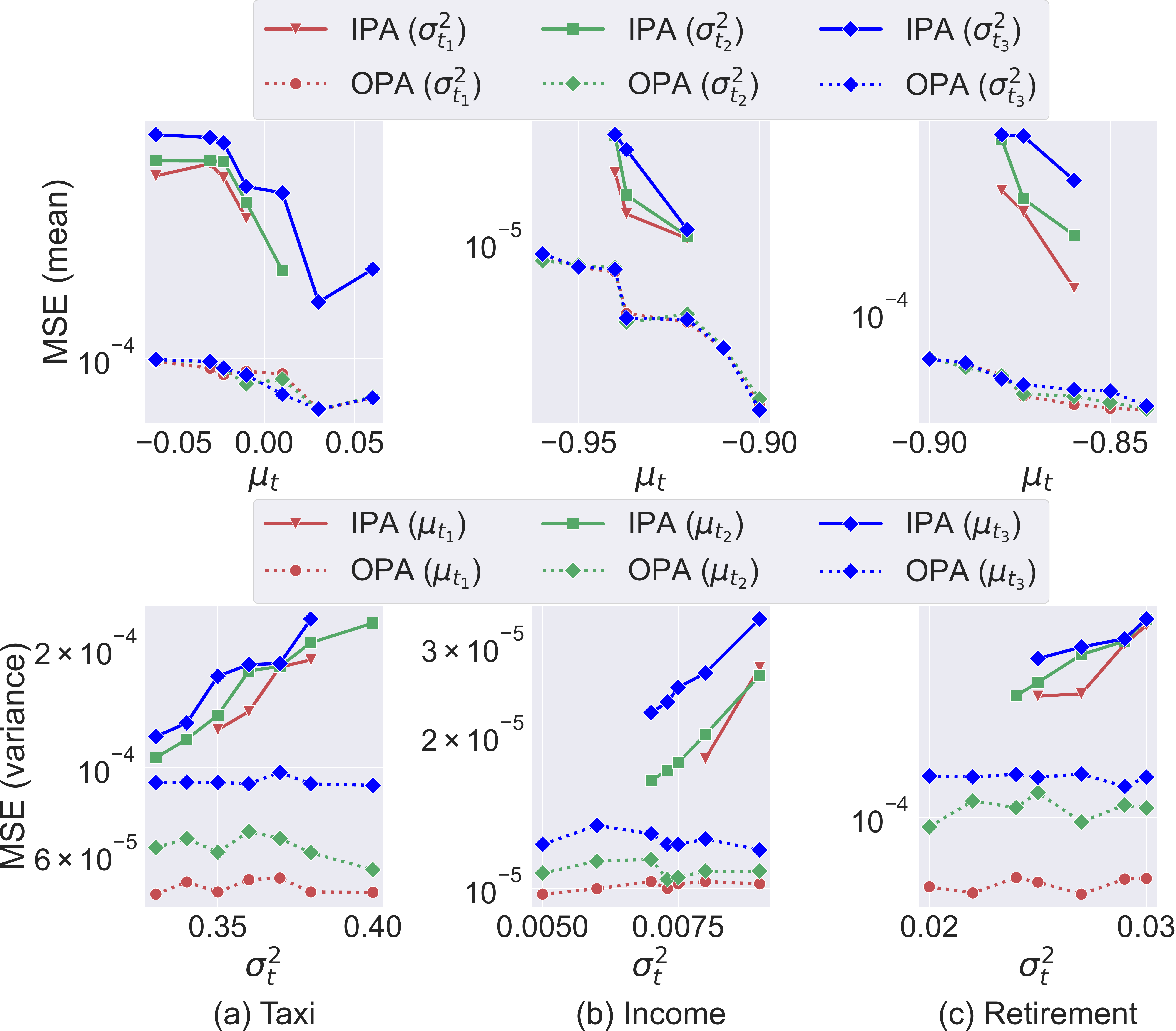}
    \caption{Attack error in PM, varying $\mu_t$ and $\sigma_t^2$. Set $\epsilon = 1$, $m = 0.1n$, $S_{e}^{(1)} = S_{e}^{*(1)}$ and $S_{e}^{(2)} = S_{e}^{*(2)}$, $n_e = n_e^*$.}
    \label{MSE_target_mean_var_PM}
     \vspace{-15pt}
\end{figure}



\vspace{-5pt}
\subsubsection{Impact of Attacker's Estimation}\label{section_impact_Xe_XSe}

We study how the relevant estimation of the attacker affects the results next. In general, more accurate estimation pushes the attack result further to the target.

\noindent\textbf{Impact of $S_{e}^{(1)}$ and $S_{e}^{(2)}$.} 
We discuss the impacts of $S_{e}^{(1)}$ (Figure~\ref{MSE_target_mean_var_Xe_composition}) and $S_{e}^{(2)}$ (Figure~\ref{MSE_target_mean_var_XSe_composition}) on attack performance. Note that the intervals in the $x$-axis of both figures are different across the datasets because the respective $S^{(1)}$ and $S^{(2)}$ are distinct. Below are some key observations.


\begin{figure}[!tb]
    \centering
    \includegraphics[scale = 0.16]{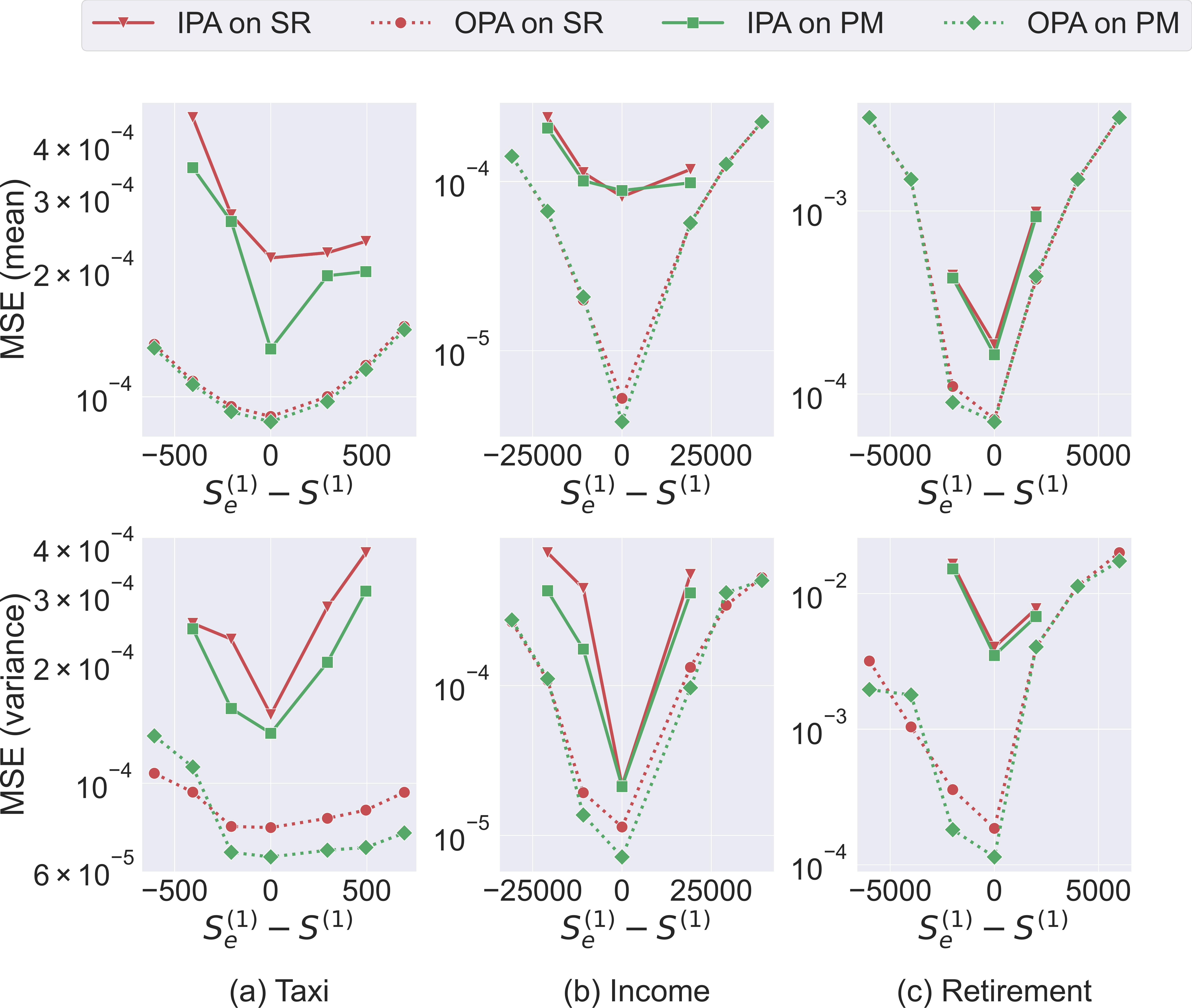}
    \caption{Attack error in SR and PM, varying $S_{e}^{(1)}$. Target values are $\mu_{t_1}$ and $\sigma^2_{t_1}$, $\epsilon = 1$, $m = 0.1n$, $S_{e}^{(2)} = S_{e}^{*(2)}$, $n_e = n_e^*$.}
    \label{MSE_target_mean_var_Xe_composition}
    \vspace{-10pt}
\end{figure}

\begin{itemize}[leftmargin=*]
\item Figures \ref{MSE_target_mean_var_Xe_composition} and \ref{MSE_target_mean_var_XSe_composition} show that \textsf{OPA} outperforms \textsf{IPA} when attacking mean and variance across all datasets.

\item The more accurate the estimation of $S_{e}^{(1)}$ is, the smaller the attack error is as shown in Figure~\ref{MSE_target_mean_var_Xe_composition}. The MSEs of both \textsf{OPA} and baseline are approximately symmetrical about the estimation error $S^{(1)} - S_{e}^{(1)} = 0$, because the term $(S^{(1)} - S_{e}^{(1)})^2$ in the attack error computation increases when $S_{e}^{(1)}$ moves farther away from $S^{(1)}$ (see Table~\ref{tab:attack_error}). 

\item In Figure~\ref{MSE_target_mean_var_XSe_composition}, we observe the similar impact of $S_{e}^{(2)}$ on variance. However, the impact on the mean shows a different trend. With increased $S_{e}^{(2)}$, the MSE of \textsf{IPA} against SR grows but reduces against PM. This is reasonable because when $S_{e}^{(2)}$ grows, the sum of the squared fake values $\sum_{i=1}^{m} y_i^2$ decreases, leading to a large error in SR and a small error in PM (Lemma~\ref{mechanism_error}). The MSE of \textsf{OPA} remains almost constant with varying $S_{e}^{(2)}$ because the attacker crafts fake values in $g_1$ to manipulate the mean, which is not affected by $S^{(2)}$.

\item Due to the constrained sufficient condition for launching \textsf{IPA}, the default $\beta$ is insufficient when both $S^{(1)}_e$ and $S^{(2)}_e$ are far away from $S^{(1)}$ and $S^{(2)}$ respectively, thus no corresponding MSE recorded in Figure~\ref{MSE_target_mean_var_Xe_composition} and \ref{MSE_target_mean_var_XSe_composition}.
\end{itemize}

The recovered value on the server gets closer to the target with a smaller estimation error. In practice, the interpretation of the attack efficacy is subject to the attacker and may vary depending on applications. For example, we in the experiment observe that when an \textsf{OPA} attacker sets the target mean to $0.06$ and target variance to $0.33$ on \textit{Taxi} with 18\% estimation error about $S_{e}^{(1)}$, the recovered mean and variance by SR are $0.056$ (about 3\% from the target) and $0.329$ (about $0.9\%$ from the target) respectively, which may still be considered a success by the attacker. 





\vspace{3pt}
\noindent\textbf{Impact of $n_e$.}  \label{section_impact_ne} We can derive a similar conclusion regarding the relationship between estimation error $n_e-n$ and attack error in Figure~\ref{MSE_target_mean_var_ne_composition}. In general, less estimation error results in better attack performance. The MSEs of \textsf{IPA} and \textsf{OPA} are almost symmetric about $n_e-n$. Besides, \textsf{OPA} performs better in both SR and PM due to more LDP noise introduced in \textsf{IPA}.

    




Similar to estimating $S_{e}^{(1)}$ as analyzed previously, the attacker may not be able to get an accurate estimate of the user number in practice, which will cause the recovered statistics to deviate from the intended values. Again, the interpretation of the deviation here is subject to the attacker's objective. Our experiment reports that given the target mean -0.86, variance 0.02, $10\%$ estimation error of user number ($88,000$ estimated vs. $97,220$ actual) on \textit{Retirement}, a server using SR mechanism can recover the mean and variance to -0.861 (about $0.36\%$ accuracy loss) and 0.0202 (about $4\%$ accuracy loss) respectively, under our output poisoning attack. 

\begin{figure}[!tb]
    \centering
    \includegraphics[scale = 0.16]{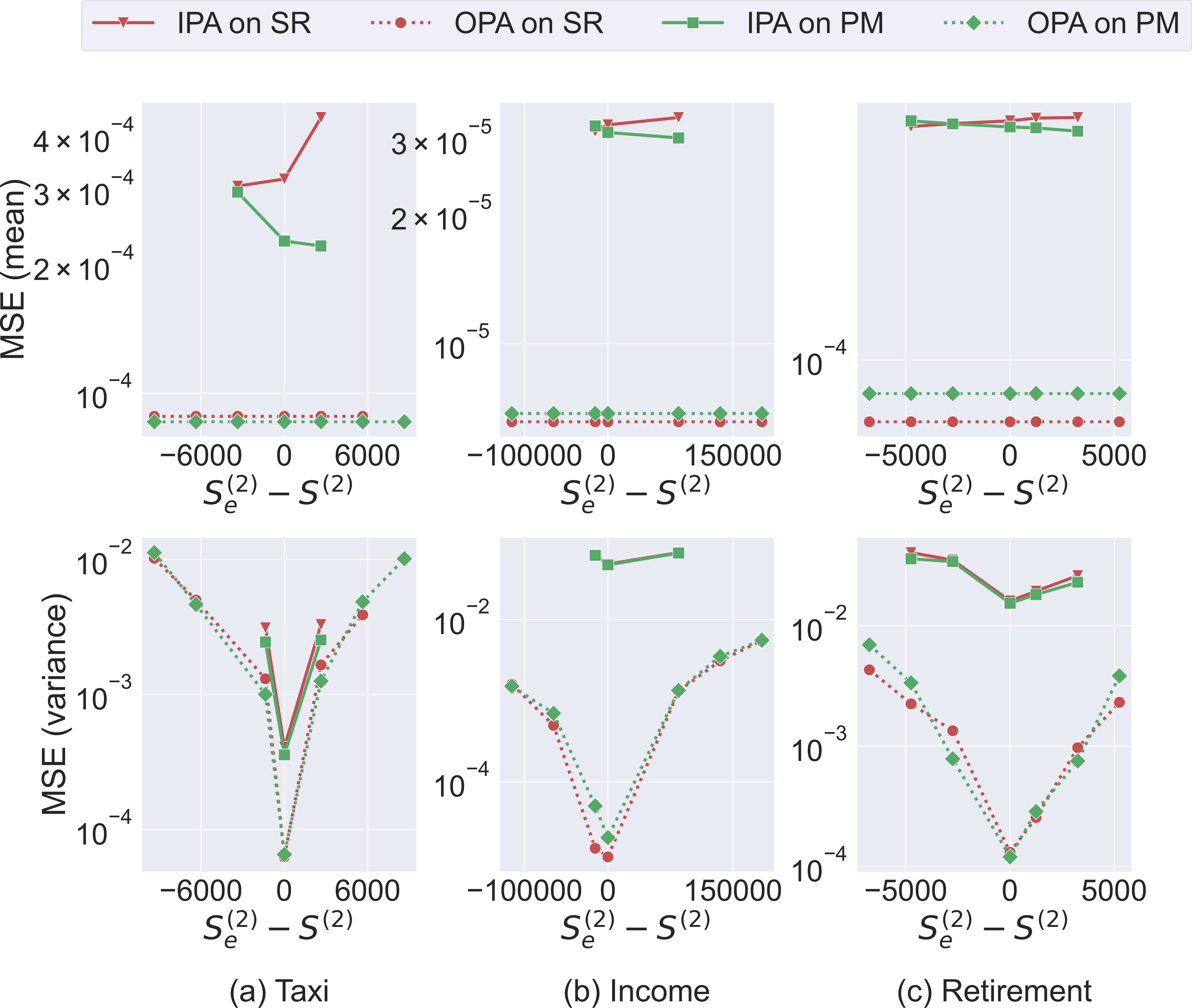}
    \caption{Attack error in SR and PM, varying $S_{e}^{(2)}$. Target values are $\mu_{t_1}$ and $\sigma^2_{t_1}$, $\epsilon = 1$, $m = 0.1n$, $S_{e}^{(1)} = S_{e}^{*(1)}$, $n_e = n_e^*$.}
    \label{MSE_target_mean_var_XSe_composition}
    \vspace{-10pt}
\end{figure}

\begin{figure}[!tb]
    \centering
    \includegraphics[scale = 0.16]{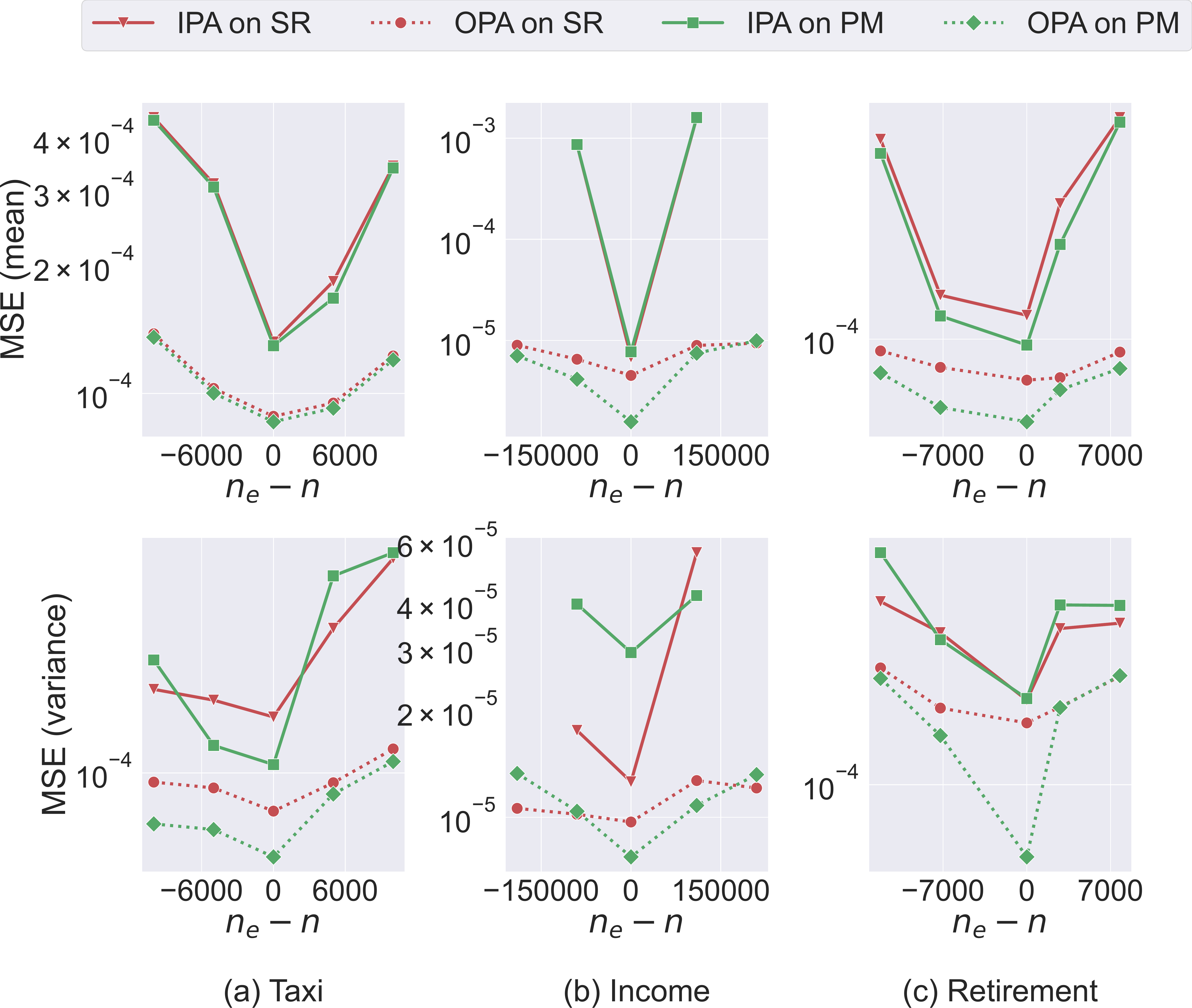}
    \caption{Attack error in SR, varying $n_e$. Target values are $\mu_{t_1}$ and $\sigma^2_{t_1}$, $\epsilon = 1$, $m = 0.1n$, $S_{e}^{(1)} = S_{e}^{*(1)}$, $S_{e}^{(2)} = S_{e}^{*(2)}$.}
    \label{MSE_target_mean_var_ne_composition}
    \vspace{-12pt}
\end{figure}





\begin{figure}[!tb]
    \centering
    \includegraphics[scale = 0.16]{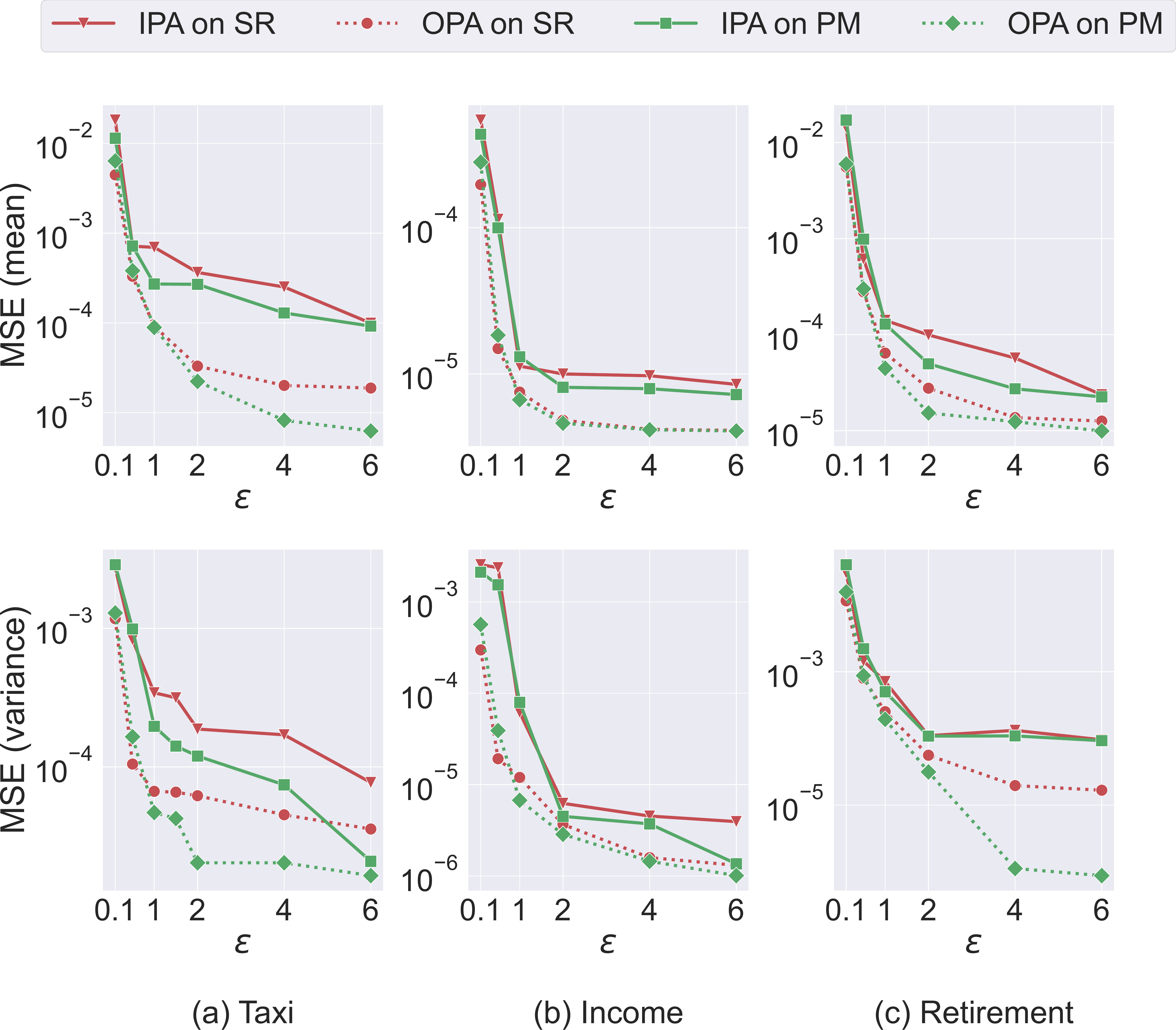}
    \caption{Attack error in SR and PM, varying $\epsilon$. Target values are $\mu_{t_1}$ and $\sigma^2_{t_1}$, $m = 0.1n$, $S_{e}^{(1)} = S_{e}^{*(1)}$, $S_{e}^{(2)} = S_{e}^{*(2)}$, $n_e = n_e^*$.}
    \label{MSE_target_mean_var_eps_combination}
    \vspace{-10pt}
\end{figure}

\vspace{-5pt}
\subsubsection{Other Factors}
We further study the impact of the $\epsilon$, $\beta$ and time cost for the attack. The results confirm the better performance of \textsf{OPA}, the security-privacy consistency and the attack efficiency.

\vspace{2pt}
\noindent\textbf{Impact of $\epsilon$.} 
Figure \ref{MSE_target_mean_var_eps_combination} shows how $\epsilon$ affects attack performance. We empirically confirm the privacy-security consistency with our attacks, which complements theoretical analysis in Section~\ref{security_privacy_consistency}. Overall, the attack performance improves with large $\epsilon$. For the attack on mean and variance, \textsf{OPA} exceeds \textsf{IPA} under all selected $\epsilon$ since \textsf{OPA} is partially influenced by LDP obfuscation. As $\epsilon$ increases, the attack error in PM is smaller than that in SR because PM adds less LDP noise (see Lemma \ref{mechanism_error}). However, even when $\epsilon=0.1$, the \textsf{OPA} results are still close to the targets, i.e. the accuracy loss is less than $2.2\%$ for the target mean and $4\%$ for the target variance on \textit{Retirement}.

\vspace{2pt}
\noindent\textbf{Impact of $\beta$.}  \label{section_impact_beta}
Figure~\ref{MSE_target_mean_var_m_composition} shows the impact of the number of fake users. In reality, an appropriate $\beta$ depends on the application and the resources accessible to the attacker. We varied $\beta$ from 0.05 to 0.8 to comprehensively evaluate the attack by covering extreme cases. 
We observe that the attack errors of both baseline and \textsf{OPA} on mean and variance reduce as $\beta$ grows. This is because the number of fake users is in the denominator of the error calculation for both \textsf{IPA} and \textsf{OPA}. However, \textsf{OPA} performs much better since \textsf{OPA} is only partially affected by LDP noise. For the default $\beta = 0.1$ on \textit{Retirement}, the accuracy loss of \textsf{OPA} toward target mean and variance is only about $0.03\%$ and $0.93\%$ respectively.


\vspace{2pt}
\noindent\textbf{Time Efficiency.} 
If the required estimates have been done in advance, the time cost for the attacks only depends on the calculation of the fake values to be injected. Per our experiment, performing \textsf{OPA} is faster (i.e. less than 0.2 seconds) than \textsf{IPA} (about 10 seconds). This is because the  \textsf{OPA} can directly produce fake values given the explicit expression of $\Psi(y)$ and $\Psi(y^2)$ while the optimization problem in Equation~\eqref{fake_value_optimization} needs to be solved for \textsf{IPA}. We adopted PyTorch in our experiment. Other optimizers may lead to different results for \textsf{IPA}.

\begin{figure}[!tb]
    \centering
    \includegraphics[scale = 0.16]{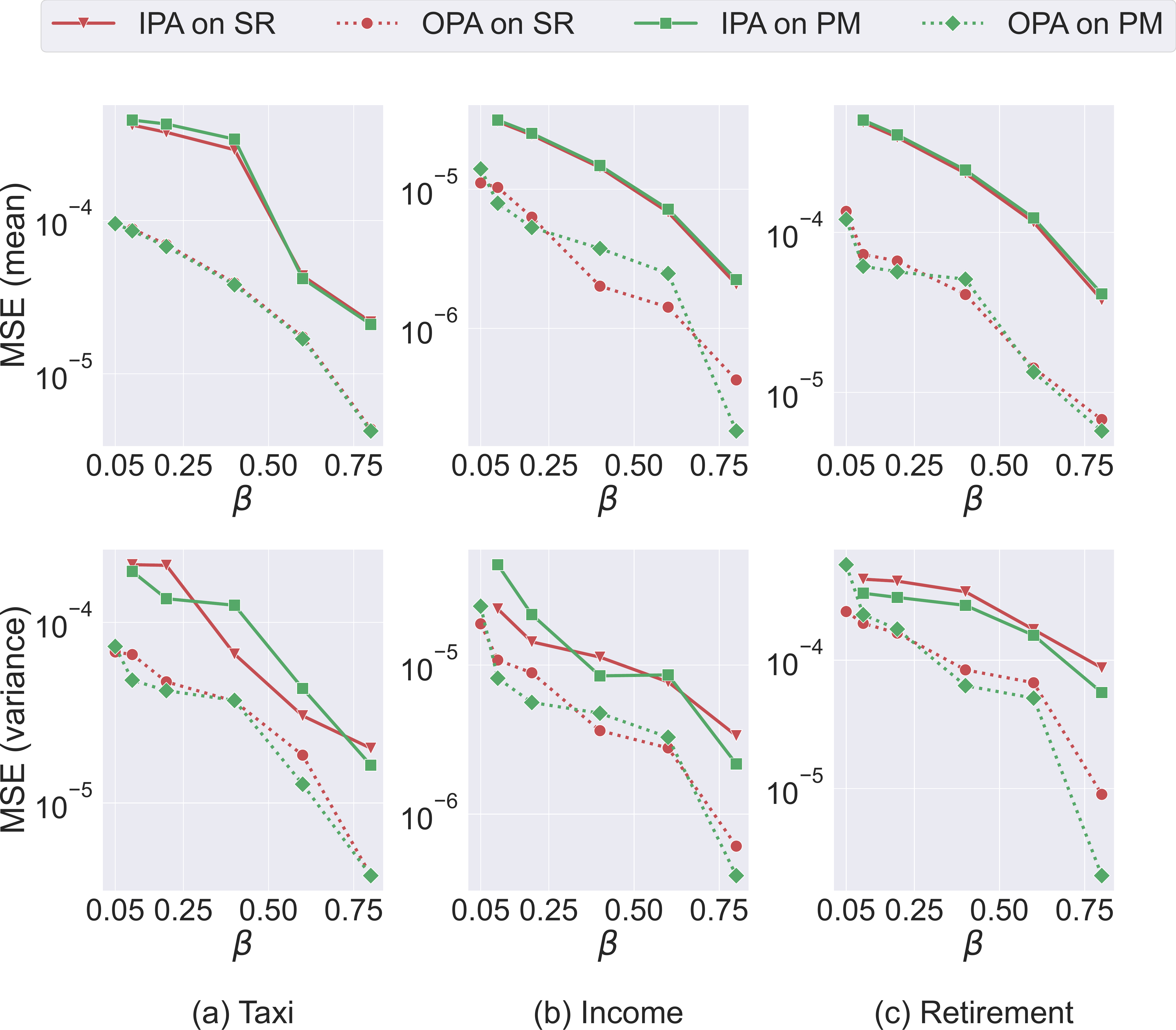}
    \caption{Attack error in SR and PM, varying $\beta$. Target values are $\mu_{t_1}$ and $\sigma^2_{t_1}$, $\epsilon = 1$, $S_{e}^{(1)} = S_{e}^{*(1)}$, $S_{e}^{(2)} = S_{e}^{*(2)}$, $n_e = n_e^*$}
    \label{MSE_target_mean_var_m_composition}
    \vspace{-10pt}
\end{figure}

\section{Mitigation}\label{mitigation}
There are two types of methods proposed in prior research to defend against the data poisoning attack, i.e. normalization \cite{cao2019data} and fake user detection \cite{cao2019data, wu2021poisoning}. The idea of normalizing the LDP estimates to reduce attack effectiveness is based on the na\"ive observation of frequency consistency \cite{wang2020locally}, which is not applicable to mean/variance estimation. Detecting fake users is possible if the fake values are distinguishable from normal values. We present a countermeasure that can tolerate the skewed values and recover the true estimate. 
Contrary to prior work that assumes the server knows user values and the fraction of genuine users as ground truth \cite{wu2021poisoning}, we consider these conditions are difficult to satisfy in reality and our defense does not rely on them.

\vspace{-5pt}
\subsection{Clustering-based Defense} \label{section_mitigation}
We adopt a sampling-then-clustering method to defend against our output poisoning attack, inspired by \cite{cao2021provably} in the context of federated learning. The main idea is to sample multiple subsets of users and then use a clustering algorithm, such as $k$-means, to form two clusters. The cluster that contains more subsets will be used for estimation, while the other will be discarded. The intuition is that since the majority of users are genuine, the mean of most subsets should be similar and close to the true mean. 
More precisely, we first define a sampling rate $r\ (0 < r < 1)$ and derive all $n_1 \choose r n_1$ possible subsets in $g_1$ and all $n_2 \choose r n_2$ possible subsets in $g_2$ without replacement, where $n_i$ is the number of users in $g_i$.
Next, we estimate $\mathbb{E}(x)$ and $\mathbb{E}(x^2)$ for each subset and feed them into $k$-means for $g_1$ and $g_2$. By identifying the benign clusters in $g_1$ and $g_2$, we use their respective cluster centers as $\mathbb{E}(x)$ or $\mathbb{E}(x^2)$ for mean and variance estimation. Our defense could be further optimized by leveraging advanced fault tolerance results \cite{gupta2020fault, liu2021approximate}, which will be left as an important future work.

\begin{figure}[!tb]
    \centering
    \includegraphics[scale=0.19]{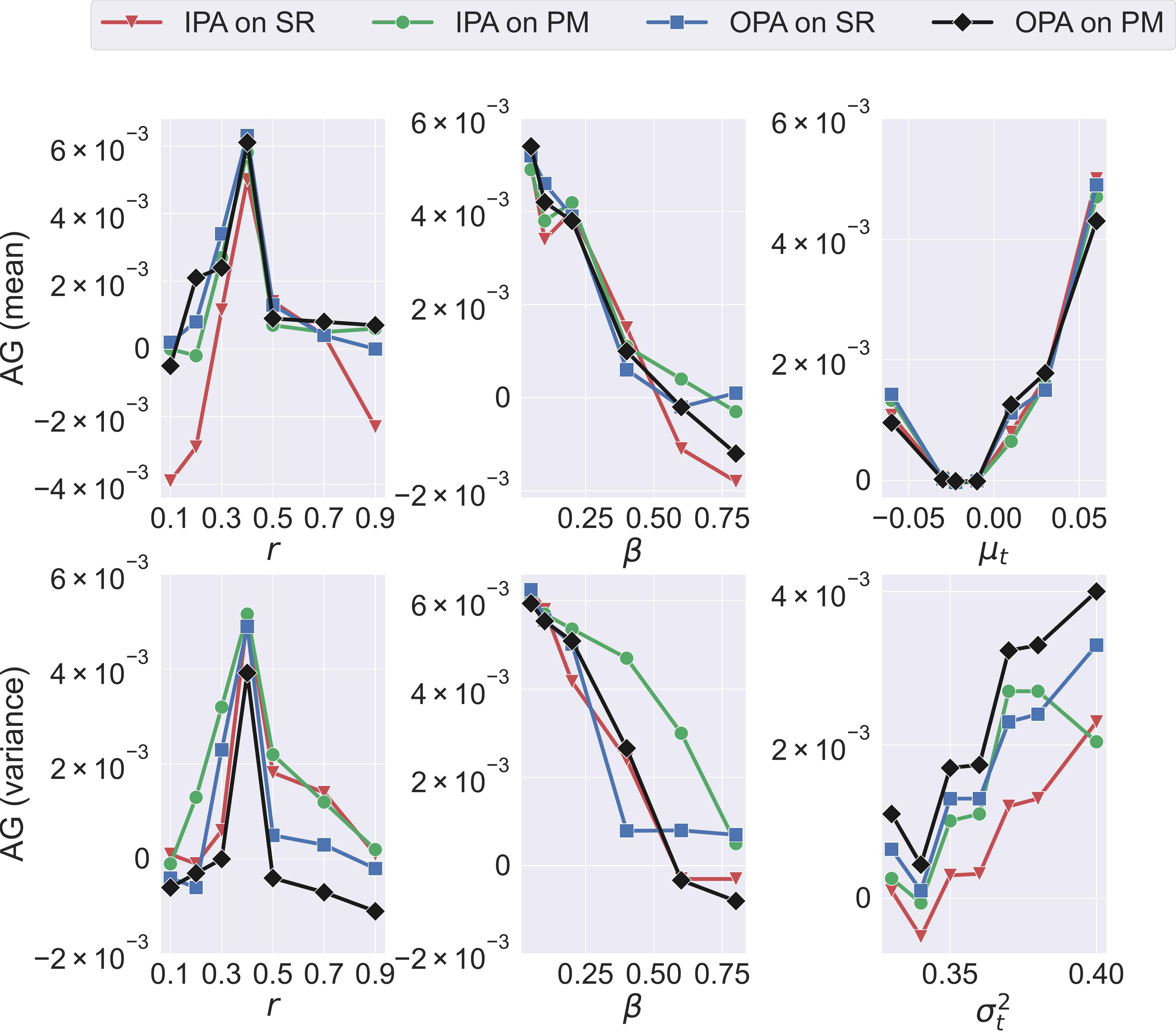}
    \caption{Defense evaluation results. True $\mu = -0.22$ and $\sigma^2 = 0.34$. The default target values are $\mu_{t_1}$ and $\sigma^2_{t_1}$, $\epsilon=1$, $\beta=0.1$, $S_{e}^{(1)} = S_{e}^{*(1)}$, $S_{e}^{(2)} = S_{e}^{*(2)}$, $n_e = n_e^*$.}
    \label{countermeasure_result}
    \vspace{-16pt}
\end{figure}

\vspace{3pt}
\noindent\textbf{Results.} We evaluate the defense performance by measuring the \textit{accuracy gain} (AG) after applying our mitigation.
AG measures the estimation error change before and after the proposed defense as $MSE_{before} - MSE_{after}$. A positive AG value indicates the defense helps the LDP regain the data accuracy after the attack while a negative one represents the ineffectiveness of the mitigation. Thus, the larger AG is, the more effective our defense is against the attack. In practice, the number of subsets $n_i \choose r n_i$ could be too large for an efficient response. We observe that further randomly choosing a small portion (e.g. 5,000 subsets in our experiment) still results in an effective defense.

We use the dataset \textit{Taxi} for result demonstration in Figure~\ref{countermeasure_result}.  It shows that the choice of $r$ will affect the performance (the leftmost figures). A small sampling rate will lead to a small subset, which in turn introduces more bias. On the other hand, a large $r$ results in fewer subsets, but each subset may contain more fake users, thus being subject to manipulation. This may explain why negative AG is observed when $r$ is chosen either too small or large. 
We empirically find an optimal $r$ for the rest of the evaluation. The defense performance is also related to the ratio $\beta$ (the middle figures). When the fraction of fake users is small, our defense is very effective.
When the target value is far from the true value, it is easier to identify the fake values as outliers with increased AG (the rightmost figures). 
Therefore, our method is preferred when the fraction of fake users is small and the attacker wants to skew the mean and variance substantially, which is reasonable for most attacking scenarios. 
 

\vspace{-5pt}
\subsection{Other Defenses}
For an attack that aims to falsify the input of a local LDP instance, such as the baseline \textsf{IPA}, an authenticated data feed system may assist in reestablishing trust for data sources. The current solution, however, is limited to well-known entities \cite{zhang2016town}. Authenticating data from unidentified sources in a distributed environment remains an open problem \cite{kato2021preventing,zhang2016town}. To defend against data poisoning attacks for frequency and heavy hitter estimations, two cryptography-based methods were proposed in \cite{kato2021preventing} and \cite{naor2019not} respectively. Kato \textit{et al.} \cite{kato2021preventing} utilized cryptographic randomized response as a building block to make the LDP protocol for frequency estimation verifiable to the data curator. In \cite{naor2019not}, the multi-party computation was also leveraged to restrict the attacker's capability to manipulate the heavy hitter results. These countermeasures could be used to mitigate the data poisoning attacks in \cite{cao2019data, wu2021poisoning}, but are not applicable to our attacks due to the different LDP perturbation mechanisms for mean and variance estimations. Other potential solutions include adopting hardware-assisted trusted computing \cite{armv9,lee2020keystone} to ensure the authenticity of the LDP protocol execution and communication with the remote server. But this may incur additional costs for software-hardware co-design and security auditing in order to avoid a wide spectrum of side-channel attacks~\cite{van2018foreshadow,schwarz2017malware,gotzfried2017cache,weichbrodt2016asyncshock,murdock2020plundervolt}. 

\section{Related Work}
\noindent\textbf{Data Poisoning Attack to LDP Protocols.} 
Recent research found that LDP is vulnerable to data poisoning attacks. It was shown that the LDP aggregator is sensitive to distribution change of perturbed data. Thus, the result accuracy of non-interactive LDP protocols can be degraded by injecting false data \cite{cheu2019manipulation}. It is further demonstrated that the attacker can maximally deviate the estimate of an item of interest (e.g. categorical data in \cite{cao2019data} and key-value data in \cite{wu2021poisoning}) from the ground truth by formulating the attack as an optimization problem to maximize the attack gain. The solution is the fake data that the attacker will send to the data collector.


We consider a fine-grained attack in this paper, where the attacker aims to control the estimate to some desired values. In general, the above maximal deviation attacks \cite{cao2019data, wu2021poisoning} can be deemed as the extreme cases of our attack. With this new capability, the attacker can target more scenarios for precise result manipulation. In addition, we provide important new insights into the attack impact on LDP and mitigation design.

\vspace{2pt}
\noindent\textbf{Adopting DP/LDP against Poisoning Attacks in ML.} A line of work \cite{naseri2022local, ma2019data, borgnia2021dp} studied using DP/LDP to improve the robustness of machine learning models against data poisoning attacks, where the attacker prepares a poisoned training dataset to change the model behavior to some desired one. In \cite{borgnia2021dp}, differentially private data augmentation was studied for attack mitigation. \cite{ma2019data} investigated the attacks on the ML model trained with differentially private algorithms. Other than central DP, \cite{naseri2022local} further studied the impact of LDP on the defense and observed varying levels of protection-utility trade-offs. Our work has a distinct motivation, i.e., we study the intrinsic robustness and security of LDP protocols in the presence of fine-grained data manipulation. The results of our work may shed light on the related security discussions of using LDP for defenses. For example, a strong LDP perturbation may help reinforce the defense effect in \cite{naseri2022local} while the attack in a central DP setting \cite{naseri2022local, ma2019data, borgnia2021dp} is intuitively analogous to our baseline attack where only input values can be crafted.


\vspace{-10pt}
\section{Discussion} \label{discussion}
There exist other LDP protocols supporting mean/variance estimation \cite{dwork2006calibrating,wang2017locally,li2020estimating}, to which the baseline and our attack are still applicable. \textsf{IPA} is straightforward since it is independent of LDP implementation. For \textsf{OPA}, the attacker can craft fake data in the output domain of the perturbation by leveraging the LDP knowledge. Note that since the aggregation $\Phi()$ is iterative in \cite{li2020estimating}, we cannot derive an explicit mathematical expression to determine fake values in the same way as in this work (e.g., Equations \eqref{output_value_mean} and \eqref{output_value_var}). However, the attacker may obtain a valid solution by simulating the iteration and searching the output domain of the perturbation.

\vspace{2pt}
\noindent\textbf{Frequency Estimation under Pure LDP} \cite{wang2017locally}: The \textsf{IPA} and \textsf{OPA} could be adapted to attack the pure LDP protocols for frequency, such as kRR \cite{duchi2013local}, OUE and OLH \cite{wang2017locally}. The attacker needs to estimate the frequencies of items and inject bogus data as per the the intended frequency of target items. \textsf{OPA} may leverage the LDP protocols to improve performance.

\vspace{2pt}
\noindent\textbf{Distribution Estimation} \cite{li2020estimating}: Distribution estimation can be considered a frequency oracle in the numerical domain, to which our attacks may still be applicable. We provide the attack intuition here. In general, the attacker begins by estimating the original data distribution. Given this, the attack may generate fake data points equal to a specific value $\bm{x}$ to increase the probability density of $\bm{x}$ to the target value. To reduce the probability density, the attacker could provide data that is not equal to $\bm{x}$. 
    
\vspace{2pt}
\noindent\textbf{Graph data mining} \cite{ye2020towards, imola2021locally}: In graph data mining, LDP protocols focus on calculating graph statistics, e.g., counting triangles and $k$-stars in the graph, the degree and adjacency bit vector of each node. We assume in this scenario that the attacker wishes to control the final estimate to some target value. To launch the attack, the attacker could first use a graph generation model, such as BTER \cite{qin2017generating}, to estimate the graph topology. The attacker then could inject bogus nodes and edges into the graph to exert finer control over its statistics.
For all the discussed query types, the security-privacy consistency may remain, as increased privacy introduces additional noise and reduces the effectiveness of the manipulation.

\vspace{-10pt}
\section{Conclusion}
We conducted a systematic study on data poisoning attacks against the LDP protocols for mean and variance estimation. 
We present an effective attack to craft the output of the LDP instance and manipulate both mean and variance estimates according to target values.
The analysis reveals a disturbing fact: the LDP is inherently vulnerable to data poisoning attacks regardless of the privacy budget, i.e., previous attacks are effective when $\epsilon$ is small (high privacy), whereas our attacks perform better when $\epsilon$ is large (low privacy). We also discussed the applicability of our attacks against other query types and shed light on the promising mitigation development.

\section{Acknowledgment}
We would like to thank the Shepherd and anonymous reviewers for their insightful comments and guidance. Hui Li and Xiaoguang Li were supported in part by NSFC 61932015.

\bibliographystyle{plain}
{\footnotesize
\bibliography{ref}
}

\begin{appendix}
\section{Proof of Theorem \ref{bound_y2}} \label{appendix_solution_max_y2}
In the optimization problem, the objective function $-\sum_{i=1}^{m} y_i^2$ is convex, the inequality constraints are continuously differentiable convex functions and the equality constraint is an affine function. Thus we can prove that the solution is the maximum by proving the solution satisfies KKT conditions \cite{karush2014minima, kuhn2014nonlinear}. We prove a general case where the value is in $[a, b]$, and Theorem \ref{bound_y2} can be derived by setting $a=-1$ and $b=1$. Let the function $L(y_1, ..., y_m)$ equal
\begin{align*}
    -\sum_{i=1}^m y_i^2 + \alpha(\sum_{i=1}^m y_i - A)
    + \sum_{i=1}^m\left[\beta_{i}^{(a)}(a-y_i) + \beta_{i}^{(b)}(y_i - b) \right],
\end{align*}
where $A = (n+m)\mu_t - S^{(1)}_e$, and $\alpha$, $\beta_{i}^{(a)}$ and $\beta_{i}^{(b)}$ are constants. The solution should satisfy four sets of conditions:

\noindent\textbf{Stationarity:}
$
    \forall i: \frac{\partial L}{\partial y_i} = -2y_i + \alpha - \beta_{i}^{(a)} + \beta_{i}^{(b)} = 0.
$

\noindent\textbf{Primal feasibility:} $\sum_{i=1}^m y_i - A = 0, \forall i: a-y_i \leq 0, y_i - b \leq 0$.

\noindent\textbf{Dual feasibility:} For any $i$, $\beta_{i}^{(a)}, \beta_{i}^{(b)} \geq 0$.

\noindent\textbf{Complementary slackness:} $\forall i: \beta_{i}^{(a)}(a-y_i) = 0, \beta_{i}^{(b)}(y_i - b) = 0$.

Since the partial derivative of $L$ should be zero, we have
\begin{align*}
    y_i = \frac{1}{2}(\alpha - \beta_{i}^{(a)} + \beta_{i}^{(b)})
    \Rightarrow \frac{1}{2} \sum_i^m (\alpha - \beta_{i}^{(a)} + \beta_{i}^{(b)}) = A.
\end{align*}
Thus we can rewrite $\alpha$ and $y_i$ as $\alpha = \frac{2A + \sum_i^m \beta_{i}^{(a)} - \sum_i^m \beta_{i}^{(b)}}{m}$ and $y_i = \frac{1}{2} (\frac{2A + \sum_i^m \beta_{i}^{(a)} - \sum_i^m \beta_{i}^{(b)}}{m} - \beta_{i}^{(a)} + \beta_{i}^{(b)})$.
Let the domain of $y_i$ be $D = D_a \bigcup D_b \bigcup D_{ab}$ s.t., $\forall y_i \in D_a, y_i = a$, $\forall y_i \in D_b, y_i = b$, and $\forall y_i \in D_{ab}, a < y_i < b$. Given the solution, we have $\left| D_b \right| = \floor{\frac{A - ma}{b - a}}$, $\left| D_a \right| = m - 1 - \left| D_b \right|$ and $\left| D_{ab} \right| = 1$.
For $\forall y_i \in D_a$, we have $\forall i: \beta_i^{(b)} = 0$ due to the complementary slackness, and
\begin{align*}
    &\quad y_i = a = \frac{1}{2} (\frac{2A + \sum_i^m \beta_{i}^{(a)}}{m} - \beta_{i}^{(a)}) \\
    &\Rightarrow \beta_{i}^{(a)} = \frac{2A + (m - 1 - \left|D_b\right|) \beta_i^{(a)}}{m} - 2a
    = \frac{2(A - ma)}{\left|D_b\right| - 1}.
\end{align*}
Since $A \geq ma$ and $\left|D_b\right| \geq 1$, we have $\beta_{i}^{(a)} \geq 0$ for all $y_i \in D_a$. Therefore, for $\forall i: y_i \in D_a$, we have $\beta_{i}^{(a)}, \beta_{i}^{(b)} \geq 0$ and $\beta_{i}^{(a)}(a-y_i) = 0$ due to $y_i = a$, and $\beta_{i}^{(b)}(y_i - b) = 0$.

For $\forall y_i \in D_b$, we have $\forall i: \beta_i^{(a)} = 0$ due to the complementary slackness, and
\begin{align*}
    &\quad y_i = b = \frac{1}{2} (\frac{2A - \sum_i^m \beta_{i}^{(b)}}{m} + \beta_{i}^{(b)}) \\
    &\Rightarrow \beta_{i}^{(b)} = 2b - \frac{2A - \left|D_b\right| \beta_i^{(b)}}{m}
    = \frac{2(mb - A)}{m - \left|D_b\right|}.
\end{align*}
Since $mb \geq A$ and $\left|D_b\right| \leq m$, we have $\beta_{i}^{(b)} \geq 0$ for all $y_i \in D_b$.
Therefore, for $\forall i: y_i \in D_b$, we have $\beta_{i}^{(a)}, \beta_{i}^{(b)} \geq 0$ and $\beta_{i}^{(a)}(a-y_i) = 0$ due to $\beta_{i}^{(a)} = 0$, and $\beta_{i}^{(b)}(y_i - b) = 0$ due to $y_i = b$.
For $y_i \in D_{ab}$, we have $\beta_i^{(a)} = \beta_i^{(b)} = 0$ due to the complementary slackness.

In conclusion, for $\forall i: y_i \in D$, the partial derivative $\frac{\partial L}{\partial y_i}$ is zero (satisfying \textbf{Stationarity}), the sum $\sum_{i=1}^m y_i = A$, and $\forall i: a \leq y_i \leq b$ (satisfying \textbf{Primal feasibility}), the constants $\beta_{i}^{(a)}, \beta_{i}^{(b)} \geq 0$ for all $y_i \in D$ (satisfying \textbf{Dual feasibility}), and $\beta_{i}^{(a)}(a-y_i) = 0, \beta_{i}^{(b)}(y_i - b) = 0$ for all $y_i \in D$ (satisfying \textbf{Complementary slackness}).

\section{Error of IPA on SR}
\label{appendix_proof_error_input_attack_SR}
We first analyze the error of $\hat{\mu}_t$ under the SR mechanism. In SR, the estimated mean $\hat{\mu}_t$ after the attack is
\begin{align*}
    \frac{2}{n + m} \left( \sum_{i=1}^{n_1} \Phi_{1}(\Psi(x_{i, (1)})) + \sum_{i=1}^{m_1} \Phi_{1}(\Psi(y_{i, (1)})) \right) = \hat{\mu}_t.
\end{align*}
Thus we have the expectation of $\hat{\mu}_t$
\begin{align*}
    &\quad \mathbb{E}(\hat{\mu}_t) = \frac{2}{m+n} \mathbb{E}\left[ \sum_{i=1}^{n_1} \Phi_{1}(\Psi(x_{i, (1)})) + \sum_{i=1}^{m_1} \Phi_{1}(\Psi(y_{i, (1)})) \right] \\
    &= \frac{2}{m+n} \mathbb{E}\left[ \mathbb{E}\left[ \sum_{i=1}^{n_1} \Phi_{1}(\Psi(x_{i, (1)})) + \sum_{i=1}^{m_1} \Phi_{1}(\Psi(y_{i, (1)})) \;\middle\vert\; g_1 \right] \right] \\
    &= \frac{2\mathbb{E}\left[ \sum_{i=1}^{n_1} x_{i,(1)} + \sum_{i=1}^{m_1} y_{i,(1)} \right]}{m+n} 
    = \frac{m+n_e}{m+n}\mu_t + \frac{(S^{(1)} - S_{e}^{(1)})}{m+n}.
\end{align*}
Then we can calculate the error as $\mathbb{E}[(\hat{\mu}_t - \mu_t)^2] = Var[\hat{\mu}_t] + (\mathbb{E}(\hat{\mu}_t) - \mu_t)^2$.
The bias is known due to $\mathbb{E}(\hat{u}_t) = \frac{m+n_e}{m+n}\mu_t + \frac{1}{m+n}(S^{(1)} - X_e)$. Here we study the term $Var[\hat{\mu}_t]$. We denote the $\Phi(\Psi())$ by $M()$. Let $\sum_{i=1}^{n_1} \Phi_{1}(\Psi(x_{i, (1)}))$ and $\sum_{i=1}^{m_1} \Phi_{1}(\Psi(y_{i, (1)}))$ be $M(X_{g_1})$ and $M(Y_{g_1})$ respectively. We also denote $\sum_{i=1}^{n_1}x_i$ and $\sum_{i=1}^{m_1}y_i$ as $X_{g_1}$ and $Y_{g_1}$. Thus, we have
\begin{align*}
    &\quad Var[\hat{\mu}_t] = \mathbb{E}\left[ (\hat{\mu}_t - \mathbb{E}(\hat{\mu}_t))^2 \right] \\
    &= \mathbb{E}\Bigg[\Bigg(\frac{2}{n + m} ( M(X_{g_1}) + M(Y_{g_1})) - \frac{2}{m+n}(X_{g_1} + Y_{g_1} \Bigg)^2 \Bigg] \\
    &+ \mathbb{E}\Bigg[\Bigg( \frac{2}{m+n}(X_{g_1} + Y_{g_1}) - \mathbb{E}(\hat{\mu}_t) \Bigg)^2\Bigg] \\
    &+ 2\mathbb{E}\Bigg[ \Bigg(\frac{2}{n + m} \left( M(X_{g_1}) + M(Y_{g_1}) \right) - \frac{2}{m+n}(X_{g_1} + Y_{g_1})\Bigg) \\
    &\times \Bigg( \frac{2}{m+n}(X_{g_1} + Y_{g_1}) - \mathbb{E}(\hat{\mu}_t) \Bigg) \Bigg].
\end{align*}
The variance contains three terms. For the first term, 
\begin{align*}
    &\quad \mathbb{E}\Bigg[\Bigg(\frac{2}{n + m} ( M(X_{g_1}) + M(Y_{g_1})) - \frac{2}{m+n}(X_{g_1} + Y_{g_1}) \Bigg)^2 \Bigg] \\
    &= \frac{2}{(m+n)^2(p-q)^2} \Bigg( m+n - (p-q)^2 \Bigg( S^{(2)} + \sum_{i=1}^{m}y_i^2 \Bigg) \Bigg).
\end{align*}
The first equality is based on Lemma \ref{mechanism_error}. Since $\sum_{i=1}^{m} y_i^2 = (n_e+m)(\mu_t^2 + \sigma_t^2) - S_{e}^{(2)}$, the first term equals
\begin{align*}
    \frac{2}{(m+n)(p-q)^2} - 2\frac{(n_e+m)(\mu_t^2 + \sigma_t^2)}{(m+n)^2} - 2\frac{S^{(2)} - S_{e}^{(2)}}{(m+n)^2}.
\end{align*}
From the standard analysis on sampling process, the second term equals $\frac{n_e+m}{(m+n)^2}(\mu_t^2 + \sigma_t^2) + \frac{1}{(m+n)^2} (S^{(2)} - S_{e}^{(2)})$.
Since $\mathbb{E}[M(X_{g_1})] = \mathbb{E}[X_{g_1}]$, $\mathbb{E}[M(Y_{g_1})] = \mathbb{E}[Y_{g_1}]$ and $\mathbb{E}(\hat{\mu}_t)$ is a constant, we have the third term being zero. Therefore, based on the above three terms, we have the error
\begin{align*}
    &\quad \mathbb{E}[(\hat{\mu}_t - \mu_t)^2] = \left( \frac{n_e - n}{m+n} \mu_t + \frac{(S^{(1)} - S_{e}^{(1)})}{(m+n)} \right)^2 \\
    &+ \frac{2}{(m+n)(p-q)^2} - \frac{(n_e+m)(\mu_t^2 + \sigma_t^2)}{(m+n)^2} - \frac{S^{(2)} - S_{e}^{(2)}}{(m+n)^2}
\end{align*}

Then we study the error of $\hat{\sigma}_t^2$ under the SR mechanism. Denote $\sum_{i=1}^{n_2} \Phi_{2}(\Psi(x_{i, (2)}^2))$ and $\sum_{i=1}^{m_2} \Phi_{2}(\Psi(y_{i, (2)}^2))$ by $M(X_{g_2})$ and $M(Y_{g_2})$, and $\sum_{i=1}^{n_2}x_i^2$ and $\sum_{i=1}^{m_2}y_i^2$ by $X_{g_2}$ and $Y_{g_2}$ respectively. The estimated variance (after attack) can be written as $\frac{2}{n + m} \left( M(X_{g_2}) + M(Y_{g_2}) \right) - \hat{\mu}_t^2 = \hat{\sigma}_t^2$.
Thus we have the expectation of $\hat{\sigma}_t^2$
\begin{align*}
    &\quad \mathbb{E}(\hat{\sigma}_t^2) = \frac{2}{m+n} \mathbb{E}\left[ M(X_{g_2}) + M(Y_{g_2}) \right] - \mathbb{E}[\hat{\mu}_t^2] \\
    &= \frac{m+n_e}{m+n} (\mu_t^2 + \sigma_t^2) + \frac{(S^{(2)} - S_{e}^{(2)})}{m+n} - (Var(\hat{\mu}_t) + \mathbb{E}[\hat{\mu}_t]^2).
\end{align*}
We can calculate the error $ \mathbb{E}[(\hat{\sigma}_t^2 - \sigma_t^2)^2] = Var[\hat{\sigma}_t^2] + (\mathbb{E}(\hat{\sigma}_t^2) - \sigma_t^2)^2$. 
The bias is also known since the expectation $\mathbb{E}[\hat{\sigma}_t^2]$ is known. Next we study the term $Var[\hat{\sigma}_t^2]$
\begin{align*}
    Var[\hat{\sigma}_t^2] = Var\left[ \frac{2}{m+n} ( M(X_{g_2}) + M(Y_{g_2}) ) \right] + Var[\hat{\mu}_t^2]
\end{align*}
Similar to the analysis of $Var[\hat{\mu}_t]$ which is $Var[ \frac{2}{m+n} ( M(X_{g_1}) + M(Y_{g_1}) ) ]$, we denote the $\sum_{i=1}^{n}x_i^4$ by $S^{(4)}$ and have $Var\left[ \frac{2( M(X_{g_2}) + M(Y_{g_2}) )}{m+n}  \right]$ equal $\frac{2}{(m+n)(p-q)^2} - \frac{S^{(4)} + \sum_{i=1}^{m}y_i^4}{(m+n)^2}$.
Since each $y_i^4 \geq 0$, we have 
$
    \sum_{i=1}^{m}y_i^4 \geq 0.
$
For the term $Var[\hat{\mu}_t^2] = \mathbb{E}[\hat{\mu}_t^4] - \mathbb{E}[\hat{\mu}_t^2]^2$, we have $\mathbb{E}[\hat{\mu}_t^2]^2 \geq 0$ and $\hat{\mu}_t \leq b$. Thus, it is bounded by $\quad Var[\hat{\mu}_t^2] = \mathbb{E}[\hat{\mu}_t^4] - \mathbb{E}[\hat{\mu}_t^2]^2 \leq \mathbb{E}[\hat{\mu}_t^4] \leq 1$.

Given $Var[\hat{\mu}_t^2]$, $Var\left[ \frac{2}{m+n} ( M(X_{g_2}) + M(Y_{g_2}) ) \right]$ and $\mathbb{E}(\hat{\sigma}_t^2)$, we have the upper bound of error $\mathbb{E}[(\hat{\sigma}_t^2 - \sigma_t^2)^2]$.

\section{Error IPA on PM}
\label{appendix_proof_error_input_attack_PM}
Since the proof is the same as \textsf{IPA} against SR, we omit the details and use the same set of notations. We first analyze the error of $\hat{\mu}_t$ in PM. 
The expectation of $\hat{\mu}_t$ is
\begin{align*}
    \mathbb{E}(\hat{\mu}_t) 
    = \frac{m+n_e}{m+n}\mu_t + \frac{1}{m+n}(S^{(1)} - S_{e}^{(1)}).
\end{align*}

Then we can calculate the error $\mathbb{E}[(\hat{\mu}_t - \mu_t)^2] = Var[\hat{\mu}_t] + (\mathbb{E}(\hat{\mu}_t) - \mu_t)^2$.
The bias is known due to $\mathbb{E}(\hat{u}_t) = \frac{m+n_e}{m+n}\mu_t + \frac{1}{m+n}(S^{(1)} - S_{e}^{(1)})$. 
We expand the variance $Var[\hat{\mu}_t]$ to the same three terms as in the analysis of \textsf{IPA} against SR. Based on the Lemma \ref{mechanism_error} and $\sum_{i=1}^{m} y_i^2 = (n_e+m)(\mu_t^2 + \sigma_t^2) - S_{e}^{(2)}$, we have the first term equal
\begin{align*}
    \frac{2 (e^{\epsilon/2} + 3)}{3(n+m)(e^{\epsilon/2} - 1)^2} + \frac{2 ((n_e+m)(\mu_t^2 + \sigma_t^2) + (S^{(2)} - S_{e}^{(2)}))}{(n+m)^2 (e^{\epsilon/2} - 1)}.
\end{align*}
From the standard analysis on sampling process, the second term equals
\begin{align*}
    \frac{m+n_e}{(m+n)^2}(\mu_t^2 + \sigma_t^2) + \frac{1}{(m+n)^2} (S^{(2)} - S_{e}^{(2)}).
\end{align*}

Since $\mathbb{E}[M(X_{g_1})] = \mathbb{E}[X_{g_1}]$, $\mathbb{E}[M(Y_{g_1})] = \mathbb{E}[Y_{g_1}]$ and $\mathbb{E}(\hat{\mu}_t)$ is a constant, we have the third term being zero. Therefore, given the above three terms, we have the error
\begin{align*}
    &\quad \mathbb{E}[(\hat{\mu}_t - \mu_t)^2] \\
    &= \frac{2(e^{\epsilon/2} + 3)}{3(n+m)(e^{\epsilon/2} - 1)^2} + \left( \frac{n_e - n}{m+n} \mu_t + \frac{(S^{(1)} - S_{e}^{(1)})}{(m+n)} \right)^2 \\
    &+ \frac{m+n_e}{(m+n)^2}(\mu_t^2 + \sigma_t^2) + \frac{1}{(m+n)^2} (S^{(2)} - S_{e}^{(2)}) \\
    &+ \frac{2 ((n_e+m)(\mu_t^2 + \sigma_t^2) + (S^{(2)} - S_{e}^{(2)}))}{(n+m)^2 (e^{\epsilon/2} - 1)}.
\end{align*}

We now study the error of $\hat{\sigma}_t^2$ under PM mechanism. 
Similar to the analysis of \textsf{IPA} on SR, the expectation of $\hat{\sigma}_t^2$ is
\begin{align*}
    \frac{m+n_e}{m+n}(\mu_t^2 + \sigma_t^2) + \frac{(S^{(2)} - S_{e}^{(2)})}{m+n} - (Var(\hat{\mu}_t) + \mathbb{E}[\hat{\mu}_t]^2).
\end{align*}
We calculate the error as
\begin{align*}
    \mathbb{E}[(\hat{\sigma}_t^2 - \sigma_t^2)^2] = Var[\hat{\sigma}_t^2] + (\mathbb{E}(\hat{\sigma}_t^2) - \sigma_t^2)^2.
\end{align*}
The bias is known since the expectation $\mathbb{E}[\hat{\sigma}_t^2]$ is known. Next we study the term $Var[\hat{\sigma}_t^2]$,
\begin{align*}
    Var[\hat{\sigma}_t^2] = Var\left[ \frac{2}{m+n} ( M(X_{g_2}) + M(Y_{g_2}) ) \right] + Var[\hat{\mu}_t^2].
\end{align*}
Similar to the analysis of $Var[\hat{\mu}_t]$ which is $Var[ \frac{2}{m+n} ( M(X_{g_1}) + M(Y_{g_1}) ) ]$, we denote the $\sum_{i=1}^{n}x_i^4$ by $S^{(4)}$ and have
\begin{align*}
    &\quad Var\left[ \frac{2( M(X_{g_2}) + M(Y_{g_2}) )}{m+n} \right] = \frac{2 (e^{\epsilon/2} + 3)}{3(n+m)(e^{\epsilon/2} - 1)^2} \\
    &+ \frac{2 (S^{(4)} + \sum_{i=1}^{m}y_i^4)}{(n+m)^2 (e^{\epsilon/2} - 1)} 
    + \frac{\left( S^{(4)} + \sum_{i=1}^{m}y_i^4 \right)}{(m+n)^2}.
\end{align*}
Since each $y_i^2 \geq 0$, we have $\sum_{i=1}^{m}y_i^4$ is less than $(\sum_{i=1}^{m}y_i^2)^2$, which equals 
\begin{align*}
    (n_e+m)^2(\mu_t^2 + \sigma_t^2)^2 + S_{e}^{(2)2} - 2(n_e+m)(\mu_t^2 + \sigma_t^2)S_{e}^{(2)}.
\end{align*}

For the term $Var[\hat{\mu}_t^2] = \mathbb{E}[\hat{\mu}_t^4] - \mathbb{E}[\hat{\mu}_t^2]^2$, we have $\mathbb{E}[\hat{\mu}_t^2]^2 \geq 0$ and $\hat{\mu}_t \leq b$. Thus, it is bounded by
\begin{align*}
    \quad Var[\hat{\mu}_t^2] &= \mathbb{E}[\hat{\mu}_t^4] - \mathbb{E}[\hat{\mu}_t^2]^2 \leq \mathbb{E}[\hat{\mu}_t^4] \leq 1.
\end{align*}

Given $Var[\hat{\mu}_t^2]$, $Var\left[ \frac{2}{m+n} ( M(X_{g_2}) + M(Y_{g_2}) ) \right]$ and $\mathbb{E}(\hat{\sigma}_t^2)$, we have the upper bound of error $\mathbb{E}[(\hat{\sigma}_t^2 - \sigma_t^2)^2]$.

\section{Error of OPA on SR}
\label{appendix_proof_error_output_attack_SR}
We first analyze the error of $\hat{\mu}_t$ under the SR mechanism. The estimated mean $\hat{\mu}_t$ after the attack is
\begin{align*}
    \frac{2}{n + m} \left( \sum_{i=1}^{n_1} \Phi_{1}(\Psi(x_{i, (1)})) + \sum_{i=1}^{m_1} \Phi_{1}(\Psi(y_{i, (1)})) \right) = \hat{\mu}_t.
\end{align*}
Thus the expectation of $\hat{\mu}_t$ is $\mathbb{E}(\hat{\mu}_t) = \frac{m+n_e}{m+n}\mu_t + \frac{1}{m+n}(S^{(1)} - S_{e}^{(1)})$.
The error is calculated as $\mathbb{E}[(\hat{\mu}_t - \mu_t)^2] = Var[\hat{\mu}_t] + (\mathbb{E}(\hat{\mu}_t) - \mu_t)^2.$
The bias is known due to $\mathbb{E}(\hat{u}_t) = \frac{m+n_e}{m+n}\mu_t + \frac{1}{m+n}(S^{(1)} - S_{e}^{(1)})$. Here we study the term $Var[\hat{\mu}_t]$. We denote the $\Phi(\Psi())$ by $M()$, and $\sum_{i=1}^{n_1} \Phi_{1}(\Psi(x_{i, (1)}))$ and $\sum_{i=1}^{m_1} \Phi_{1}(\Psi(y_{i, (1)}))$ by $M(X_{g_1})$ and $M(Y_{g_1})$. Let $\sum_{i=1}^{n_1}x_i$ and $\sum_{i=1}^{m_1}y_i$ be $X_{g_1}$ and $Y_{g_1}$. We have
\begin{align*}
    Var[\hat{\mu}_t] = Var\left[ \frac{2}{n + m} \left( M(X_{g_1}) + M(Y_{g_1}) \right) \right].
\end{align*}
Since the adversary directly crafts the output values, the term $M(Y_{g_1})$ is a constant, which can be ignored in the variance. Therefore, the variance
\begin{align*}
    Var[\hat{\mu}_t] 
    = \frac{4}{(m+n)^2} \mathbb{E}\left[ \left( M(X_{g_1}) - X_{g_1} + X_{g_1} - \frac{S^{(1)}}{2} \right)^2 \right].
\end{align*}
We then calculate the expected value,
\begin{align*}
    &\quad \mathbb{E}\left[ \left( M(X_{g_1}) - X_{g_1} + X_{g_1} - \frac{S^{(1)}}{2} \right)^2 \right] \\
    &= \mathbb{E}\left[\left( M(X_{g_1}) - X_{g_1} \right)^2 \right] + \mathbb{E}\left[\left( X_{g_1} - \frac{S^{(1)}}{2} \right)^2\right] \\
    &+ 2\mathbb{E}\left[ \left( M(X_{g_1}) - X_{g_1} \right) \times \left( X_{g_1}  - \frac{S^{(1)}}{2}) \right) \right].
\end{align*}
It contains three terms. For the first term, 
\begin{align*}
    &\quad \mathbb{E}\left[\left(M(X_{g_1}) - X_{g_1} \right)^2 \right] 
    = \mathbb{E}\left[ \mathbb{E}\left[ \left(M(X_{g_1}) - X_{g_1} \right)^2 \mid g_1 \right] \right] \\
    &= \mathbb{E}\Bigg[ \frac{1}{(p-q)^2} \Bigg( n_1 - (p-q)^2 \Bigg( \sum_{i=1}^{n_1}x_i^2 \Bigg) \Bigg) \Bigg] = \frac{( \frac{n}{2} - \frac{(p-q)^2}{2} S^{(2)} )}{(p-q)^2}.
\end{align*}
The second equality is based on Lemma \ref{mechanism_error}. From the standard analysis on sampling process, the second term is
\begin{align*}
    \mathbb{E}\left[\left( X_{g_1} - \frac{S^{(1)}}{2} \right)^2\right] = \frac{1}{4} \mathbb{E}\left[\left( 2X_{g_1} - S^{(1)} \right)^2\right] = \frac{1}{4} S^{(2)}.
\end{align*}

Since $\mathbb{E}[M(X_{g_1})] = \mathbb{E}[X_{g_1}]$ and $\frac{S^{(1)}}{2}$ is a constant, the third term is zero. Therefore, given the above three terms, we have the error
\begin{align*}
    &\quad \mathbb{E}[(\hat{\mu}_t - \mu_t)^2]
    = \frac{\left( 2n - 2(p-q)^2 S^{(2)} \right)}{(m+n)^2(p-q)^2} + \frac{S^{(2)}}{(m+n)^2} \\
    &+\left( \frac{n_e - n}{m+n} \mu_t + \frac{(S^{(1)} - S_{e}^{(1)})}{(m+n)} \right)^2.
\end{align*}

We study the error of $\hat{\sigma}_t^2$ in SR. We denote $\sum_{i=1}^{n_2} \Phi_{2}(\Psi(x_{i, (2)}^2))$ and $\sum_{i=1}^{m_2} \Phi_{2}(\Psi(y_{i, (2)}^2))$ by $M(X_{g_2})$ and $M(Y_{g_2})$, and let $\sum_{i=1}^{n_2}x_i^2$ and $\sum_{i=1}^{m_2}y_i^2$ by $X_{g_2}$ and $Y_{g_2}$. The estimated variance (after the attack) can be written as
\begin{align*}
    \frac{2}{n + m} \left( M(X_{g_2}) + M(Y_{g_2}) \right) - \hat{\mu}_t^2 = \hat{\sigma}_t^2.
\end{align*}
Thus we have the expectation of $\hat{\sigma}_t^2$
\begin{align*}
    &\quad \mathbb{E}(\hat{\sigma}_t^2) = \frac{2}{m+n} \mathbb{E}\left[ M(X_{g_2}) + M(Y_{g_2}) \right] - \mathbb{E}[\hat{\mu}_t^2] \\
    &= \frac{m+n_e}{m+n}(\mu_t^2 + \sigma_t^2) + \frac{(S^{(2)} - S^{(2)}_e)}{m+n} - (Var(\hat{\mu}_t) + \mathbb{E}[\hat{\mu}_t]^2).
\end{align*}
We calculate the error $\mathbb{E}[(\hat{\sigma}_t^2 - \sigma_t^2)^2] = Var[\hat{\sigma}_t^2] + (\mathbb{E}(\hat{\sigma}_t^2) - \sigma_t^2)^2$.
The bias is known since we know the expectation $\mathbb{E}[\hat{\sigma}_t^2]$. Next we work on the term $Var[\hat{\sigma}_t^2]$
\begin{align*}
    Var[\hat{\sigma}_t^2] = Var\left[ \frac{2}{m+n} ( M(X_{g_2}) + M(Y_{g_2}) ) \right] + Var[\hat{\mu}_t^2].
\end{align*}
Similar to the analysis of $Var[\hat{\mu}_t]$ which is $Var[ \frac{2}{m+n} ( M(X_{g_1}) + M(Y_{g_1}) ) ]$, we denote $\sum_{i=1}^{n}x_i^4$ as $S^{(4)}$ and have
\begin{align*}
    &\quad Var\left[ \frac{2}{m+n} ( M(X_{g_2}) + M(Y_{g_2}) ) \right] 
    = \frac{4 Var\left[ M(X_{g_2}) \right]}{(m+n)^2} \\
    &= \frac{2n-2(p-q)^2 S^{(4)}}{(m+n)^2(p-q)^2} + \frac{S^{(4)}}{(m+n)^2}.
\end{align*}

For the term $Var[\hat{\mu}_t^2] = \mathbb{E}[\hat{\mu}_t^4] - \mathbb{E}[\hat{\mu}_t^2]^2$, we have $\mathbb{E}[\hat{\mu}_t^2]^2 \geq 0$ and $\hat{\mu}_t \leq 1$. Thus, it is bounded by $Var[\hat{\mu}_t^2] = \mathbb{E}[\hat{\mu}_t^4] - \mathbb{E}[\hat{\mu}_t^2]^2 \leq \mathbb{E}[\hat{\mu}_t^4] \leq 1$
Given $Var[\hat{\mu}_t^2]$, $Var\left[ \frac{2}{m+n} ( M(X_{g_2}) + M(Y_{g_2}) ) \right]$ and $\mathbb{E}(\hat{\sigma}_t^2)$, we have the upper bound of error $\mathbb{E}[(\hat{\sigma}_t^2 - \sigma_t^2)^2]$.

\section{Error of OPA on PM}
\label{appendix_proof_error_output_attack_PM}
Since the proof is the same as \textsf{OPA} against SR, we omit the details and use the same set of notations. We first analyze the error of $\hat{\mu}_t$ under the PM mechanism. 
The expectation of $\hat{\mu}_t$ is
\begin{align*}
    \mathbb{E}(\hat{\mu}_t) 
    = \frac{m+n_e}{m+n}\mu_t + \frac{1}{m+n}(S^{(1)} - X_e).
\end{align*}
Then we can calculate the error
\begin{align*}
    \mathbb{E}[(\hat{\mu}_t - \mu_t)^2] = Var[\hat{\mu}_t] + (\mathbb{E}(\hat{\mu}_t) - \mu_t)^2.
\end{align*}
The bias is known due to $\mathbb{E}(\hat{u}_t) = \mu_t + \frac{1}{m+n}(S^{(1)} - X_e)$. 
We expand the variance $Var[\hat{\mu}_t]$ to the same three terms as in the analysis of \textsf{OPA} against SR. The first term equals
\begin{align*}
    \frac{\frac{n}{2} (e^{\epsilon/2} + 3)}{3(e^{\epsilon/2} - 1)^2} + \frac{S^{(2)}}{2(e^{\epsilon/2} - 1)}.
\end{align*}
From the standard analysis on sampling process, the second term is
\begin{align*}
    \mathbb{E}\left[\left( X_{g_1} - \frac{S^{(1)}}{2} \right)^2\right] = \frac{1}{4} \mathbb{E}\left[\left( 2X_{g_1} - S^{(1)} \right)^2\right] = \frac{1}{4} S^{(2)}.
\end{align*}

Since $\mathbb{E}[M(X_{g_1})] = \mathbb{E}[X_{g_1}]$ and $\frac{S^{(1)}}{2}$ is a constant, the third term is zero. Given the above three terms, the error is
\begin{align*}
    \mathbb{E}[(\hat{\mu}_t - \mu_t)^2]
    &= \frac{2n(e^{\epsilon/2} + 3)}{3(m+n)^2(e^{\epsilon/2} - 1)^2} + \frac{(1 + e^{\epsilon/2}) S^{(2)}}{(m+n)^2 (e^{\epsilon/2} - 1)} \\
    &+ \left( \frac{n_e - n}{m+n} \mu_t + \frac{(S^{(1)} - S_{e}^{(1)})}{(m+n)} \right)^2.
\end{align*}

We next study the error of $\hat{\sigma}_t^2$ under the PM mechanism. 
Similar to the analysis of \textsf{OPA} on SR, the expectation of $\hat{\sigma}_t^2$ equals
\begin{align*}
    \frac{m+n_e}{m+n}(\mu_t^2 + \sigma_t^2) + \frac{(S^{(2)} - S_{e}^{(2)})}{m+n} - (Var(\hat{\mu}_t) + \mathbb{E}[\hat{\mu}_t]^2).
\end{align*}
Then we calculate the error
\begin{align*}
    \mathbb{E}[(\hat{\sigma}_t^2 - \sigma_t^2)^2] = Var[\hat{\sigma}_t^2] + (\mathbb{E}(\hat{\sigma}_t^2) - \sigma_t^2)^2.
\end{align*}
We know the bias as the expectation $\mathbb{E}[\hat{\sigma}_t^2]$ is known. Next we study the term $Var[\hat{\sigma}_t^2]$
\begin{align*}
    Var[\hat{\sigma}_t^2] = Var\left[ \frac{2}{m+n} ( M(X_{g_2}) + M(Y_{g_2}) ) \right] + Var[\hat{\mu}_t^2].
\end{align*}
Similar to the analysis of $Var[\hat{\mu}_t]$ which is $Var[ \frac{2}{m+n} ( M(X_{g_1}) + M(Y_{g_1}) ) ]$, we denote the $\sum_{i=1}^{n}x_i^4$ by $S^{(4)}$ and have
\begin{align*}
    &\quad Var\left[ \frac{2}{m+n} ( M(X_{g_2}) + M(Y_{g_2}) ) \right] \\
    &= \frac{2n(e^{\epsilon/2} + 3)}{3(m+n)^2(e^{\epsilon/2} - 1)^2} + \frac{(1 + e^{\epsilon/2}) S^{(4)}}{(m+n)^2 (e^{\epsilon/2} - 1)}.
\end{align*}

For the term $Var[\hat{\mu}_t^2] = \mathbb{E}[\hat{\mu}_t^4] - \mathbb{E}[\hat{\mu}_t^2]^2$, we have $\mathbb{E}[\hat{\mu}_t^2]^2 \geq 0$ and $\hat{\mu}_t \leq 1$. Thus, it is bounded by $Var[\hat{\mu}_t^2] = \mathbb{E}[\hat{\mu}_t^4] - \mathbb{E}[\hat{\mu}_t^2]^2 \leq \mathbb{E}[\hat{\mu}_t^4] \leq 1$.
Given $Var[\hat{\mu}_t^2]$, $Var\left[ \frac{2}{m+n} ( M(X_{g_2}) + M(Y_{g_2}) ) \right]$ and $\mathbb{E}(\hat{\sigma}_t^2)$, we have the upper bound of error $\mathbb{E}[(\hat{\sigma}_t^2 - \sigma_t^2)^2]$.

To show and compare the error of \textsf{IPA} and \textsf{OPA}, we replace some terms with intermediate notations shown in Table~\ref{tab:intermedias}. The comparison results are shown in Table~\ref{tab:attack_error}. 

\begin{table}[!t]
\centering
\caption{Intermedia for Error Analysis.}
\vspace{-8pt}
\scalebox{0.85}{
\begin{tabular}{|c|l|}
\hline
\textbf{Intermedia} & \textbf{Values} \\ \hline
$\mathcal{P}$        & $\left( \frac{n_e - n}{m+n} \mu_t + \frac{(S^{(1)} - S_{e}^{(1)})}{(m+n)} \right)^2$          \\ \hline
$\mathcal{Q}$        & $\frac{(n_e+m)(\mu_t^2 + \sigma_t^2)}{(m+n)^2} + \frac{S^{(2)} - S_{e}^{(2)}}{(m+n)^2}$  \\ \hline
$S^{(p)}$        & $\sum_{i=1}^{n}x_i^p$           \\ \hline
$\eta$        & $\frac{m+n_e}{m+n}(\mu_t^2 + \sigma_t^2) - \sigma_t^2$          \\ \hline
$\mathcal{T}^{IPA}_{\mathrm{SR}}$        &  $\left( \eta + \frac{1}{m+n}(S^{(2)} - S_{e}^{(2)}) - Var^{IPA}_{\mathrm{SR}}[\hat{\mu}_t] - \mathbb{E}[\hat{\mu}_t]^2 \right)^2$         \\ \hline
$\mathcal{T}^{IPA}_{\mathrm{PM}}$        &  $\left( \eta + \frac{1}{m+n}(S^{(2)} - S_{e}^{(2)}) - Var^{IPA}_{\mathrm{PM}}[\hat{\mu}_t] - \mathbb{E}[\hat{\mu}_t]^2 \right)^2$           \\ \hline
$\mathcal{T}^{OPA}_{\mathrm{SR}}$  &  $\left( \eta + \frac{1}{m+n}(S^{(2)} - S_{e}^{(2)}) - Var^{OPA}_{\mathrm{SR}}[\hat{\mu}_t] - \mathbb{E}[\hat{\mu}_t]^2 \right)^2$ \\ \hline 

$\mathcal{T}^{OPA}_{\mathrm{PM}}$ &  $\left( \eta + \frac{1}{m+n}(S^{(2)} - S_{e}^{(2)}) - Var^{OPA}_{\mathrm{PM}}[\hat{\mu}_t] - \mathbb{E}[\hat{\mu}_t]^2 \right)^2$ \\ \hline
$Var_{\mathrm{SR}}^{IPA}[\hat{\mu}_t]$  & $\frac{2}{(m+n)(p-q)^2} - \mathcal{Q}$ \\ \hline
$Var_{\mathrm{PM}}^{IPA}[\hat{\mu}_t]$  & $\frac{2(e^{\epsilon/2} + 3)}{3(n+m)(e^{\epsilon/2} - 1)^2} + \frac{2 \mathcal{Q}}{(e^{\epsilon/2} - 1)} + \mathcal{Q}$ \\ \hline

$Var_{\mathrm{SR}}^{\textsf{OPA}}[\hat{\mu}_t]$   &   $\frac{\left( 2n - 2(p-q)^2 S^{(2)} \right)}{(m+n)^2(p-q)^2} + \frac{S^{(2)}}{(m+n)^2}$ \\ \hline
$Var_{\mathrm{PM}}^{\textsf{OPA}}[\hat{\mu}_t]$   &   $\frac{2n(e^{\epsilon/2} + 3)}{3(m+n)^2(e^{\epsilon/2} - 1)^2} + \frac{(1 + e^{\epsilon/2}) S^{(2)}}{(m+n)^2 (e^{\epsilon/2} - 1)}$ \\ \hline
$\mathbb{E}[\hat{\mu}_t]$ & $\frac{m+n_e}{m+n}\mu_t + \frac{1}{m+n}(S^{(1)} - S_{e}^{(1)})$ \\ \hline
\end{tabular}
}
\label{tab:intermedias}
\vspace{-10pt}
\end{table}

\section{Proof of Theorem \ref{OPA_better}}
\label{appendix_proof_OPA_better}
\begin{proof}
    First we study \textsf{OPA} and \textsf{IPA} in the SR mechanism. Given the error analysis of SR and PM, we have $\mathrm{Err}_{IPA}(\hat{\mu}_t) - \mathrm{Err}_{OPA}(\hat{\mu}_t) = \frac{2m}{(m+n)^2(p-q)^2} - \frac{(n_e+m)(\mu_t^2 + \sigma_t^2) - S_{e}^{(2)}}{(m+n)^2}$.
    Since $(n_e+m)(\mu_t^2 + \sigma_t^2) - S_{e}^{(2)} = \sum_{i=1}^{m} y_i^2 \leq m$, we have $\mathrm{Err}_{IPA}(\hat{\mu}_t) - \mathrm{Err}_{OPA}(\hat{\mu}_t) \geq \frac{2m - m(p-q)^2}{(m+n)^2(p-q)^2} \geq 0$.
    
    For the PM mechanism, since $\sum_{i=1}^{m} y_i^2 \geq 0$, we have
    \begin{align*}
        & \quad \mathrm{Err}_{IPA}(\hat{\mu}_t) - \mathrm{Err}_{OPA}(\hat{\mu}_t) = \frac{2m(e^{\epsilon/2} + 3)}{3(m+n)^2(e^{\epsilon/2})} + \\
        &\frac{2(\sum_{i=1}^{m}y_i^2 + S^{(2)})}{(m+n)^2(e^{\epsilon/2} - 1)} + \frac{\sum_{i=1}^{m}y_i^2 + S^{(2)}}{(m+n)^2} - \frac{(1+e^{\epsilon/2}) S^{(2)}}{(m+n)^2(e^{\epsilon/2} - 1)} \geq 0
    \end{align*}

    Then we compare the error on variance. Since $Var^{\textsf{OPA}}_{\mathrm{SR}} \leq Var^{\textsf{IPA}}_{\mathrm{SR}}$, we have $\mathcal{T}^{OPA}_{\mathrm{SR}} \leq \mathcal{T}^{IPA}_{\mathrm{SR}}$. Besides, we have $\frac{2n-2(p-q)^2 S^{(4)}}{(m+n)^2(p-q)^2} + \frac{S^{(4)}}{(m+n)^2} \leq \frac{2}{(m+n)(p-q)^2} - \frac{S^{(4)}}{(m+n)^2}$, then it turns out the upper bound of $\mathrm{Err}_{OPA}(\hat{\sigma}^2_t) \leq \mathrm{Err}_{IPA}(\hat{\sigma}^2_t)$ on SR. By the similar calculation, we have the same conclusion on PM mechanism.
    
\end{proof}

\section{Proof of Theorem \ref{consistency_OPA}}
\label{appendix_proof_consistency_OPA}
\begin{proof}
    According to the attack error, we calculate the derivative of attack error on mean and the upper bound of the attack error on variance, and have all derivatives negative for all $\epsilon > 0$. In other words, the attack error on mean and the upper bound of attack error on variance decrease as $\epsilon$ grows.
\end{proof}

\end{appendix}
\end{document}